\newtheorem{theorem}{Theorem}
\newtheorem{lemma}{Lemma}
\newtheorem{definition}{Definition}
\newtheorem{assumption}{Assumption}
\begin{document}

\title{Quantum Metropolis-Hastings algorithm with the target distribution calculated by quantum Monte Carlo integration}
\author{Koichi Miyamoto}
\email{miyamoto.kouichi.qiqb@osaka-u.ac.jp}
\affiliation{Center for Quantum Information and Quantum Biology, Osaka University, Toyonaka, Osaka 560-0043, Japan}

\date{\today}

\begin{abstract}

The Markov chain Monte Carlo method (MCMC), especially the Metropolis-Hastings (MH) algorithm, is a widely used technique for sampling from a target probability distribution $P$ on a state space $\Omega$ and applied to various problems such as estimation of parameters in statistical models in the Bayesian approach.
Quantum algorithms for MCMC have been proposed, yielding the quadratic speedup with respect to the spectral gap $\Delta$ compered to classical counterparts.
In this paper, we consider the quantum version of the MH algorithm in the case that calculating $P$ is costly because the log-likelihood $L$ for a state $x\in\Omega$ is obtained via computing the sum of many terms $\frac{1}{M}\sum_{i=0}^{M-1} \ell(i,x)$.
We propose calculating $L$ by quantum Monte Carlo integration and combine it with the existing method called quantum simulated annealing (QSA) to generate the quantum state that encodes $P$ in amplitudes.
We consider not only state generation but also finding a credible interval for a parameter, a common task in Bayesian inference.
In the proposed method for credible interval calculation, the number of queries to the quantum circuit to compute $\ell$ scales on $\Delta$, the required accuracy $\epsilon$ and the standard deviation $\sigma$ of $\ell$ as $\tilde{O}(\sigma/\epsilon^2\Delta^{3/2})$, in contrast to $\tilde{O}(M/\epsilon\Delta^{1/2})$ for QSA with $L$ calculated exactly.
Therefore, the proposed method is advantageous if $\sigma$ scales on $M$ sublinearly.
As one such example, we consider parameter estimation in a gravitational wave experiment, where $\sigma=O(M^{1/2})$.
\end{abstract}

\maketitle

\section{Introduction}

\begin{table*}[tp]
\caption{The complexities (number of queries to the oracle to compute $\ell$) in various tasks in various methods for sufficiently small $\epsilon$ (precisely speaking, $\epsilon$ satisfying Eq. (\ref{eq:smallEpsCond})).
Here, $\epsilon$ represents the total variation distance between the generated state and $\ket{P}$ and the error in the cumulative distribution function for state generation and credible interval calculation, respectively (see Secs. \ref{sec:appQSA} and \ref{sec:CI} for the detail).
$\Delta_{\rm min}$ is a lower bound of spectral gaps of some Markov chains (see Sec. \ref{sec:appQSA} for the detail).
$\rho$ is the SNR in GW matched filtering (see Sec. \ref{sec:GW} for the detail).}
 \label{tbl:CompSum}
 \centering
  \begin{tabular}{c|ccc}
   \hline
   Task & Proposed method & Exact QSA & Classical MH \\
   \hline
   Generate $\ket{P}$ & $\tilde{O}\left(\frac{\sigma \bar{L}^{1/2}}{ \Delta^{3/2}_{\rm min}\epsilon}\right)$ & $\tilde{O}\left(\frac{M\bar{L}^{1/2}}{\Delta_{\rm min}^{1/2}}\right)$ & Not applicable \\
   Credible interval (general) & $\tilde{O}\left(\frac{\sigma \bar{L}^{1/2}}{ \Delta_{\rm min}^{3/2}\epsilon^2}\right)$ & $\tilde{O}\left(\frac{M\bar{L}^{1/2}}{\Delta_{\rm min}^{1/2}\epsilon}\right)$ & $\tilde{O}\left(\frac{M}{\Delta\epsilon^2}\right)$ \\
   Credible interval (GW) & $\tilde{O}\left(\frac{\rho M^{1/2}  \bar{L}^{1/2}}{ \Delta_{\rm min}^{3/2}\epsilon^2}\right)$ & $\tilde{O}\left(\frac{M\bar{L}^{1/2}}{\Delta_{\rm min}^{1/2}\epsilon}\right)$ & $\tilde{O}\left(\frac{M}{\Delta\epsilon^2}\right)$ \\
   \hline
  \end{tabular}
\end{table*}

Following the recent rapid development of quantum computing, various quantum algorithms are studied extensively, along with their industrial and scientific applications.
Among them, quantum algorithms for the Markov Chain Monte Carlo method (MCMC) are one of prominent ones \cite{Somma0712.1008,Somma2008,Wocjan2008,Yung2012,Harrow2020,Lemieux2020efficientquantum,campos2022quantum}.
MCMC is a methodology for sampling from a probability distribution $P$ on a sample space (state space) $\Omega$ by generating a Markov chain whose stationary distribution is $P$ (see \cite{levin2017markov} as a textbook).
It is widely used in various situations, for example, estimation of parameters in statistical models in the Bayesian approach \cite{gelman1995bayesian}. 

In spite of its usefulness, MCMC often has an issue of computational time, since it can be needed to make many iterations of state generation for the chain to sufficiently converge to the target distribution $P$.
Classically, the iteration number for sufficient convergence scales as $\tilde{O}(\Delta^{-1})$, where $\Delta$ is the spectral gap of the chain (see the definition in Sec. \ref{sec:MH}).

Quantum MCMC algorithms can be remedies for this: using the quantum walk \cite{Szegedy2004} as a building block, they generates (an approximation of) a quantum state $\ket{P}$ that encodes $P$ in amplitudes with $\tilde{O}(\Delta^{-1/2})$ queries to the walk operator, which indicates the quadratic quantum speedup.
More concretely, the approach in \cite{Somma0712.1008,Somma2008,Wocjan2008,Yung2012,Harrow2020} called quantum simulated annealing (QSA) takes the following strategy.
We consider an initial distribution $P_0$ and a series of Markov chains with stationary distributions $P_1,...,P_l$ such that these distributions sufficiently {\it overlap}, that is, the quantum states that encode them satisfy $|\braket{P_i | P_{i+1}}|^2\ge {\rm const}$.
Then, starting from the state $\ket{P_0}$ that encodes $P_0$, we sequentially generate $\ket{P_1},...,\ket{P_{l-1}}$, and finally $\ket{P_{l}}$, which is close to $\ket{P}$.
In particular, \cite{Harrow2020} considered applying this to Bayesian inference, where $P$ is written as $P(x)\propto P_0(x) e^{-L(x)}$ with the negative log-likelihood $L$, and presented the procedure to generate $\ket{P}$ making $\tilde{O}(\sqrt{\bar{L}/\Delta})$ calls to the quantum walk operator, where $\bar{L}$ is the expectation of $L$ under the distribution $P_0$.

Among various types of MCMC, the Metropolis-Hastings (MH) algorithm \cite{metropolis1953equation,Hastings1970} is particularly prominent.
In this algorithm, it is supposed that the target distribution $P$ is efficiently computable expect for the normalization constant and we are given some proposal distribution $T$ for transition among possible states, which is also efficiently computed.
Then, accepting the proposed transition with some probability determined by $P$ and $T$, we generate a chain, which is guaranteed to converge to $P$.
Because of its simplicity, the MH algorithm is widely used.
Some of the previous quantum algorithms for MCMC are in fact based on the MH algorithm \cite{Yung2012,Lemieux2020efficientquantum,campos2022quantum}.

In this paper, we consider a quantum Metropolis-Hastings algorithm in a specific but ubiquitous and important situation.
That is, we focus on the case that the target distribution $P$ is computed via summation of many terms.
More specifically, we suppose that $P(x)\propto P_0(x) e^{-L(x)}$ and $L$ is written as $L(x)=\frac{1}{M}\sum_{i=0}^{M-1} \ell(i,x)$ with a large integer $M$ and a function $\ell$, except for efficiently computable terms (see Sec. \ref{sec:walkOpQMCI} for the exact problem setting).
In the context of Bayesian inference, this corresponds to the situation that the negative log-likelihood is a sum of many terms.
In this case, calculation of $P$ can be time-consuming, even if $\ell$ is efficiently computed.
Naively, we need to iterate calculations of $\ell(i,x)$ and additions $M$ times.

As an example of such a case, we can take parameter estimation in a gravitational wave (GW) detection experiment (see a review \cite{thrane_talbot_2019} and references therein).
In a GW laser interferometer such as LIGO and Virgo \cite{GW150914,GWTC-1,GWTC-2,GWTC-3}, a GW signal is explored in noisy detector output data by matched filtering \cite{Balasubramanian1996,Owen1996,Owen1999,Allen2012}, and, if detected, estimation of parameters in the waveform of GW is performed in the Bayesian approach.
The log-likelihood in this parameter estimation is given as a sum of contributions from various Fourier modes of the data and thus has the aforementioned form.
More generally, many statistical inference problems with a large number of independent samples fall into the considered case.

As far as the author knows, previous studies on quantum MCMC algorithms have not focused on the difficulty to compute a target distribution of the above type.
\cite{Somma0712.1008,Somma2008,Wocjan2008,Yung2012,Harrow2020} assumed the availability of the quantum circuit to generate the quantum state that encodes the transition matrix of the Markov chain.
Although \cite{Lemieux2020efficientquantum,campos2022quantum} broke down the operators needed in the quantum MH algorithm in more details, they assumed the availability of the quantum circuit to determine acceptance or rejection, and did not consider the detail of target distribution calculation.
When it comes to quantum algorithms for data analysis in GW experiments, although there are studies on GW detection \cite{Gao2022,Miyamoto2022,veske2022quantum,postema2022hybrid} and the quantum MH algorithm for GW parameter estimation \cite{Escrig_2023}, the issue of log-likelihood computation has not been focused on.

Then, in this paper, we consider how to speedup calculation of $P$ using another quantum algorithm as a subroutine of the quantum MH algorithm.
Concretely, we use quantum Monte Carlo integration (QMCI) \cite{Montanaro2015}.
Based on quantum amplitude estimation (QAE) \cite{brassard2002quantum}, QMCI estimates $E[F(X)]$, the expectation of a function $F$ of a random variable $X$, providing quadratic speedup compared to classical Monte Carlo integration.
For example, if we have a bound $\sigma^2$ on the variance of $F(X)$, QMCI yields an estimate with accuracy $\epsilon$, calling the quantum circuits to generate a quantum state encoding $X$'s distribution and compute $F$ $\tilde{O}(\sigma/\epsilon)$ times.
As a special case, we can use QMCI to estimate the sum of many terms.
In fact, QMCI is utilized for calculating the signal-to-noise ratio (SNR) in the quantum algorithm for GW matched filtering proposed in \cite{Miyamoto2022}, and using QMCI in the quantum MH algorithm is a similar idea.

We note that QMCI gives an erroneous estimate and thus the Markov chain based on it has a stationary distribution different from the original one $P$.
Fortunately, MCMC with such a perturbation has been studied \cite{alquier2016noisy,medina2016stability,Rudolf2018}, and, according to their results, we can set the accuracy in QMCI to obtain the distribution close to $P$.

We make a further consideration from a practical perspective.
Some of previous studies considered only preparing the quantum state $\ket{P}$, but what we want in real life is not the quantum state but the results of some statistical analysis on $P$ as classical data.
Then, this paper presents not only how to prepare $\ket{P}$ but also the procedure to obtain the {\it credible interval} of a parameter $\theta$ in a statistical model.
It is an interval where $\theta$ falls with a prefixed probability in the distribution $P$ and a quantity we often aim to find in Bayesian inference.
Given a quantum circuit $V_P$ to prepare $\ket{P}$, we can estimate the cumulative distribution function (CDF) of $\theta$ by QMCI using $V_P$ iteratively.
We then find the credible interval via binary search on the CDF.
We also consider applying this method to credible interval calculation for GW parameters.

Table \ref{tbl:CompSum} is a summary of the complexities in state generation and credible interval calculation, which mean the numbers of calls to the oracle to compute $\ell$, in various methods for sufficiently small error tolerance $\epsilon$.
Here, the exact QSA method is QSA with $L$ calculated exactly by $M$ iterative calculations of $\ell$.
We see that the complexity of the proposed method is equal to that of the exact QSA method with the factor $M$ replaced with $\sigma/\Delta_{\rm min}\epsilon$, where $\sigma^2$ is the variance of $\ell$, $\epsilon$ is the required accuracy, and $\Delta_{\rm min}$ is the lower bound of spectral gaps of Markov chains related to the considered problem.
This is because, following the result of \cite{alquier2016noisy}, we estimate $L$ by QMCI with accuracy $\Delta_{\rm min}\epsilon$ in order to reach a distribution close to $P$ with accuracy $\epsilon$, and thus its complexity becomes $\tilde{O}(\sigma/\Delta_{\rm min}\epsilon)$.
As a result, with respect to $\epsilon$ and $\Delta_{\rm min}$, the complexity of the proposed method is worse than the exact QSA method and even the classical MH algorithm.
Nevertheless, it may be advantageous with respect to $M$, if $\sigma$ scales on $M$ sublinearly.
In fact, in the case of GW parameter estimation, $\sigma$ can be $O(\sqrt{M})$, which means that the proposed method is quadratically faster than the exact QSA method and the classical MH algorithm with respect to $M$.

The rest of this paper is organized as follows.
Sec. \ref{sec:prel} is preliminary one, where we briefly explain the MH algorithm, QSA and QMCI.
In Sec. \ref{sec:ourAlg}, we present our methods for generating the state $\ket{P}$ and finding the credible interval in details.
In Sec. \ref{sec:GW}, we consider the application to GW parameter estimation.
Sec. \ref{sec:summary} summarizes this paper.

\section{Preliminary \label{sec:prel}}

\subsection{Notation \label{sec:notation}}

Here, we summarize some notations used in this paper.

$\mathbb{R}_+$ denotes the set of all positive real numbers. 

For $n\in\mathbb{N}$, we define $[n]:=\{1,...,n\}$ and $[n]_0:=\{0,1,...,n-1\}$.

We hereafter consider systems consisting of quantum registers (or simply registers), sets of single or multiple qubits.
A ket $\ket{\psi}$ denotes a state vector of a quantum state on a register, and we sometime put a subscript to clarify the register on which the state is generated: $\ket{\psi}_R$ is a state on a register $R$.
Similarly, we sometime put a subscript to a symbol representing an operator to indicate the register on which the operator acts.
In particular, $I_R$ denotes the identity operator on a register $R$.

For $x\in\mathbb{R}$, $\ket{x}$ denotes the computational basis state on a register whose bit string corresponds to a finite-precision binary representation of $x$.
We assume that any number considered in this paper is represented with a sufficiently large number of qubits and thus neglect rounding errors.
For a real vector $x=(x_1,...,x_d)\in\mathbb{R}^d$, $\ket{x}$ denotes a computational basis state on $d$-register system $\ket{x}=\ket{x_1}\cdots\ket{x_d}$.

For a vector $x=(x_1,...,x_d)\in\mathbb{C}^d$, we define its $k$-norm as $\|x\|_k:=\left(\sum_{i=1}^d |x_i|^k\right)^{1/k}$ with $k\in\mathbb{N}$ and max norm as $\|x\|_\infty:=\max\{|x_1|,...,|x_d|\}$.
For a matrix $A\in\mathbb{C}^{m \times n}$, we define $\|A\|_k:=\sup_{\substack{x\in\mathbb{C}^n \\ \|x\|_k=1}} \|Ax\|_k$ with $k\in\mathbb{N}\cup\{\infty\}$, and denote by $\|A\|_F$ its Frobenius norm.
We simply write $\|\cdot\|_2$ as $\|\cdot\|$.
$\|\ket{\psi}\|$ is a 2-norm of the (unnormalized) state vector $\ket{\psi}$.

If $x,y\in\mathbb{C}^d$ satisfy $\|x-y\|\le\epsilon$ with some $\epsilon\in\mathbb{R}_+$, we say that $x$ is $\epsilon$-close to $y$ and $x$ is an $\epsilon$-approximation of $y$.
If quantum states $\ket{\psi}$ and $\ket{\phi}$ on a same register satisfy $\|\ket{\psi}-\ket{\phi}\|\le\epsilon$ with some $\epsilon\in\mathbb{R}_+$, we say that $\ket{\psi}$ is $\epsilon$-close to $\ket{\phi}$ and $\ket{\psi}$ is an $\epsilon$-approximation of $\ket{\phi}$.

For a nonsingular matrix $A$, we define its condition number as $\|A\|\cdot\|A^{-1}\|$.

Letting $(\Omega,2^\Omega,P)$ be a probability space with a finite sample space $\Omega$, we write $P(x)=P(\{x\})$ for $x\in\Omega$ and also call the measure $P$ the probability distribution or distribution.
We denote by $\mathbb{E}_P[\cdot]$ the expectation with respect to $P$. 

The indicator function $\mathbf{1}_C$ takes 1 if the condition $C$ is satisfied and 0 otherwise.

\subsection{Metropolis-Hastings algorithm \label{sec:MH}}

\subsubsection{Outline}

We briefly summarize the MH algorithm \cite{metropolis1953equation,Hastings1970}, whose aim is sampling a random variable $X$ that obeys some target probability distribution.
For more details, see \cite{levin2017markov}.

Every value $X$ can take is called a {\it state}, and the set of the states is called the {\it state space} and hereafter denoted by $\Omega$.
In this paper, we consider the situation that $\Omega$ is a finite subset in $\mathbb{R}^d$, where $d\in\mathbb{N}$.
This is because a quantum computer can only represent real numbers in finite precision using a finite number of qubits, which is the case also on a classical computer.
Of course, $X$ can take continuous values in many situations, but we assume that continuous $X$ is well approximated in a discrete manner and errors from this are negligible, as stated in Sec. \ref{sec:notation}.

For every $x\in\Omega$, we denote by $P(x)\in(0,1)$ the probability that $X$ takes $x$ in the target distribution.
We assume that $P(x)$ can be written as $P(x)=p(x)/Z$, where $p(x)$ is an easily computable function and $Z:=\sum_{x\in\Omega}p(x)$ is the normalization factor.
Although $Z$ is often hard to be computed, the MH algorithm works even if we do not know $Z$, as explained later.

In the MH algorithm, starting from some initial state $x_0$, we sequentially get states by making transitions over $\Omega$ as follows.
For every $x\in\Omega$, we set some easy-to-sample proposal distribution $T(x,\cdot):\Omega\rightarrow (0,1)$, for example, the normal distribution centered at $x$ (strictly, its discrete approximation).
Letting $x_i\in\Omega$ be the $i$th state, we randomly choose $\tilde{x}_{i+1} \in \Omega$ with probability $T(x_i,\tilde{x}_{i+1})$ as a candidate for the next state.
Then, calculating the acceptance ratio $A(x_i,\tilde{x}_{i+1})$, which is defined for $x,y\in\Omega$ as
\begin{equation}
    A(x,y):=
    \min\left\{1,\frac{P(y)T(y,x)}{P(x)T(x,y)}\right\}, \label{eq:accRatio}
\end{equation}
we set the next state $x_{i+1}$ to $\tilde{x}_{i+1}$ with probability $A(x_i,\tilde{x}_{i+1})$ or stay at $x_i$ otherwise.
Note that the target distribution $P$ appears in $A$ in the form of the ratio $P(y)/P(x)$, which means that we need to compute only $p(x)$, not $Z$.

As a consequence, the sequence generated by the MH algorithm is a Markov chain with a following transition matrix $W=(W_{x,y})$: it is indexed by $x,y\in\Omega$ and its $(x,y)$ entry, which corresponds to the probability that the transition to $y$ occurs provided that the current state is $x$, is
\begin{equation}
    W_{x,y}=
    \begin{cases}
    T(x,y)A(x,y) & {\rm if} \ x \ne y \\
    1-\sum_{z\in\Omega\setminus\{x\}} T(x,z)A(x,z) & {\rm if} \ x=y
    \end{cases}.
    \label{eq:tranMat}
\end{equation}

The convergence property of this Markov chain is affected by the {\it spectral gap} $\Delta$.
It is defined as
\begin{equation}
\Delta:=1-|\lambda_1|,
\end{equation}
where $\lambda_1$ the eigenvalue of $W$ with the second largest modulus.
It is known that the eigenvalue of $W$ with the largest modulus is 1 and it is non-degenerate \cite[LEMMAs 12.1 and 12.2]{levin2017markov}, which means $|\lambda_1|<1$ and $0<\Delta<1$.
To present the formal statement on the convergence rate, we introduce the total variance distance, a metric to measure the difference between two probability distributions.
\begin{definition}
For probability measures $P$ and $Q$ on a measurable space $(\Omega,\mathcal{F})$, the total variance distance is defined as
\begin{equation}
    \|P-Q\|_{\rm TV}:=\sup_{A\in \mathcal{F}}|P(A)-Q(A)|. \label{eq:TV}
\end{equation}
\end{definition}
\noindent Since we are now considering finite $\Omega$, Eq. (\ref{eq:TV}) becomes $\|P-Q\|_{\rm TV}=\max_{A\in 2^\Omega}|P(A)-Q(A)|$.

Then, we have the following theorem.

\begin{theorem}[THEOREM 12.4 in \cite{levin2017markov}]
    Let $W$ be the transition matrix of a reversible irreducible Markov chain with a finite state space $\Omega$ and a stationary distribution $\Pi$.
    Let $\Delta$ be the spectral gap of the chain.
    Let $\mathcal{P}$ be a set of all probability distributions on $\Omega$ and, for $n\in\mathbb{N}$ and $\mu\in\mathcal{P}$, denote by $\mu W^n$ the probability distribution after $n$ steps of the Markov chain with the initial distribution $\mu$.
    Then, for any $n\in\mathbb{N}$,
    \begin{equation}
        d(n)\le \frac{(1-\Delta)^n}{2\sqrt{\Pi_{\rm min}}},
    \end{equation}
    where $d(n):=\sup_{\mu\in\mathcal{P}}\|\mu W^n-\Pi\|_{\rm TV}$ and $\Pi_{\rm min}:=\min_{x\in\Omega}\Pi(x)$, and thus,    for any $\epsilon\in\mathbb{R}_+$,
    \begin{equation}
        t_{\rm mix}(\epsilon):=\min\{n\in\mathbb{N} \ | \ d(n)\le\epsilon \}\le \frac{1}{\Delta}\log\left(\frac{1}{\epsilon \Pi_{\rm min}}\right).
    \end{equation}
    \label{th:mixtime}
\end{theorem}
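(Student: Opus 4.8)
The plan is to use reversibility to diagonalize $W$ and then control $\|W^n(x,\cdot)-\Pi\|_{\rm TV}$ through a $\Pi$-weighted Cauchy-Schwarz inequality, finally reducing an arbitrary initial $\mu$ to a point mass. I would first dispose of the reduction. Writing $\delta_x$ for the point mass at $x$ and using $\sum_{x}\mu(x)=1$, one has $(\mu W^n)(y)-\Pi(y)=\sum_{x}\mu(x)\bigl[(W^n)_{x,y}-\Pi(y)\bigr]$, so the triangle inequality for $\|\cdot\|_{\rm TV}$ gives $\|\mu W^n-\Pi\|_{\rm TV}\le\sum_x\mu(x)\|\delta_x W^n-\Pi\|_{\rm TV}\le\max_{x\in\Omega}\|\delta_x W^n-\Pi\|_{\rm TV}$. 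Hence the supremum defining $d(n)$ is attained at a Dirac measure, and it suffices to bound $\|\delta_x W^n-\Pi\|_{\rm TV}$ for fixed $x$.

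Next I would symmetrize. Detailed balance $\Pi(x)W_{x,y}=\Pi(y)W_{y,x}$ makes $S:=D^{1/2}WD^{-1/2}$ a real symmetric matrix, where $D:={\rm diag}(\Pi(x))_{x\in\Omega}$. Thus $S$ has a real orthonormal eigenbasis $\{\varphi_j\}$ with eigenvalues $\{\lambda_j\}$ equal to those of $W$; the top one is $\lambda_0=1$ with $(\varphi_0)_x=\sqrt{\Pi(x)}$, while the rest obey $|\lambda_j|\le|\lambda_1|=1-\Delta$. From $W^n=D^{-1/2}S^nD^{1/2}$ and $S^n=\sum_j\lambda_j^n\varphi_j\varphi_j^\top$, the $j=0$ term reproduces exactly $\Pi(y)$, leaving
\begin{equation}
\frac{(W^n)_{x,y}}{\Pi(y)}-1=\frac{1}{\sqrt{\Pi(x)\Pi(y)}}\sum_{j\ge1}\lambda_j^n(\varphi_j)_x(\varphi_j)_y.
\end{equation}

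The core estimate then passes from the $\ell^1$-type total variation distance to an $\ell^2$ quantity that the orthonormality diagonalizes. Writing $\|\delta_x W^n-\Pi\|_{\rm TV}=\tfrac12\sum_y\Pi(y)\,\bigl|(W^n)_{x,y}/\Pi(y)-1\bigr|$ and applying Cauchy-Schwarz with weights $\Pi(y)$, I would bound its square by $\sum_y\Pi(y)\bigl((W^n)_{x,y}/\Pi(y)-1\bigr)^2$; substituting the identity above and using $\sum_y(\varphi_j)_y(\varphi_k)_y=\delta_{jk}$ collapses this to $\Pi(x)^{-1}\sum_{j\ge1}\lambda_j^{2n}(\varphi_j)_x^2$. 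Bounding $\lambda_j^{2n}\le(1-\Delta)^{2n}$ and invoking the completeness relation $\sum_j(\varphi_j)_x^2=1$ gives $\le(1-\Delta)^{2n}/\Pi(x)$, whence $\|\delta_x W^n-\Pi\|_{\rm TV}\le(1-\Delta)^n/(2\sqrt{\Pi(x)})\le(1-\Delta)^n/(2\sqrt{\Pi_{\rm min}})$, proving the first claim. For $t_{\rm mix}$, it suffices to enforce $(1-\Delta)^n\le\epsilon\Pi_{\rm min}$ (which implies $d(n)\le\epsilon$, since $\epsilon\Pi_{\rm min}\le2\epsilon\sqrt{\Pi_{\rm min}}$), and the elementary bound $1-\Delta\le e^{-\Delta}$ shows this holds once $n\ge\frac{1}{\Delta}\log\frac{1}{\epsilon\Pi_{\rm min}}$.

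I expect the main obstacle to be the spectral bookkeeping that sets up the machinery: checking that detailed balance genuinely symmetrizes $W$ into $S$ with the correct stationary eigenvector $\varphi_0$, and recognizing that one must measure the discrepancy in the $\Pi$-weighted $\ell^2$ norm so that Cauchy-Schwarz converts the total variation distance into a sum the orthonormal basis diagonalizes. Once the weighting is chosen correctly, the completeness relation and the exponential estimate are routine.
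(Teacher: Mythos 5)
Your proof is correct and is essentially the standard argument for THEOREM 12.4 in the cited textbook (the paper itself imports this result without reproving it): symmetrize $W$ via detailed balance, expand in the orthonormal eigenbasis, pass from total variation to the $\Pi$-weighted $\ell^2$ norm by Cauchy--Schwarz, and finish with $\sum_j(\varphi_j)_x^2=1$ and $1-\Delta\le e^{-\Delta}$. All steps check out, including the reduction to point masses and the slightly stronger requirement $(1-\Delta)^n\le\epsilon\Pi_{\rm min}$ used to absorb the $2\sqrt{\Pi_{\rm min}}$ denominator.
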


Since the early part of the Markov chain has not converged yet, we usually discard it, which is called {\it burn-in}.
Then, the procedure of the MH algorithm is summarized in Algorithm \ref{alg:MH}.\\

\begin{algorithm}[H]
    \begin{algorithmic}[1]
    \REQUIRE {
        \ \\
        \begin{itemize}
        \item Initial probability distribution $P_0$ on $\Omega$
        \item Burn-in length $n_{\rm b}\in\mathbb{N}$
        \item Number $n$ of sample values of $X$ we need 
        \item Function to sample from the proposal distribution $T$
        \item Function to compute $T$ 
        \item Function to compute the target probability $P$ except for the normalization constant
        \end{itemize}
    }

    \STATE Sample the initial point $x_0\in\Omega$ from $P_0$.

    \FOR{$i=0,...,n_{\rm b}+n-1$}
    
    \STATE Sample $\tilde{x}_{i+1}\in\Omega$ from the distribution $T(x_i,\cdot)$.

    \STATE Compute $A(x_i,\tilde{x}_{i+1})$ in Eq. (\ref{eq:accRatio}).

    \STATE Set $x_{i+1}=\tilde{x}_{i+1}$ with probability $A(x_i,\tilde{x}_{i+1})$ or $x_{i+1}=x_i$ otherwise.

    \ENDFOR

    \STATE Output $x_{n_b+1},...,x_{n_b+n}$.
    
    \caption{Metropolis-Hastings algorithm}\label{alg:MH}
    \end{algorithmic}
\end{algorithm}

The obtained sequence can be used for, e.g., calculating expectations of random variables. 
On the error in this, we have the following theorem.

\begin{theorem}[Theorem 11 in \cite{Rudolf+2010+323+342}]

Consider the Markov chain on a finite sample space $\Omega$ generated with a transition matrix $W$ and initial distribution $P_0$.
Denote its stationary distribution and spectral gap by $\Pi$ and $\Delta$, respectively.
Let the second largest eigenvalue of $W$ be $\lambda_1^\prime$ and let $\Delta^\prime:=1-\lambda_1^\prime$.
Then, for any function $f:\Omega\rightarrow \mathbb{R}$ and $n_{\rm b},n\in\mathbb{N}$,
\begin{align}
    e_{n,n_{\rm b},f}^2&:=\mathbb{E}_{\rm MC} \left[\left(S_{n,n_b,f}-\mathbb{E}_{\Pi}[f(x)]\right)^2\right] \nonumber \\
    &\le 
    \frac{2\|f\|_\infty^2}{n\Delta^\prime}+\frac{4\|\frac{P_0}{\Pi}-1\|_\infty^{1/2}\|f\|_\infty^2(1-\Delta)^{n_{\rm b}}}{n^2\Delta^2}
\end{align}
holds.
Here,
\begin{equation}
    S_{n,n_b,f} := \frac{1}{n}\sum_{i=1}^n f(x_{n_{\rm b}+i}),
\end{equation}
$x_i,i\in[n_{\rm b}+n]$ is the $i$th entry in the chain, $\mathbb{E}_{\rm MC}[\cdot]$ denotes the expectation with respect to the randomness of the generated chain, $\|f\|_\infty:=\max_{x\in\Omega}|f(x)|$, and $\|\frac{P_0}{\Pi}-1\|_\infty:=\max_{x\in\Omega}\left|\frac{P_0(x)}{\Pi(x)}-1\right|$.
\end{theorem}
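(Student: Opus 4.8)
The plan is to expand the mean squared error into a double sum of time-lagged correlations and control it with the spectral decomposition of $W$ in $L^2(\Pi)$. First I would center the summand, setting $g:=f-\mathbb{E}_\Pi[f]$ so that $\mathbb{E}_\Pi[g]=0$ and $S_{n,n_b,f}-\mathbb{E}_\Pi[f(x)]=\frac1n\sum_{i=1}^n g(x_{n_b+i})$; squaring and taking $\mathbb{E}_{\rm MC}$ gives $e_{n,n_b,f}^2=\frac{1}{n^2}\sum_{i,j=1}^n\mathbb{E}_{\rm MC}[g(x_{n_b+i})g(x_{n_b+j})]$. Assuming the chain is reversible (so $W$ is self-adjoint with respect to $\langle u,v\rangle_\Pi:=\sum_x u(x)v(x)\Pi(x)$), I would write the density of the time-$k$ marginal as $\rho_k:=P_0W^k/\Pi=1+W^k\psi$, where $\psi:=P_0/\Pi-1$ is mean-zero under $\Pi$. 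By symmetry it suffices to treat $i\le j$ with lag $m:=j-i$; conditioning on the earlier state and using $\mathbb{E}[g(x_{n_b+j})\mid x_{n_b+i}=x]=(W^{m}g)(x)$, each term splits as $\langle g,W^{m}g\rangle_\Pi+\langle W^{n_b+i}\psi,\,g\,W^{m}g\rangle_\Pi$, i.e. a stationary part plus a burn-in correction.

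For the stationary part I would insert the spectral expansion $g=\sum_{k\ge1}a_k\phi_k$ in the orthonormal eigenbasis $\{\phi_k\}$ of $W$ (the constant mode $k=0$ drops out since $\mathbb{E}_\Pi[g]=0$), giving $\langle g,W^m g\rangle_\Pi=\sum_{k\ge1}a_k^2\lambda_k^{m}$ and hence $\frac{1}{n^2}\sum_{i,j}\langle g,W^{|i-j|}g\rangle_\Pi=\frac{1}{n^2}\sum_{k\ge1}a_k^2\sum_{i,j=1}^n\lambda_k^{|i-j|}$. The elementary estimate $\sum_{i,j=1}^n\lambda^{|i-j|}\le n\frac{1+\lambda}{1-\lambda}$ for $\lambda\in[0,1)$ and $\le n$ for $\lambda\in(-1,0)$, maximized over the positive eigenvalues at $\lambda_1'=1-\Delta'$, yields $\sum_{i,j}\lambda_k^{|i-j|}\le 2n/\Delta'$ uniformly in $k$. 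Since $\sum_k a_k^2=\|g\|_\Pi^2=\mathrm{Var}_\Pi(f)\le\|f\|_\infty^2$, the stationary contribution is at most $\frac{2\|f\|_\infty^2}{n\Delta'}$, the first term. The sharper $\Delta'$ (second-largest eigenvalue, not modulus) appears precisely because negative eigenvalues make the geometric sum smaller rather than larger.

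For the correction I would apply Cauchy--Schwarz in $L^2(\Pi)$: $|\langle W^{n_b+i}\psi,gW^m g\rangle_\Pi|\le\|W^{n_b+i}\psi\|_\Pi\,\|gW^m g\|_\Pi$ for $i\le j$, $m=j-i$. Because $\psi$ is mean-zero and $W$ contracts the orthogonal complement of the constants by its spectral radius $|\lambda_1|=1-\Delta$, I get $\|W^{n_b+i}\psi\|_\Pi\le(1-\Delta)^{n_b+i}\|\psi\|_\Pi$, and similarly $\|gW^m g\|_\Pi\le\|g\|_\infty\|W^m g\|_\Pi\le\|g\|_\infty(1-\Delta)^m\|g\|_\Pi$. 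Multiplying gives a per-pair bound proportional to $(1-\Delta)^{n_b+j}$, and summing $\sum_{i\le j}(1-\Delta)^{n_b+j}\le(1-\Delta)^{n_b}\sum_{j}j(1-\Delta)^{j}\le(1-\Delta)^{n_b}/\Delta^2$ produces the factor $(1-\Delta)^{n_b}/\Delta^2$. Finally I would bound $\|\psi\|_\Pi^2=\sum_x(P_0(x)/\Pi(x)-1)^2\Pi(x)\le\|\psi\|_\infty\sum_x|P_0(x)-\Pi(x)|\le2\|\psi\|_\infty$, so $\|\psi\|_\Pi\le\sqrt2\,\|\psi\|_\infty^{1/2}$, which is exactly where the $\|P_0/\Pi-1\|_\infty^{1/2}$ and its half-power originate; combined with $\|g\|_\infty\|g\|_\Pi\lesssim\|f\|_\infty^2$ this reproduces the second term.

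The main obstacle I anticipate is bookkeeping rather than any single hard inequality: keeping the two decay rates straight---$\Delta'$ for the $O(1/n)$ stationary variance versus $\Delta$ for the burn-in correction---and verifying that the two independent geometric sums (one over $n_b+i$, one over the lag $m$) combine to give exactly $1/\Delta^2$ with no stray factor of $n$ left in the numerator. The other delicate point is the reversibility hypothesis, which I use both for the self-adjoint spectral decomposition and for the density identity $\rho_k=1+W^k\psi$; for a genuinely non-reversible chain one would instead pass to $W^\dagger W$ or to pseudo-spectral-gap machinery at the cost of worse constants. Tracking the absolute constants (the $2$ and the $4$) through the Cauchy--Schwarz and variance steps is routine but is where small discrepancies could creep in.
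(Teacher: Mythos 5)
The paper does not prove this statement at all: it is imported verbatim as Theorem 11 of Rudolf's work \cite{Rudolf+2010+323+342}, so there is no in-paper proof to compare against. Judged on its own, your spectral argument is essentially the standard (and essentially Rudolf's) route, and the skeleton is sound: the decomposition $\mathbb{E}[g(x_{n_b+i})g(x_{n_b+j})]=\langle g,W^mg\rangle_\Pi+\langle W^{n_b+i}\psi,\,gW^mg\rangle_\Pi$ via $\rho_k=1+W^k\psi$ is correct for reversible chains, the Toeplitz estimate $\sum_{i,j}\lambda^{|i-j|}\le n\max\{1,(1+\lambda)/(1-\lambda)\}\le 2n/\Delta'$ gives the first term with the right constant, and the Cauchy--Schwarz treatment of the burn-in correction produces the $(1-\Delta)^{n_b}/(n^2\Delta^2)$ structure and the half-power $\|P_0/\Pi-1\|_\infty^{1/2}$ for exactly the reason you identify. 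Two points deserve care. First, reversibility is genuinely a hypothesis here, not just a convenience: the statement's ``second largest eigenvalue'' and your density identity $\rho_k=W^k\rho_0$ both presuppose it (it holds for the MH chains the paper uses, but you should state it). Second, your crude bounds give the second term with constant $2\sqrt2\cdot\|g\|_\infty\|g\|_\Pi\le 2\sqrt2\cdot 2\|f\|_\infty^2$, i.e.\ $4\sqrt2>4$; to land on the stated $4$ you need the sharper joint bound $\|g\|_\infty\|g\|_\Pi\le\sup_{t\in[0,1]}\sqrt{(1+t)^3(1-t)}\,\|f\|_\infty^2=\tfrac{3\sqrt3}{4}\|f\|_\infty^2$ (optimizing over $|\mathbb{E}_\Pi f|$), which yields $2\sqrt2\cdot\tfrac{3\sqrt3}{4}\approx 3.67<4$. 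With that refinement the sketch closes.
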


$S_{n,n_b,f}$ is an estimate of $\mathbb{E}_{\Pi}[f(x)]$ the expectation of $f$ based on $n$ samples from the Markov chain with $n_{\rm b}$ burn-in samples discarded.
Then, the theorem implies that, to suppress the root mean square error $e_{n,n_{\rm b},f}$ of $S_{n,n_b,f}$ to $\epsilon\in(0,\|f\|_\infty)$, it is sufficient to take the burn-in length
\begin{equation}
n_{\rm b}=O\left(\frac{\log\left(\|\frac{P_0}{\Pi}-1\|_\infty^{1/2}\right)}{\Delta}\right)
\end{equation}
and the sample number
\begin{equation}
    n=O\left(\max\left\{\frac{\|f\|_\infty^2}{\Delta^\prime \epsilon^2},\frac{\|f\|_\infty}{\Delta \epsilon}\right\}\right). \label{eq:MCSampEps}
\end{equation}
Since $\Delta_1\ge\Delta$ holds, $n$ is also upper bounded as
\begin{equation}
    n=O\left(\frac{\|f\|_\infty^2}{\Delta \epsilon^2}\right). \label{eq:MCSampEps2}
\end{equation}
If $\Delta^\prime=\Delta$, which holds when $\lambda_1^\prime$ is the second largest eigenvalue also in modulus, that is, $\lambda_1=\lambda_1^\prime$, the bound (\ref{eq:MCSampEps}) becomes Eq. (\ref{eq:MCSampEps2}).
Including the burn-in, the total step number is also of order (\ref{eq:MCSampEps2}).

\subsubsection{Perturbed acceptance ratio \label{sec:pertAR}}

We consider the case where we can compute not the exact acceptance ratio but some approximation of it.
We expect that, in such a situation, although the chain converges to some distribution different from the original target distribution, the difference is small as far as the error in the acceptance ratio is small.
The bound on such a difference has been studied in previous studies \cite{alquier2016noisy,medina2016stability,Rudolf2018}, and, in this paper, we use the result of \cite{alquier2016noisy}.
As a preparation to present it, we introduce and a concept called uniform ergodicity \cite{Roberts2004}.

\begin{definition}
    If, for a Markov chain on a finite state space, there exist $\rho\in(0,1)$ and $C\in\mathbb{R}$ such that
    \begin{equation}
        \|\Pi_n - \Pi\|\le C\rho^n
    \end{equation}
    holds for any initial distribution and $n\in\mathbb{N}$, where $\Pi_n$ is the distribution after $n$ steps and $\Pi$ is the stationary distribution, we say that the Markov chain is $(C,\rho)$-uniformly ergodic.
\end{definition}

Apparently, for a reversible irreducible Markov chain, an upper bound of $\rho$ is $1-\Delta$.


Then, the theorem we will later use is the following.
This is the restriction of Corollary 2.3 in \cite{alquier2016noisy} to the case that we can compute an approximate acceptance ratio that is deterministic and has a bounded error.

\begin{theorem}

Consider Algorithm \ref{alg:MH}. 
Suppose that the generated Markov chain $\mathcal{C}$ is $(C,\rho)$-uniformly ergodic.
Also consider another chain $\mathcal{C}^\prime$ same as $\mathcal{C}$ expect that the acceptance ratio $A$ is replaced with $A^\prime:\Omega\times\Omega \rightarrow [0,1]$ such that
\begin{equation}
    \max_{x,y\in\Omega} |A^\prime(x,y)-A(x,y)| \le \epsilon
\end{equation}
with some $\epsilon\in \mathbb{R}$. 
Then, the stationary distributions $\Pi$ and $\Pi^\prime$ of $\mathcal{C}$ and $\mathcal{C}^\prime$ satisfy
\begin{equation}
    \|\Pi-\Pi^\prime\|_{\rm TV} \le \epsilon\left(\lambda+\frac{C\rho^\lambda}{1-\rho}\right) \label{eq:AppARDistDiff}
\end{equation}
with $\lambda:=\left\lceil\frac{\log(1/C)}{\log\rho}\right\rceil$.

\end{theorem}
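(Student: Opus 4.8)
The plan is to pass from the bound on the acceptance ratios to a one-step bound on the transition kernels, and then to propagate this single-step discrepancy to the stationary distributions by a telescoping argument controlled by uniform ergodicity. Write $W$ and $W'$ for the transition matrices of $\mathcal{C}$ and $\mathcal{C}'$, built from $A$ and $A'$ through Eq. (\ref{eq:tranMat}). The first step is to show that the two kernels are close row by row in total variation. For $x\neq y$ the off-diagonal entries differ by $|W_{x,y}-W'_{x,y}|=T(x,y)\,|A(x,y)-A'(x,y)|\le T(x,y)\epsilon$, while the diagonal difference $|W_{x,x}-W'_{x,x}|$ is bounded by $\sum_{z\neq x}T(x,z)\epsilon$ because the stay-probabilities are complementary to the off-diagonal mass. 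Summing over $y$ gives $\sum_y|W_{x,y}-W'_{x,y}|\le 2\epsilon\sum_{y\neq x}T(x,y)\le 2\epsilon$, i.e.\ $\|W(x,\cdot)-W'(x,\cdot)\|_{\rm TV}\le\epsilon$ for every $x$. Consequently the signed row vector $u:=\Pi'(W'-W)=\Pi'-\Pi'W$, which has zero total mass since $W$ and $W'$ are stochastic, satisfies $\|u\|_{\rm TV}\le\max_x\|W'(x,\cdot)-W(x,\cdot)\|_{\rm TV}\le\epsilon$.

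Next I would represent $\Pi'-\Pi$ as an infinite sum of propagated copies of $u$. Using $\Pi'=\Pi'(W')^n$ and the telescoping identity $(W')^n-W^n=\sum_{k=0}^{n-1}(W')^k(W'-W)W^{n-1-k}$ together with $\Pi'(W')^k=\Pi'$, one obtains $\Pi'-\Pi'W^n=\Pi'((W')^n-W^n)=\sum_{m=0}^{n-1}uW^m$. By uniform ergodicity of $\mathcal{C}$, $\Pi'W^n\to\Pi$ as $n\to\infty$, so letting $n\to\infty$ yields the representation $\Pi'-\Pi=\sum_{m=0}^{\infty}uW^m$, whose convergence is guaranteed by the same decay. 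Hence $\|\Pi-\Pi'\|_{\rm TV}\le\sum_{m=0}^{\infty}\|uW^m\|_{\rm TV}$.

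Finally I would bound the summands in two regimes and split the series at $m=\lambda$. Since $W$ is stochastic it contracts total variation, giving the crude bound $\|uW^m\|_{\rm TV}\le\|u\|_{\rm TV}\le\epsilon$ valid for every $m$; on the other hand, applying uniform ergodicity to the normalized positive and negative parts of the zero-mean measure $u$ gives the geometric bound $\|uW^m\|_{\rm TV}\le C\rho^m\|u\|_{\rm TV}\le C\rho^m\epsilon$. The threshold $\lambda=\lceil\log(1/C)/\log\rho\rceil$ is exactly the crossover at which $C\rho^\lambda\le 1$, so for $m<\lambda$ one uses the crude bound ($\lambda$ terms, contributing at most $\lambda\epsilon$) and for $m\ge\lambda$ the ergodic bound, whose tail sums as a geometric series to $\epsilon C\rho^\lambda/(1-\rho)$. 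Adding the two contributions gives $\|\Pi-\Pi'\|_{\rm TV}\le\epsilon\big(\lambda+C\rho^\lambda/(1-\rho)\big)$, as claimed.

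The step I expect to require the most care is the ergodic bound on $\|uW^m\|_{\rm TV}$ for the signed measure $u$: extracting the clean geometric factor $C\rho^m\|u\|_{\rm TV}$ from the definition of uniform ergodicity, which is stated only for probability initial distributions, means controlling the positive and negative parts of $u$ separately via the Hahn--Jordan decomposition, and keeping the constants consistent with the formulation of Corollary 2.3 in \cite{alquier2016noisy} is the delicate bookkeeping that makes the final constant come out exactly. The MH-specific input, namely the diagonal/off-diagonal cancellation that turns the uniform acceptance-ratio error $\epsilon$ into a one-step kernel error $\epsilon$, is the other place where the argument departs from the generic perturbation framework and must be verified explicitly.
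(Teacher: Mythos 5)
The paper does not actually prove this statement: it is imported verbatim as (a deterministic restriction of) Corollary 2.3 of \cite{alquier2016noisy}, which in turn rests on Mitrophanov's perturbation bound for uniformly ergodic chains. So there is no in-paper proof to compare against; what you have written is a reconstruction of the Mitrophanov argument, and structurally it is the right one. Your first step is sound and is the only MH-specific input: from Eq.\ (\ref{eq:tranMat}), the off-diagonal entries differ by $T(x,y)|A-A'|$ and the diagonal entries by the complementary sum, so $\sum_y|W_{x,y}-W'_{x,y}|\le 2\epsilon\sum_{y\neq x}T(x,y)\le 2\epsilon$, i.e.\ $\|W(x,\cdot)-W'(x,\cdot)\|_{\rm TV}\le\epsilon$ uniformly in $x$. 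The telescoping identity $\Pi'-\Pi'W^n=\sum_{m=0}^{n-1}uW^m$ with $u=\Pi'(W'-W)$, the passage to the limit via uniform ergodicity, and the split of the series at $m=\lambda$ are all correct.

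The genuine gap is exactly the step you flagged: the geometric bound $\|uW^m\|_{\rm TV}\le C\rho^m\|u\|_{\rm TV}$ for the zero-mass signed measure $u$. The route you sketch --- Hahn--Jordan decompose $u=c(\hat u_+-\hat u_-)$ and apply uniform ergodicity to each normalized part through the triangle inequality via $\Pi$ --- gives $\|\hat u_+W^m-\hat u_-W^m\|_{\rm TV}\le\|\hat u_+W^m-\Pi\|_{\rm TV}+\|\Pi-\hat u_-W^m\|_{\rm TV}\le 2C\rho^m$, hence $2C\rho^m\|u\|_{\rm TV}$, and this factor of $2$ is tight for the ``sup over sets'' total variation used in the paper's Definition. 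Propagating it shifts the crossover to $\lceil\log(1/(2C))/\log\rho\rceil$ and doubles the tail, so the constant in Eq.\ (\ref{eq:AppARDistDiff}) does not follow from your argument as written. The clean constant is obtained by replacing the detour through $\Pi$ with the Dobrushin ergodicity coefficient, $\|uW^m\|_{\rm TV}\le\delta(W^m)\|u\|_{\rm TV}$ with $\delta(W^m)=\max_{x,x'}\|W^m(x,\cdot)-W^m(x',\cdot)\|_{\rm TV}$, together with the norm convention of Mitrophanov and Alquier et al.\ (uniform ergodicity stated in the full $\ell_1$ norm, under which the coefficient carries an intrinsic factor $\tfrac12$ that cancels the $2$ from the triangle inequality). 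Your proof is therefore correct up to this convention-dependent constant, but as it stands it establishes the bound only with $C$ replaced by $2C$ inside both $\lambda$ and the tail term; to recover the theorem exactly you must either adopt the $\ell_1$ formulation of uniform ergodicity or supply the ergodicity-coefficient lemma explicitly.
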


Because of Theorem \ref{th:mixtime}, for a reversible irreducible Markov chain, we obtain
\begin{equation}
    \|\Pi-\Pi^\prime\|_{\rm TV} \le \epsilon\left(\left\lceil\frac{\log(2\sqrt{\Pi_{\rm min}})}{\log(1-\Delta)}\right\rceil+\frac{1}{\Delta}\right), \label{eq:AppARDistDiffMH}
\end{equation}
by substituting $\rho=1-\Delta$ and $C=\frac{1}{2\sqrt{\Pi_{\rm min}}}$ in Eq. (\ref{eq:AppARDistDiff}).

\subsection{Quantum walk operator}

Among several versions of quantum walk operators proposed so far, this paper adopts that in \cite{Lemieux2020efficientquantum}.
It is dedicated for the MH algorithm for Ising models, and will be generalized in Sec. \ref{sec:ourWalkOp}.
Suppose that we are now considering an Ising system with $n_{\rm sp}$ spins, which means that $\Omega=\{-1,1\}^{\times n_{\rm sp}}$.
We use a system of two quantum registers $R_{\rm S}$ and $R_{\rm M}$ and a qubit $R_{\rm C}$.
$R_{\rm S}$ hold a string $x$ of $\pm 1$ with length $n_{\rm sp}$, that represents the current spin configuration.
$R_{\rm M}$ holds a bit string $z\in\{0,1\}^{\times n_{\rm sp}}$ that represents the next spin flip: if the $i$th bit $z_i$ of $z$ is 1 (resp. 0), the $i$th spin $x_i$ in $x\in\{-1,1\}^{\times n_{\rm sp}}$ is changed to $-x_i$ (resp. unchanged).
We denote by $x \odot z$ the spin configuration generated by $x$ and $z$ under this rule.
Then, we consider the following operator
\begin{equation}
    U_{\rm IS}=R_{\rm IS}V^\dagger_{\rm IS} B^\dagger_{\rm IS} F_{\rm IS}B_{\rm IS}V_{\rm IS}. \label{eq:walkOp}
\end{equation}
The component operators are as follow.
$V_{\rm IS}$ is a unitary on $R_{\rm M}$ that acts as
\begin{equation}
    V_{\rm IS}\ket{0}_{R_{\rm M}}=\sum_{z\in\mathcal{M}} \sqrt{p_{\rm fl}(z)}\ket{z}_{R_{\rm M}}.
\end{equation}
Here, $\mathcal{M}\subset \{0,1\}^{\times n_{\rm sp}}$ is the set of possible spin flips and $p_{\rm fl}:\mathcal{M}\rightarrow (0,1)$ is a probability distribution on $\mathcal{M}$, which we use as the proposal distribution of the next flip.
Note that it is assumed that possible flips $\mathcal{M}$ and the probability $p_{\rm fl}(z)$ that a flip $z\in\mathcal{M}$ is proposed are independent of the current spin configuration.
We can associate $p_{\rm fl}$ with the proposal distribution $T$ in Algorithm \ref{alg:MH} as
\begin{equation}
    T(x,y)=
    \begin{cases}
        p_{\rm fl}(z) & ; \ {\rm if} \ y=x\odot z \ {\rm with \ some} \ z\in\mathcal{M} \\
        0 & ; \ {\rm otherwise}
    \end{cases}, \label{eq:TIsing}
\end{equation}
where $x,y\in\{0,1\}^{\times n_{\rm sp}}$.
$B_{\rm IS}$ is the rotation gate on $R_{\rm C}$ controlled by $R_{\rm S}$ and $R_{\rm M}$, whose rotation angle is determined by the acceptance ratio.
That is, it acts as
\begin{align}
    & B_{\rm IS}\ket{x}_{R_{\rm S}}\ket{z}_{R_{\rm M}}\ket{\phi}_{R_{\rm C}}= \nonumber\\
    & \ \ \ket{x}_{R_{\rm S}}\ket{z}_{R_{\rm M}}\otimes
    \begin{pmatrix}
    \sqrt{1-A(x,x\odot z)} & -\sqrt{A(x,x\odot z)} \\
    \sqrt{A(x,x\odot z)} & \sqrt{1-A(x,x\odot z)}
    \end{pmatrix}
    \ket{\phi}_{R_{\rm C}}
\end{align}
for any $x\in\Omega$, $z \in \mathcal{M}$ and state $\ket{\phi}_{R_{\rm C}}$ on $R_{\rm C}$, where $A(x,x\odot z)=P(x\odot z)T(x\odot z,x)/P(x)T(x,x\odot z)$ with some target distribution $P$ on $x\in\{-1,1\}^{\times n_{\rm sp}}$.
$F_{\rm IS}$ is a gate to apply the spin flip under the control by $R_{\rm C}$, which acts as
\begin{align}
    &F_{\rm IS}\ket{x}_{R_{\rm S}}\ket{z}_{R_{\rm M}}\ket{\phi}_{R_{\rm C}}= \nonumber \\
    & \quad
    \begin{cases}
    \ket{x}_{R_{\rm S}}\ket{z}_{R_{\rm M}}\ket{0}_{R_{\rm C}} & {\rm if} \ \ket{\phi}_{R_{\rm C}}=\ket{0}_{R_{\rm C}} \\
    \ket{x \odot z}_{R_{\rm S}}\ket{z}_{R_{\rm M}}\ket{1}_{R_{\rm C}} & {\rm if} \ \ket{\phi}_{R_{\rm C}}=\ket{1}_{R_{\rm C}}
    \end{cases}.
\end{align}
Finally,
\begin{equation}
    R_{\rm IS}=2\Lambda_{0,{\rm IS}}-I_{R_{\rm S}} \otimes I_{R_{\rm M}} \otimes I_{R_{\rm C}}, \label{eq:R}
\end{equation}
where
\begin{equation}
    \Lambda_{0,{\rm IS}}:=I_{R_{\rm S}}\otimes\ket{0}\bra{0}_{R_{\rm M}}\otimes\ket{0}\bra{0}_{R_{\rm C}}. \label{eq:Pi0}
\end{equation}

Although this is different from quantum walk operators in previous studies \cite{Szegedy2004,Somma0712.1008,Somma2008,Wocjan2008,Magniez2011,Yung2012,Harrow2020}, it has the following property on its spectrum, which is same as previous ones, and thus can be used as an alternative.

\begin{theorem}
    Consider the Markov chain generated by Algorithm \ref{alg:MH} with the state space $\Omega=\{-1,1\}^{\times n_{\rm sp}}$, the target distribution $P$ and the proposal distribution $T$ in Eq. (\ref{eq:TIsing}).
    Denote by $\Delta$ its spectral gap.
    Let $\mathcal{A}={\rm span}\{\ket{x}_{R_{\rm S}}\ket{0}_{R_{\rm M}}\ket{0}_{R_{\rm C}} | x\in\Omega\}$ and $\mathcal{B}=V^\dagger_{\rm IS} B^\dagger_{\rm IS} F_{\rm IS}B_{\rm IS}V_{\rm IS}\mathcal{A}$.
    Then, on $\mathcal{A}+\mathcal{B}$, $\ket{P}$ is the unique eigenstate of $U_{\rm IS}$ with eigenvalue 1, and any other eigenvalue is written as $e^{i\theta}$ with $\theta\in\mathbb{R}$ such that $|\theta|\ge\arccos(1-\Delta)$.
    \label{th:phasegap}
\end{theorem}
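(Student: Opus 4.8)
The plan is to recognize $U_{\rm IS}$ as a product of two reflections and to reduce the spectral question to the classical theory of such products (Jordan's lemma), with the Metropolis chain entering through a discriminant matrix. First I would write $U_{\rm IS} = R_{\rm IS} D$ with $D := V_{\rm IS}^\dagger B_{\rm IS}^\dagger F_{\rm IS} B_{\rm IS} V_{\rm IS}$, and observe that $F_{\rm IS}$ is a Hermitian involution (the controlled spin flip satisfies $F_{\rm IS}^2 = I$ because $\odot z$ is its own inverse), so that $D^\dagger = D$ and $D^2 = I$; hence $D$ is itself a reflection, about its $+1$ eigenspace $\mathcal{D}_+$. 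Since $R_{\rm IS} = 2\Lambda_{0,{\rm IS}} - I$ from Eq. (\ref{eq:R}) is the reflection about $\mathcal{A}$, the operator $U_{\rm IS}$ is the product of the two reflections $\mathrm{ref}_{\mathcal{A}}$ and $\mathrm{ref}_{\mathcal{D}_+}$. The subspace $\mathcal{A} + \mathcal{B} = \mathcal{A} + D\mathcal{A}$ is invariant under both reflections, hence under $U_{\rm IS}$, so it suffices to analyze the restriction there.

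The crux is to compute the discriminant $\mathcal{D} := \Lambda_{0,{\rm IS}} D \Lambda_{0,{\rm IS}}$ in the basis $\{\ket{x}_{R_{\rm S}}\ket{0}_{R_{\rm M}}\ket{0}_{R_{\rm C}}\}$ of $\mathcal{A}$. I would push a basis vector through $V_{\rm IS}$ (creating the proposal superposition $\sum_z\sqrt{p_{\rm fl}(z)}\ket{z}$), then $B_{\rm IS}$, $F_{\rm IS}$, $B_{\rm IS}^\dagger$, and project back with $V_{\rm IS}^\dagger$ onto $\ket{0}_{R_{\rm M}}\ket{0}_{R_{\rm C}}$. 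Tracking the two branches of the control qubit and using the Metropolis form of $A$ together with the symmetry $T(x,y) = T(y,x) = p_{\rm fl}(z)$ for $y = x\odot z$, which follows from Eq. (\ref{eq:TIsing}) and the involutivity of $\odot z$, the off-diagonal entry should come out to $\mathcal{D}_{x,y} = p_{\rm fl}(z)\sqrt{A(x,y)A(y,x)} = \sqrt{W_{x,y}W_{y,x}}$ and the diagonal to $\mathcal{D}_{x,x} = W_{x,x}$, with $W$ the MH transition matrix of Eq. (\ref{eq:tranMat}). Reversibility of the chain then gives $\mathcal{D} = \mathrm{diag}(\sqrt{\Pi})\,W\,\mathrm{diag}(\sqrt{\Pi})^{-1}$ with $\Pi = P$, so $\mathcal{D}$ is real symmetric and isospectral to $W$; by Perron--Frobenius for the irreducible chain its eigenvalue $1$ is simple with eigenvector $\sqrt{P}$, and every other eigenvalue $\lambda$ obeys $|\lambda| \le 1-\Delta$ by definition of the spectral gap.

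Finally I would feed this into Jordan's lemma for the pair $(\mathrm{ref}_{\mathcal{A}},\mathrm{ref}_{\mathcal{D}_+})$ on $\mathcal{A}+\mathcal{B}$, which block-diagonalizes the space into invariant subspaces of dimension at most two. On a two-dimensional block with principal angle $\phi$ between the $\mathcal{A}$-line $\ket{a}$ and the $\mathcal{D}_+$-line $\ket{d}$, one has $\Lambda_{0,{\rm IS}} D\Lambda_{0,{\rm IS}} = (2|\braket{a|d}|^2-1)\ket{a}\bra{a} = \cos(2\phi)\ket{a}\bra{a}$, while $U_{\rm IS} = \mathrm{ref}_{\mathcal{A}}\,\mathrm{ref}_{\mathcal{D}_+}$ acts as a rotation by $2\phi$, i.e.\ with eigenvalues $e^{\pm 2i\phi}$. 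Matching these identifies $\lambda = \cos(2\phi)$ and eigenvalues $e^{\pm i\arccos\lambda}$, so every eigenphase $\theta$ of $U_{\rm IS}$ satisfies $\cos\theta \in \mathrm{spec}(W)$. The simple eigenvalue $\lambda = 1$ yields $\theta = 0$ with eigenstate $\ket{P}$, whose uniqueness on $\mathcal{A}+\mathcal{B}$ follows from the simplicity of $\lambda=1$ after checking the one-dimensional blocks, and every remaining $\lambda$ with $|\lambda|\le 1-\Delta$ gives $|\theta| = \arccos\lambda \ge \arccos(1-\Delta)$ since $\arccos$ is decreasing.

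I expect the main obstacle to be the discriminant computation of the second paragraph: carefully bookkeeping the control-qubit branches through $B_{\rm IS}^\dagger F_{\rm IS} B_{\rm IS}$, correctly identifying the acceptance factors $A(x,y)$ and $A(y,x)$ produced depending on whether the spin register has been flipped, and handling the conventions around $\mathcal{M}$ (in particular whether the trivial flip is included) so that the diagonal entry lands exactly on $W_{x,x}$. Establishing the clean equality $\mathcal{D}_{x,y} = \sqrt{W_{x,y}W_{y,x}}$ is what ties the abstract two-reflection structure to the specific Metropolis--Hastings chain and delivers the factor $\arccos(1-\Delta)$ without a spurious factor of two.
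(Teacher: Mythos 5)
Your proposal is correct and follows essentially the same route as the paper's Appendix~A proof: both hinge on showing that the central block $D=V_{\rm IS}^\dagger B_{\rm IS}^\dagger F_{\rm IS}B_{\rm IS}V_{\rm IS}$ is a Hermitian involution (the paper's Lemma~\ref{lem:SF}, trivial here since $S=I$), that the discriminant $\Lambda_{0,\rm IS}D\Lambda_{0,\rm IS}$ has entries $\sqrt{W_{x,y}W_{y,x}}$ and is conjugate to $W$ by $\mathrm{diag}(\sqrt{P})$ via detailed balance (Lemmas~\ref{lem:SameEigW} and~\ref{lem:AdagB}), and then on the two-reflection spectral theorem. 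The only packaging difference is that the paper applies Szegedy's Theorem~\ref{th:Szegedy} to $U_{\rm IS}^2=R_\mathcal{A}R_\mathcal{B}$ and takes square roots, which forces a separate direct verification that $U_{\rm IS}\ket{P}=\ket{P}$ (Lemma~\ref{lem:Peig}) to fix the sign of the top eigenvalue, whereas your direct Jordan's-lemma treatment of $U_{\rm IS}=\mathrm{ref}_{\mathcal{A}}\,\mathrm{ref}_{\mathcal{D}_+}$ gets $U_{\rm IS}\ket{P}=\ket{P}$ for free from $\bra{P}D\ket{P}=1$ and Cauchy--Schwarz.
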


We call operators that have this property the quantum walk operators for the Markov chain.
Although this theorem only states on the phase gap of the quantum walk operator unlike the previous results such as Theorem 1 in \cite{Szegedy2004} and Theorem 1 in \cite{Wocjan2008}, which state on the entire spectrum in more detail, it is sufficient for our purpose for the reason explained in Sec. \ref{sec:QSA}.

\subsection{Quantum simulated annealing \label{sec:QSA}}

In previous studies on quantum versions of MCMC \cite{Somma0712.1008,Somma2008,Wocjan2008,Yung2012,Harrow2020}, the aim is generating a quantum state in which the target distribution $P$ is encoded in the amplitude, that is,
\begin{equation}
\ket{P}:=\sum_{x\in\Omega}\sqrt{P(x)}\ket{x}.
\end{equation}
In this paper, we use the method proposed in \cite{Harrow2020}.
In that paper, following \cite{Wocjan2008}, the author took the strategy called QSA, which was inspired by simulated annealing.
That is, assuming that $P$ is in the form of
\begin{equation}
P(x)\propto P_0(x)\exp(- L(x))
\end{equation}
with the {\it prior distribution} $P_0$ and the {\it negative log-likelihood} $L(x)$, we sequentially prepare the quantum states $\ket{P_{\beta_1}},...,\ket{P_{\beta_l}}$ from the initial state $\ket{P_0}$.
Here, these states encode the distributions in the form of
\begin{equation}
P_\beta(x)\propto P_0(x)\exp(-\beta L(x)) \label{eq:Pbeta}
\end{equation}
with parameters called temperatures $\beta_0=0<\beta_1<\cdots<\beta_l=1$, and largely overlap: $|\braket{P_{\beta_i} | P_{\beta_{i+1}}}|^2\ge p$ with $p=\Theta(1)$.
The method in \cite{Harrow2020} is twofold: obtain a set of appropriate values of $\{\beta_i\}$ and transform $\ket{P_0}$ to $\ket{P_{\beta_l}}$ via $\ket{P_{\beta_1}},...,\ket{P_{\beta_{l-1}}}$.
Both of two phases are based on the following result, how to approximately construct the phase gate, which multiplies a phase factor $\omega$ to the state vector for the state $\ket{P_{\beta_i}}$ but not for orthogonal states, using the quantum walk operator for the Markov chain converging to $P_{\beta_i}$.
This is summarized as the following theorem.

\begin{theorem}[Corollary 2 in \cite{Wocjan2008}]

Consider a Markov chain $\mathcal{C}$ on a finite state space $\Omega$ with the transition matrix $W$, the stationary distribution $P$ and the spectral gap $\Delta$.
Let $\omega$ be a complex number with unit modulus and define a unitary
\begin{equation}
    R^{\omega}_{\ket{P}}:=\omega \Lambda^\parallel_{\ket{P}}+\Lambda^\perp_{\ket{P}}, \label{eq:RomegaP}
\end{equation}
where $\Lambda^\parallel_{\ket{P}}$ and $\Lambda^\perp_{\ket{P}}$ are the projection onto the subspace spanned by a state $\ket{P}$ on a register $R$ and that onto the orthogonal subspace, respectively, 
Then, for any $\delta\in(0,1)$, we have an access to a unitary operator $\tilde{R}^\omega_{\ket{P},\delta}$ that has the following properties:
\begin{itemize}
    \item acts on a system of $R$ and $n_{\rm anc}$ ancillary qubits, where $n_{\rm anc}=O\left(\log\left(\frac{1}{\Delta}\right)\log\left(\frac{1}{\delta}\right)\right)$
    
    \item uses the controlled version of the quantum walk operator $U$ for $\mathcal{C}$ and its inverse $O\left(\frac{\log\left(1/\delta\right)}{\sqrt{\Delta}}\right)$ times
    
    \item for any state $\ket{\Xi}$ on $R$, $\tilde{R}^{\omega}_{\ket{P},\delta}\ket{\Xi}\ket{0}^{\otimes n_{\rm anc}}=(R^{\omega}_{\ket{P}}\ket{\Xi})\ket{0}^{\otimes n_{\rm anc}}+\ket{\xi}$, where $\ket{\xi}$ is an unnormalized state on the entire system with $\|\ket{\xi}\|\le \delta$.

\end{itemize}
   \label{th:Romega}
\end{theorem}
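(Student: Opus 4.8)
The plan is to build $\tilde{R}^\omega_{\ket{P},\delta}$ out of quantum phase estimation (QPE) applied to the walk operator $U$, exploiting the phase gap guaranteed by Theorem \ref{th:phasegap}. First I would record the key spectral fact: on the invariant subspace $\mathcal{A}+\mathcal{B}$, $\ket{P}$ is the unique phase-$0$ eigenstate of $U$, while every orthogonal eigenstate has phase $\theta$ with $|\theta|\ge\arccos(1-\Delta)$, and since $\arccos(1-\Delta)\ge\sqrt{2\Delta}$ the phase gap is at least of order $\sqrt{\Delta}$. Thus $\ket{P}$ can be separated from all orthogonal components by resolving phases to precision a constant fraction of $\sqrt{\Delta}$, which is the source of the quadratic speedup over the classical $\Delta^{-1}$.

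Second, I would set up a single round of QPE on $U$ with a phase register of $t=O(\log(1/\Delta))$ qubits, chosen so that the resolution $2^{-t}$ is a small enough constant multiple of the phase gap $\arccos(1-\Delta)$. This round uses $O(2^t)=O(1/\sqrt{\Delta})$ controlled applications of $U$ and $U^\dagger$. Writing an arbitrary input $\ket{\Xi}=c_P\ket{P}+\sum_j c_j\ket{\psi_j}$ in the eigenbasis of $U$, the QPE unitary $\Phi$ produces, on each $\ket{\psi_j}$, an ancilla distribution concentrated near the bin encoding $\theta_j$; the phase-$0$ component $\ket{P}$ is concentrated near bin $0$, and every orthogonal $\ket{\psi_j}$ near a bin at distance at least the gap. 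I would then apply a diagonal gate $P_\omega$ that multiplies by $\omega$ on the bins inside a window around $0$ of width below the gap and acts as the identity elsewhere, and finally uncompute by $\Phi^\dagger$. The composite $\Phi^\dagger P_\omega\Phi$, applied to $\ket{\Xi}\ket{0}^{\otimes n_{\rm anc}}$, returns approximately $\omega c_P\ket{P}$ on the phase-$0$ component and approximately leaves every orthogonal component fixed, i.e. approximately $R^\omega_{\ket{P}}$, with the ancilla restored to $\ket{0}^{\otimes n_{\rm anc}}$.

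The main obstacle is the error analysis, because a single round of textbook QPE has only polynomially decaying tails: the probability that an eigenstate of phase $\theta$ is assigned to the window around $0$ falls off only like the inverse distance, so a constant number of rounds cannot force the approximation error below an arbitrary $\delta$. To fix this I would run $k=O(\log(1/\delta))$ independent QPE rounds in parallel, each on its own $t$-qubit register, and base the decision of whether to apply $\omega$ on the median of the $k$ estimates (the standard median-amplification, or consistent-phase-estimation, trick). A Chernoff bound then drives the per-eigenstate misclassification probability below $\delta^2$, so that after uncomputation the unnormalized leftover state $\ket{\xi}$ satisfies $\|\ket{\xi}\|\le\delta$ uniformly in $\ket{\Xi}$, establishing the third bullet. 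Counting resources, the parallel rounds use $n_{\rm anc}=O(tk)=O(\log(1/\Delta)\log(1/\delta))$ ancillas and $O(k/\sqrt{\Delta})=O(\log(1/\delta)/\sqrt{\Delta})$ calls to controlled $U$ and $U^\dagger$, matching the first two bullets; the remaining care is to verify that the window width and $t$ can be fixed simultaneously so that both the phase-$0$ leakage and the orthogonal-phase tails stay within the error budget.
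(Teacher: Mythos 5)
Your construction is correct and follows essentially the same route as the paper's own outline of this result (which is imported from Wocjan et al.): quantum phase estimation on the walk operator, separation of $\ket{P}$ from orthogonal eigenstates via the $\Omega(\sqrt{\Delta})$ phase gap of Theorem \ref{th:phasegap}, a phase gate controlled on the estimated-phase register, and uncomputation, with the residual $\ket{\xi}$ coming from QPE's finite failure probability. Your addition of the median-of-$O(\log(1/\delta))$-rounds amplification to tame the polynomial QPE tails is exactly what accounts for the $\log(1/\delta)$ factors in $n_{\rm anc}$ and the query count, matching the cited proof.
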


Note that, as special cases, this theorem covers the original version of Corollary 2 in \cite{Wocjan2008} with $\omega=\omega_{\pi/3}:=e^{i\frac{\pi}{3}}$, and Theorem 6 in \cite{Magniez2011} with $\omega=-1$, which corresponds to the reflection operator with respect to $\ket{P}$.

The outline of constructing $R^{\omega}_{\ket{P}}$ with $U$ is as follows.
Given $\ket{\psi_i}$ an eigenstate of $U$ with the corresponding eigenvalue $\lambda_i=e^{i\theta_i}$, we can compute an estimate of the phase $\theta_i$ onto another register $R_{\rm ph}$ by quantum phase estimation (QPE) \cite{kitaev1995quantum,Cleve1998}.
Because of Theorem \ref{th:phasegap}, the difference between the phase of the eigenstate $\ket{P}$, which is 0, and that of any other eigenstate is larger than $\arccos(1-\Delta)=\Omega(\sqrt{\Delta})$, and thus $\ket{P}$ can be distinguished from other eigenstates via QPE with accuracy $O(\sqrt{\Delta})$.
Thus, the above QPE followed by a phase gate controlled by the register $R_{\rm ph}$ is the phase gate that acts only on $\ket{P}$, that is, $R^{\omega}_{\ket{P}}$.
Because QPE outputs an estimate within the desired accuracy not certainly but with a finite failure probability, this implementation of $R^{\omega}_{\ket{P}}$ gives an approximate gate $\tilde{R}^\omega_{\ket{P},\delta}$.

Given the phase gate with $\omega=\omega_{\pi/3}$, we can generate $\ket{P_{\beta_{i+1}}}$ from $\ket{P_{\beta_i}}$ by Grover's $\frac{\pi}{3}$-amplitude amplification \cite{Grover2005}, which is summarized as follows.

\begin{theorem}
    Let $\ket{\phi_1}$ and $\ket{\phi_2}$ be quantum states on a same register satisfying $|\braket{\phi_1 | \phi_2}|^2 \ge p$ with some $p\in(0,1]$.
    For $i\in\{1,2\}$, denote by $\Lambda^\parallel_{\ket{\phi_i}}$ and $\Lambda^\perp_{\ket{\phi_i}}$ the projection on the subspace spanned by $\ket{\phi_i}$ and that on the orthogonal subspace, respectively, and define the unitary $R^{\omega_{\pi/3}}_{\ket{\phi_i}}:=\omega_{\pi/3}\Lambda^\parallel_{\ket{\phi_i}}+\Lambda^\perp_{\ket{\phi_i}}$.
    Define the unitaries $U_{i,m}$ recursively as follows:
    \begin{eqnarray}
        U_{i,0}&=&I \nonumber \\
        U_{i,m+1}&=& U_{i,m} R^{\omega_{\pi/3}}_{\ket{\phi_i}} U_{i,m}^\dagger R^{\omega_{\pi/3}}_{\ket{\phi_{i+1}}} U_{i,m}.
    \end{eqnarray}
    Then, for any $m\in\mathbb{N}$,
    \begin{equation}
        |\bra{\phi_2} U_{i,m} \ket{\phi_1}|^2 \ge 1-(1-p)^{3^m}
    \end{equation}
    holds.
    This implies that, for any $\epsilon\in\mathbb{R}_+$, we can prepare $\widetilde{\ket{{\phi}_2}}$ $\epsilon$-close to $\ket{\phi_2}$ from $\ket{\phi_1}$ using the unitaries in $\left\{R^{\omega_{\pi/3}}_{\ket{\phi_1}},R^{\omega_{\pi/3}}_{\ket{\phi_2}},\left(R^{\omega_{\pi/3}}_{\ket{\phi_1}}\right)^\dagger,\left(R^{\omega_{\pi/3}}_{\ket{\phi_2}}\right)^\dagger\right\}$ $O\left(\frac{\log\left(\frac{1}{\epsilon}\right)}{\log\frac{1}{1-p}}\right)$ times.
\end{theorem}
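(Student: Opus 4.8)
The plan is to collapse the entire dynamics onto a single scalar recursion for the overlap and then solve that recursion. Writing $\omega:=\omega_{\pi/3}$, I note that $R^{\omega}_{\ket{\phi}}=I-(1-\omega)\ket{\phi}\bra{\phi}$, and that the only special properties of the $\pi/3$ phase I will need are $1-\omega=e^{-i\pi/3}$ (so $|1-\omega|=1$) and $\omega^2-\omega+1=0$ (so $(1-\omega)^2=-\omega$). I would track the complex amplitude $c_m:=\bra{\phi_2}U_{i,m}\ket{\phi_1}$, for which $c_0=\braket{\phi_2|\phi_1}$ gives $|c_0|^2\ge p$, and aim to prove the exact error-cubing identity $1-|c_{m+1}|^2=(1-|c_m|^2)^3$. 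Granting this, set $\epsilon_m:=1-|c_m|^2$; then $\epsilon_m=\epsilon_0^{3^m}\le(1-p)^{3^m}$ because $t\mapsto t^{3^m}$ is increasing on $[0,1]$ and $\epsilon_0\le 1-p$, which is precisely $|\bra{\phi_2}U_{i,m}\ket{\phi_1}|^2\ge 1-(1-p)^{3^m}$.

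To obtain the recursion I would apply the defining relation $U_{i,m+1}=U_{i,m}R^{\omega}_{\ket{\phi_1}}U_{i,m}^\dagger R^{\omega}_{\ket{\phi_2}}U_{i,m}$ (with the source $\ket{\phi_i}=\ket{\phi_1}$ and target $\ket{\phi_{i+1}}=\ket{\phi_2}$) to $\ket{\phi_1}$ and peel the operators off from the right, introducing $\ket{\psi_m}:=U_{i,m}\ket{\phi_1}$ and $\ket{\chi_m}:=U_{i,m}^\dagger\ket{\phi_2}$. Each $U_{i,m}$ is unitary, being a product of unitaries (by induction), so the telescoping identities $U_{i,m}^\dagger\ket{\psi_m}=\ket{\phi_1}$, $U_{i,m}\ket{\chi_m}=\ket{\phi_2}$ and $\braket{\phi_1|\chi_m}=\overline{c_m}$ are available. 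Expanding $R^{\omega}_{\ket{\phi_2}}\ket{\psi_m}=\ket{\psi_m}-(1-\omega)c_m\ket{\phi_2}$ and applying these, a short computation should collapse everything into
\begin{equation}
U_{i,m+1}\ket{\phi_1}=\bigl(1-(1-\omega)+(1-\omega)^2|c_m|^2\bigr)\ket{\psi_m}-(1-\omega)c_m\ket{\phi_2},
\end{equation}
and projecting onto $\bra{\phi_2}$ then gives $c_{m+1}=c_m\bigl(1-2(1-\omega)+(1-\omega)^2|c_m|^2\bigr)$.

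The last step is to take the modulus and invoke the $\pi/3$ identities. Substituting $1-\omega=\tfrac12-i\tfrac{\sqrt3}{2}$ and $(1-\omega)^2=-\omega=-\tfrac12-i\tfrac{\sqrt3}{2}$ and writing $P:=|c_m|^2$, the bracket $c_{m+1}/c_m$ becomes $-\tfrac{P}{2}+i\sqrt3\bigl(1-\tfrac{P}{2}\bigr)$, so $|c_{m+1}|^2=P(P^2-3P+3)$ and hence $1-|c_{m+1}|^2=(1-P)^3$, the claimed identity. I expect this collapse to be the main obstacle: a priori $c_{m+1}$ could depend on the phases of $\ket{\phi_1},\ket{\phi_2}$ and on the full operator $U_{i,m}$, and it is only the special value $\omega=e^{i\pi/3}$ that forces every phase- and $U_{i,m}$-dependent term to cancel, leaving a real cubic in $|c_m|^2$. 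The unitarity bookkeeping in the telescoping must be kept straight, and the degenerate cases $|\braket{\phi_1|\phi_2}|\in\{0,1\}$ should be checked separately (there the recursion holds trivially).

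Finally I would read off the complexity. Letting $N_m$ be the number of phase gates in $U_{i,m}$, the recursion invokes $U_{i,m}$ three times plus the two explicit gates, so $N_{m+1}=3N_m+2$ with $N_0=0$, giving $N_m=3^m-1$. To make $\widetilde{\ket{\phi_2}}:=U_{i,m}\ket{\phi_1}$ be $\epsilon$-close to $\ket{\phi_2}$ (up to an irrelevant global phase) it suffices that $1-|c_m|^2=O(\epsilon^2)$, since the optimal-phase distance obeys $2-2|c_m|\le 2(1-|c_m|^2)$; thus $(1-p)^{3^m}=O(\epsilon^2)$ suffices, i.e.\ $3^m=O\bigl(\log(1/\epsilon)/\log\tfrac{1}{1-p}\bigr)$, and therefore $N_m=O\bigl(\log(1/\epsilon)/\log\tfrac{1}{1-p}\bigr)$, matching the stated bound.
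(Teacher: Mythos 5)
Your proposal is correct, and it is essentially the standard proof of Grover's $\pi/3$ fixed-point amplification: the paper itself states this theorem as a summary of \cite{Grover2005} and gives no proof, and your derivation of the exact error-cubing identity $1-|c_{m+1}|^2=(1-|c_m|^2)^3$ via the telescoping expansion and the relations $|1-\omega_{\pi/3}|=1$, $(1-\omega_{\pi/3})^2=-\omega_{\pi/3}$ is precisely the argument in that reference; the algebra ($c_{m+1}=c_m(1-2(1-\omega)+(1-\omega)^2|c_m|^2)$, hence $|c_{m+1}|^2=P(P^2-3P+3)$) and the gate count $N_m=3^m-1$ all check out. The one point worth flagging is the global phase: since the bracket $-\tfrac{P}{2}+i\sqrt{3}(1-\tfrac{P}{2})$ is not positive real, $U_{i,m}\ket{\phi_1}$ converges to $\ket{\phi_2}$ only up to an accumulating phase, so the final ``$\epsilon$-close'' claim under the paper's vector-norm definition of closeness holds only modulo that phase --- a caveat you correctly note and which the paper's own statement of the theorem glosses over in exactly the same way.
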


Strictly speaking, we do not have the exact operator $R^{\omega_{\pi/3}}_{\ket{P}}$ but an approximate one $\tilde{R}^{\omega_{\pi/3}}_{\ket{P},\epsilon}$.
According to \cite{Wocjan2008}, we can transform $\ket{P_0}$ to a state $\widetilde{\ket{P_1}}$ $\epsilon$-close to $\ket{P_1}$, via $\widetilde{\ket{P_{\beta_1}}},...,\widetilde{\ket{P_{\beta_{l-1}}}}$ the approximate states of $\ket{P_{\beta_1}},...,\ket{P_{\beta_{l-1}}}$, by $\frac{\pi}{3}$-amplitude amplifications with $\tilde{R}^{\omega_{\pi/3}}_{\ket{P_{\beta_i}},\epsilon^\prime}$ used instead of $R^{\omega_{\pi/3}}_{\ket{P_{\beta_i}}}$, where $\epsilon^\prime$ is some real number set depending on $l$, $\epsilon$ and $p$.
We hereafter call this method the approximate $\frac{\pi}{3}$-amplitude amplification (A$\frac{\pi}{3}$AA) with accuracy $\epsilon$ and overlap $p$.
Its complexity is summarized as follows.

\begin{theorem}[Theorem 5 in \cite{Harrow2020}, originally Theorem 2 in \cite{Wocjan2008}]

Consider $l$ Markov chains $\mathcal{C}_1,...,\mathcal{C}_l$ on a finite state space $\Omega$ with stationary distributions $p_1,...,p_l$ and spectral gaps lower bounded by $\Delta\in(0,1)$.
Let $p_0$ be another probability distribution on $\Omega$ and suppose that the state $\ket{p_0}$ is given on a register $R$.
Assume that, for some $p\in(0,1)$, $|\braket{p_i | p_{i+1}}|^2\ge p$ holds for any $i\in[l]_0$.
Then, for any $\epsilon\in(0,1)$, we have an access to a unitary operator $U_{\rm QSA}$ on the system of $R$ and $n_{\rm anc}$ qubits that acts as
\begin{equation}
    U_{\rm QSA}\ket{0}_R\ket{0}^{\otimes n_{\rm anc}} = \ket{p_l}_R\ket{0}^{\otimes n_{\rm anc}} + \ket{\zeta}
\end{equation}
making $O\left(\frac{l}{\sqrt{\Delta}}\log^2\frac{l}{p\epsilon}\log\frac{1}{p}\right)$ queries to the controlled quantum walk operators for $\mathcal{C}_1,...,\mathcal{C}_l$.
Here,
\begin{equation}
n_{\rm anc}=O\left(\log\left(\frac{1}{\Delta}\right)\log\left(\frac{l\log\left(\frac{l}{\epsilon}\right)}{\log\left(\frac{1}{1-p}\right)}\right)\right),
\end{equation}
and $\ket{\zeta}$ is an (unnormalized) state on the entire system with $\|\ket{\zeta}\|\le\epsilon$. 
    
\end{theorem}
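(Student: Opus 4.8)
The statement is essentially an assembly of the two building blocks already in hand, so the plan is to compose them and then do the error bookkeeping. I would realize $U_{\rm QSA}$ as a concatenation of $l$ stages, where stage $i$ is a $\frac{\pi}{3}$-amplitude amplification that drives (an approximation of) $\ket{p_{i-1}}$ into (an approximation of) $\ket{p_i}$; running the stages in sequence transports $\ket{p_0}$ through $\widetilde{\ket{p_1}},\ldots,\widetilde{\ket{p_{l-1}}}$ to a state close to $\ket{p_l}$. The overlap hypothesis $|\braket{p_{i-1}|p_i}|^2\ge p$ is exactly the input required by the $\frac{\pi}{3}$-amplitude amplification theorem, which at recursion depth $m$ produces a state whose squared overlap with the target is at least $1-(1-p)^{3^m}$ using $O(3^m)$ phase-gate calls. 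The phase gates $R^{\omega_{\pi/3}}_{\ket{p_{i-1}}}$ and $R^{\omega_{\pi/3}}_{\ket{p_i}}$ demanded by that recursion are not available exactly, so I would substitute the approximate reflections $\tilde R^{\omega_{\pi/3}}_{\ket{p_i},\delta}$ supplied by Theorem \ref{th:Romega}, each built from the controlled quantum walk operator for $\mathcal{C}_i$ via quantum phase estimation.

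The error budget splits into two pieces that I would track separately. First, the ideal (exact-gate) amplification in each stage reaches its target only up to a tolerance $\epsilon_{\rm step}$; by the $1-(1-p)^{3^m}$ bound it suffices to take $3^m=\Theta\!\left(\log(1/\epsilon_{\rm step})/\log(1/(1-p))\right)$, i.e. $O\!\left(\log(1/\epsilon_{\rm step})/\log(1/(1-p))\right)$ phase gates per stage. Since each stage is unitary, an input $\eta$-close to $\ket{p_{i-1}}$ is mapped to an output $\eta$-close to the stage's image of $\ket{p_{i-1}}$, which is itself $\epsilon_{\rm step}$-close to $\ket{p_i}$; hence the per-stage target errors accumulate additively and the ideal procedure ends $O(l\,\epsilon_{\rm step})$-close to $\ket{p_l}$, so choosing $\epsilon_{\rm step}=\Theta(\epsilon/l)$ holds this contribution at $O(\epsilon)$. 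Second, replacing each exact reflection by its $\delta$-approximation injects an unnormalized residue of norm at most $\delta$; a telescoping argument over the total number of gate calls $N_{\rm gate}=O\!\left(l\log(l/\epsilon)/\log(1/(1-p))\right)$ bounds the accumulated deviation by $N_{\rm gate}\,\delta$, so I would set $\delta=\epsilon'=\Theta(\epsilon/N_{\rm gate})$ to keep this below $O(\epsilon)$ as well.

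With these choices the counting is routine. Each approximate reflection costs $O(\log(1/\delta)/\sqrt{\Delta})$ controlled walk-operator queries by Theorem \ref{th:Romega}, and substituting the chosen $\delta$ gives $\log(1/\delta)=O(\log(l/(p\epsilon)))$; multiplying by $N_{\rm gate}$ and simplifying the dependence on $p$ reproduces the stated query complexity $O\!\left(\frac{l}{\sqrt{\Delta}}\log^2\frac{l}{p\epsilon}\log\frac{1}{p}\right)$. Because the ancilla register used by each reflection can be reset to $\ket{0}^{\otimes n_{\rm anc}}$ (up to the $\delta$-residue, which is swept into $\ket{\zeta}$) and reused across gates, the ancilla count is that of a single reflection, $n_{\rm anc}=O(\log(1/\Delta)\log(1/\delta))$, matching the stated expression. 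Collecting the two $O(\epsilon)$ error contributions into $\ket{\zeta}$ then yields $U_{\rm QSA}\ket{0}_R\ket{0}^{\otimes n_{\rm anc}}=\ket{p_l}_R\ket{0}^{\otimes n_{\rm anc}}+\ket{\zeta}$ with $\|\ket{\zeta}\|\le\epsilon$ after rescaling constants.

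I expect the main obstacle to be the telescoping error argument rather than the counting. The guarantee of Theorem \ref{th:Romega} holds only when the ancilla register is input in $\ket{0}^{\otimes n_{\rm anc}}$, whereas after the first imperfect reflection the ancillas carry the $\ket{\xi}$ residue and are no longer exactly reset, so the guarantee cannot be applied verbatim to the full running state. One must therefore compare the true run against the ideal run by substituting one approximate gate at a time, using that each $\tilde R^{\omega_{\pi/3}}_{\ket{p_i},\delta}$ is a genuine norm-preserving unitary and that the reference state at each substitution has its ancilla exactly in $\ket{0}^{\otimes n_{\rm anc}}$, so the $\le\delta$ deviations add rather than compound. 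Verifying that this telescoping closes — that the residual garbage does not feed back to corrupt later stages beyond the additive budget — is the delicate point, and it is precisely what the careful choice of $\epsilon'$ in the A$\frac{\pi}{3}$AA construction is designed to ensure.
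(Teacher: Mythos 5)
The paper does not actually prove this statement: it is imported verbatim (with attribution) as Theorem 5 of \cite{Harrow2020}, originally Theorem 2 of \cite{Wocjan2008}, and used as a black box, so there is no in-paper proof to compare against. Your reconstruction follows exactly the route those references take — concatenate $l$ stages of $\frac{\pi}{3}$-amplitude amplification, realize each phase gate $R^{\omega_{\pi/3}}_{\ket{p_i}}$ approximately via phase estimation on the controlled walk operator (Theorem \ref{th:Romega}), split the error budget into the ideal-amplification tolerance $\epsilon_{\rm step}=\Theta(\epsilon/l)$ and the gate-approximation tolerance $\delta=\Theta(\epsilon/N_{\rm gate})$, and multiply the counts. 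Your identification of the hybrid/telescoping argument as the delicate step, and your resolution (substitute one approximate reflection at a time against the ideal run, whose ancillas remain exactly in $\ket{0}^{\otimes n_{\rm anc}}$, so that the $\le\delta$ deviations add by unitarity rather than compound), is precisely how the cited proofs close that loop. One bookkeeping caveat: your per-stage gate count carries the factor $1/\log\frac{1}{1-p}$ from the $\frac{\pi}{3}$-AA recursion, which for small $p$ grows like $1/p$ and is not dominated by the factor $\log\frac{1}{p}$ in the quoted complexity; this is immaterial for the constant overlap $p=\frac{9}{10e^2}$ actually used in Algorithm \ref{alg:QSA}, but it means your derivation does not literally reproduce the stated $p$-dependence for all $p\in(0,1)$ (it either needs the additional observation used in \cite{Harrow2020} or indicates the quoted bound is stated for the constant-overlap regime). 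Similarly, your $\log(1/\delta)$ contains an extra $\log(1/\epsilon)$ inside the ancilla-count expression relative to the quoted $n_{\rm anc}$; both are harmless up to polylogarithmic slack.
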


Besides, given the phase gate with $\omega=-1$, which is namely the reflection operator, we can use nondestructive amplitude estimation (NAE) \cite{Harrow2020}, a modification of QAE \cite{brassard2002quantum}, to estimate $|\braket{P_{\beta_{i}} | P_{\beta_{i+1}}}|^2$.

\begin{theorem}[Theorem 6 in \cite{Harrow2020}]
Given a quantum state $\ket{\phi}$ on a register $R$ and two operators 
$R_{\phi}=2\ket{\phi}\bra{\phi}-I$ and $R_{\phi^\prime}=2\ket{\phi^\prime}\bra{\phi^\prime}-I$, where $\ket{\phi^\prime}$ is another state on $R$, for any $\epsilon,\delta\in(0,1)$, there exists an quantum algorithm with following properties:
\begin{itemize}
    \item with probability at least $1-\delta$, outputs an $\epsilon$-approximation of $|\braket{\phi^\prime | \phi}|^2$ and a flag 1, and restores the state $\ket{\phi}$
    \item otherwise, output a flag 0
    \item uses $R_{\phi}$ and $R_{\phi^\prime}$ $O\left(\frac{\log\left(1/\delta\right)}{\epsilon}\right)$ times.
\end{itemize}
\end{theorem}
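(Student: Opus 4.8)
The plan is to reduce the problem to the standard two-dimensional geometry of amplitude estimation and then graft a coherent ``compute-and-uncompute'' layer on top so that the state is returned rather than destroyed. First I would restrict attention to the two-dimensional real subspace $\mathcal{S}=\mathrm{span}\{\ket{\phi},\ket{\phi^\prime}\}$ and write $|\braket{\phi^\prime|\phi}|^2=\sin^2\theta$ for $\theta\in[0,\pi/2]$. On $\mathcal{S}$ the Grover operator $Q:=R_{\phi^\prime}R_{\phi}$, being a product of two reflections, acts as a rotation by $2\theta$; its eigenstates $\ket{\psi_\pm}$ have eigenvalues $e^{\pm 2i\theta}$, and $\ket{\phi}$ is the balanced superposition $\ket{\phi}=\tfrac{1}{\sqrt 2}(e^{i\gamma}\ket{\psi_+}+e^{-i\gamma}\ket{\psi_-})$ for some phase $\gamma$. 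Since $Q$ is built solely from $R_{\phi}$ and $R_{\phi^\prime}$, every query to $Q$ costs one use of each reflection, so counting queries to $Q$ suffices.

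Second, I would invoke the amplitude-estimation primitive of \cite{brassard2002quantum}: quantum phase estimation of $Q$ applied to $\ket{\phi}$ yields an estimate of $\theta$, and hence of $a:=\sin^2\theta$, to additive accuracy $\epsilon$ using $O(1/\epsilon)$ applications of $Q$, with constant success probability. Both branches $\ket{\psi_\pm}$ carry the same information about $a$ because $\sin^2$ is even and symmetric about $\pi/2$, so the folding map $y\mapsto\sin^2(y/2)$ sends the two phase estimates $\widehat{2\theta}$ and $\widehat{-2\theta}$ to the same value of $a$.

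Third, for the nondestructive core, instead of measuring the phase register I would run phase estimation coherently, compute the folded estimate $\tilde a$ into a fresh output register, and then run phase estimation in reverse. Because $\tilde a$ takes the same value on both eigenbranches, in the event that the two branches fold into the same output bin the output register factors out, the inverse phase estimation returns the phase register to $\ket{0}$, and the system register is left in $\ket{\phi}$; measuring only the output register then reads off $\tilde a$ without disturbing $\ket{\phi}$. A consistency check on the phase register supplies the flag: flag $1$ when restoration succeeds and flag $0$ otherwise. Finally I would boost the confidence from constant to $1-\delta$ by repeating this nondestructive primitive $O(\log(1/\delta))$ times and taking the median of the returned estimates, giving the claimed $O(\log(1/\delta)/\epsilon)$ total queries.

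The main obstacle is the tension between nondestructiveness and confidence amplification. A single phase-estimation run has only polynomially decaying tails, so the amplitude that lands in ``inconsistent'' bins, where the two branches fold to different output values or an estimate strays outside the $\epsilon$-window, is only a constant, and this leaked amplitude both corrupts the restored state and obstructs clean uncomputation. The delicate part of the argument is therefore to bound this leakage, to show that conditioned on the flag being $1$ the returned state is genuinely close to $\ket{\phi}$ and the estimate is $\epsilon$-accurate with probability at least $1-\delta$, and to verify that the repetition-and-median scheme does not accumulate disturbance faster than it can be repaired, all while holding the query budget at $O(\log(1/\delta)/\epsilon)$ rather than the naive $O(1/(\delta\epsilon))$ that a single high-precision run would demand. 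I expect controlling this leakage, and if necessary using a round of amplitude amplification on the constant overlap $|\braket{\phi|\psi_\pm}|^2=\tfrac12$ to recover $\ket{\phi}$ from a collapsed eigenbranch, to be where the real work lies.
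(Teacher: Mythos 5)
This theorem is imported verbatim from \cite{Harrow2020}; the paper contains no proof of it, so the comparison is against the argument in that reference. Your setup is right (the two-dimensional subspace, $Q=R_{\phi^\prime}R_{\phi}$ as a rotation by $2\theta$, $\ket{\phi}$ as an equal superposition of the eigenvectors $\ket{\psi_\pm}$, and standard amplitude estimation plus median amplification for the $O(\log(1/\delta)/\epsilon)$ count). The gap is in your central mechanism for nondestructiveness, the coherent compute-fold-uncompute of the phase register. Phase estimation at precision $O(\epsilon)$ leaves only a constant fraction (about $8/\pi^2$) of the amplitude in the bins where the folded estimate is consistent across branches, so the residual that fails to uncompute is $\Theta(1)$ per run. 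Driving that leakage down to $\delta$ by increasing the QPE precision costs $O(1/(\delta\epsilon))$ queries, and running $O(\log(1/\delta))$ coherent copies and taking a median does not obviously repair the system register, since each copy independently deposits constant-weight garbage entangled with the state. You correctly identify this as ``where the real work lies,'' but you do not close it, and I do not believe it closes along that route within the stated budget.

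The proof in \cite{Harrow2020} sidesteps this entirely: it simply measures the phase register (accepting the $\epsilon$-accurate estimate with the usual boosted probability $1-\delta$), lets the system collapse to approximately one of the eigenstates $\ket{\psi_\pm}$, and then \emph{restores} $\ket{\phi}$ by exploiting $|\braket{\phi|\psi_\pm}|^2=\tfrac{1}{2}$. Concretely, one alternates the two-outcome projective measurement $\{\ket{\phi}\bra{\phi},\,I-\ket{\phi}\bra{\phi}\}$ (implementable from $R_{\phi}$) with a measurement onto the eigenspaces of $Q$ (implementable from phase estimation of $Q$); within the two-dimensional subspace each alternation returns $\ket{\phi}$ with probability $\tfrac{1}{2}$, so $O(\log(1/\delta))$ rounds succeed with probability $1-\delta$, and the flag records whether restoration succeeded. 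This Marriott--Watrous-style repair is exactly the idea you mention only parenthetically at the end (``a round of amplitude amplification on the constant overlap''); it should be the main mechanism, not the fallback, and promoting it to that role is what makes the query count and the restoration guarantee go through simultaneously.
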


Again, we can use only approximations of reflection operators.
\cite{Harrow2020} showed that, with probability at least $1-\delta$, NAE using approximate reflection operators instead of exact ones output an $\epsilon$-approximation of $|\braket{\Pi^\prime | \Pi}|^2$ for stationary distributions $\Pi$ and $\Pi^\prime$ of some Markov chains. 
We hereafter call this approximate NAE (ANAE) with accuracy $\epsilon$ and failure probability $\delta$.
The following theorem states on its complexity.

\begin{theorem}[Theorem 9 in \cite{Harrow2020}]
    Consider Markov chains $\mathcal{C}_1$ and $\mathcal{C}_2$ on a finite state space $\Omega$ with stationary distributions $\Pi_1$ and $\Pi_2$ and spectral gaps lower bounded by $\Delta\in(0,1)$.
    Suppose that the state $\ket{\Pi_1}$ is given.
    Then, there is a quantum algorithm with the following properties:
    \begin{itemize}
    \item with probability at least $1-\delta$, outputs an $\epsilon$-approximation of $|\braket{\Pi_1 | \Pi_2}|^2$ and a flag 1, and restores the state $\ket{\Pi_1}$
    \item otherwise, output a flag 0
    \item uses the controlled quantum walk operators for $\mathcal{C}_1$ and $\mathcal{C}_2$ $O\left(\frac{1}{\epsilon\sqrt{\Delta}}\log\left(\frac{1}{\epsilon}\right)\log\left(\frac{1}{\delta}\right)\right)$ times.
    \end{itemize}
\end{theorem}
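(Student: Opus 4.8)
The plan is to obtain this result by feeding the approximate reflection operators of Theorem \ref{th:Romega} into the nondestructive amplitude estimation (NAE) procedure of Theorem 6 in \cite{Harrow2020}, with the two states whose overlap is estimated taken to be the stationary states $\ket{\Pi_1}$ and $\ket{\Pi_2}$. Recall that NAE estimates $|\braket{\phi' | \phi}|^2$ given the exact reflections $R_\phi=2\ket{\phi}\bra{\phi}-I$ and $R_{\phi'}=2\ket{\phi'}\bra{\phi'}-I$; here I would set $\phi=\Pi_1$ and $\phi'=\Pi_2$. The exact reflections are unavailable, but Theorem \ref{th:Romega} with $\omega=-1$ supplies, for each $i\in\{1,2\}$, an approximate reflection $\tilde{R}^{-1}_{\ket{\Pi_i},\delta'}$ about $\ket{\Pi_i}$ (coinciding with $2\ket{\Pi_i}\bra{\Pi_i}-I$ up to a global phase that is irrelevant to the estimation), built from the controlled quantum walk operator of $\mathcal{C}_i$ with $O(\log(1/\delta')/\sqrt{\Delta})$ queries and error at most $\delta'$ on any input state.

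First I would record the behaviour of the exact NAE circuit: it invokes the two reflections a total of $N=O(\log(1/\delta)/\epsilon)$ times and, with probability at least $1-\delta$, outputs an $\epsilon$-approximation of $|\braket{\Pi_1 | \Pi_2}|^2$ together with a success flag while restoring $\ket{\Pi_1}$. Next I would substitute each exact reflection by its approximate counterpart $\tilde{R}^{-1}_{\ket{\Pi_i},\delta'}$, adjoining the $n_{\rm anc}$ ancillas that each approximate reflection returns to $\ket{0}^{\otimes n_{\rm anc}}$ up to norm $\delta'$. A hybrid (telescoping) argument then controls the deviation: replacing the unitaries one at a time, each swap perturbs the output by at most $\delta'$ in norm, so over a stretch of $K$ consecutive reflections the produced state is $K\delta'$-close to the ideal one; since closeness of two output states by $\eta$ shifts every measurement probability and the restored state by $O(\eta)$, the estimate, flag, and restoration guarantees degrade only by $O(K\delta')$. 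Multiplying the $N$ reflection invocations by the $O(\log(1/\delta')/\sqrt{\Delta})$ walk queries each finally yields the total query count, provided $\delta'$ can be chosen with $\log(1/\delta')=O(\log(1/\epsilon))$.

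The hard part will be keeping the accumulated error $K\delta'$ from forcing $\delta'$ to shrink like $1/\log(1/\delta)$, which would inflate the bound by a spurious $\log^2(1/\delta)$ factor. The resolution I would use is that the $\log(1/\delta)$ dependence of NAE comes from classical median-boosting of $O(\log(1/\delta))$ independent runs, each using only $K=O(1/\epsilon)$ consecutive reflections and each terminated by a measurement; coherent approximation error therefore accumulates only within a single run, so it suffices to keep $K\delta'=O(\epsilon)$, giving $\delta'=\mathrm{poly}(\epsilon)$ and hence $\log(1/\delta')=O(\log(1/\epsilon))$. With this choice each reflection costs $O(\log(1/\epsilon)/\sqrt{\Delta})$ walk queries and the total over all runs is $O\!\left(\frac{\log(1/\delta)}{\epsilon}\cdot\frac{\log(1/\epsilon)}{\sqrt{\Delta}}\right)=O\!\left(\frac{1}{\epsilon\sqrt{\Delta}}\log(1/\epsilon)\log(1/\delta)\right)$, as claimed. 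I would also need to check that nondestructiveness survives the approximation — that the ancillas being reset only up to $\delta'$ at each step does not prevent the genuine restoration of $\ket{\Pi_1}$ — which follows from the same $O(K\delta')$ state-closeness bound applied to the restored register.
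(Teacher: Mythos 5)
This statement is imported verbatim as Theorem~9 of \cite{Harrow2020}; the paper supplies no proof of its own, so there is nothing internal to compare against. Your sketch reconstructs essentially the argument used in the cited reference: plug the approximate reflections of Theorem~\ref{th:Romega} (with $\omega=-1$, which is $-(2\ket{\Pi_i}\bra{\Pi_i}-I)$, the sign being harmless) into the exact NAE of Theorem~6 of \cite{Harrow2020}, run a hybrid argument to bound the accumulated deviation, and observe that the $\log(1/\delta)$ factor comes from median-boosting over independent measured runs so that coherent error need only be controlled over the $K=O(1/\epsilon)$ reflections of a single run. That is the right structure and yields the stated query count. The one place you are slightly optimistic is the nondestructiveness across runs: the state handed to run $j+1$ is only the $O(K\delta')$-perturbed restoration from run $j$, so the residual on $\ket{\Pi_1}$ compounds over the $O(\log(1/\delta))$ repetitions rather than resetting; keeping the final restored state close to $\ket{\Pi_1}$ therefore forces $\delta'=O\bigl(\epsilon/\log(1/\delta)\bigr)$, which costs only an additive $\log\log(1/\delta)$ inside the per-reflection query count and does not change the claimed bound, but it should be stated rather than waved at.
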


Then, combining the above building blocks yields the method proposed in \cite{Harrow2020}.
It is summarized as Algorithm \ref{alg:QSA}.

\begin{algorithm}[H]
    \begin{algorithmic}[1]
    \REQUIRE {
        \ \\
        \begin{itemize}        
        \item Access to a unitary operator $O_{P_0}$ to generate $\ket{P_0}$
        \begin{equation}
            O_{P_0}\ket{0}=\ket{P_0} \label{eq:OP0}
        \end{equation}

        \item For any $\beta\in(0,1]$, an access to a quantum walk operator $U_\beta$ for a Markov chain that has $P_\beta$ in Eq. (\ref{eq:Pbeta}) as the stationary distribution and the spectral gap lower bounded $\Delta_{\rm min}$.



        \item Accuracy $\epsilon\in\mathbb{R}_+$ for the final state

        \item Failure probability $\eta\in(0,1)$
        
        \end{itemize}
    }

    \ENSURE{
        Either of 
        \begin{enumerate}
            \renewcommand{\labelenumi}{(\Alph{enumi})}
            \item sequence $\beta_0=0<\beta_1<\cdots<\beta_{l-1}<\beta_l=1$ such that $l\le l_{\rm max}$ and $|\braket{P_{\beta_i} | P_{\beta_{i+1}}}|^2\ge \frac{9}{10e^2}$ for any $i\in[l]_0$, and ${\rm flg}=1$
            \item ${\rm flg}=0$
        \end{enumerate}
    }

    \STATE Set $\widetilde{\ket{P_0}}=\ket{P_0}$.

    \FOR{$i=0, 1, ..., l_{\rm max}-1$}
    
    \STATE Find the largest $\beta^\prime\in(\beta_i,1]$ such that $|\braket{P_{\beta_i} | P_{\beta^\prime}}|^2\ge e^{-2}$ by binary search with precision $1/L_{\rm max}$.
    Here, $|\braket{P_{\beta_i} | P_{\beta^\prime}}|^2$ is computed by ANAE with accuracy $1/10e^2$ and failure probability $\eta/l_{\rm max}L_{\rm max}$, with $\widetilde{\ket{P_{\beta_i}}}$ used instead of $\ket{P_{\beta_i}}$.

    \IF{at least one ANAE in line 3 returns a flag 0}
    \STATE {Output ${\rm flg}=0$ and stop.}
    \ENDIF
    
    \STATE Let the result in line 3 be $\beta_{i+1}$.

    \IF{$\beta_{i+1}=1$}

    \STATE Output $\beta_0=0,\beta_1,...,\beta_{i+1}$ and ${\rm flg}=1$, and stop.

    \ENDIF

    \STATE Generate $\widetilde{\ket{P_{\beta_{i+1}}}}$ from $\widetilde{\ket{P_{\beta_i}}}$ by A$\frac{\pi}{3}$AA with accuracy $\frac{\epsilon}{l_{\rm max}}$ and overlap $\frac{9}{10e^2}$.

    \ENDFOR

    \STATE Output ${\rm flg}=0$.
    
    \caption{Quantum simulated annealing (Algorithm 1 in \cite{Harrow2020}, modified)}\label{alg:QSA}
    \end{algorithmic}
\end{algorithm}

Here, $l_{\rm max}:=\sqrt{\bar{L}\log\bar{L}}$, where $\bar{L}:=\mathbb{E}_{P_0}[L(x)]$, and $L_{\rm max}:=\max_{x\in\Omega} L(x)$.

The complexity of this algorithm is stated in Theorem \ref{th:QSA}.

\begin{theorem}[Theorem 10 in \cite{Harrow2020}]
    Algorithm \ref{alg:QSA} yields output (A) with probability at least $1-\eta$, calling operators in $\{U_\beta \ | \ \beta\in(0,1]\}$
    \begin{equation}
        O\left(\frac{l_{\rm max}}{\sqrt{\Delta_{\rm min}}}\left(\log^2 l_{\rm max}+\log  L_{\rm max}\log \left(\frac{l_{\rm max}L_{\rm max}}{\eta}\right)\right) \right)
    \end{equation}
    times.
    For the obtained $\beta_0,\beta_1,...,\beta_l$, using A$\frac{\pi}{3}$AA with accuracy $\epsilon$ and overlap $\frac{9}{10e^2}$ for Markov chains with stationary distributions $P_{\beta_0},P_{\beta_1},...,P_{\beta_l}$, we can generate the state
    \begin{equation}
    \widetilde{\ket{P_{\beta_l}}}:=\ket{P_{\beta_l}}\ket{0}^{\otimes n_{\rm anc}} + \ket{\xi},
    \end{equation}
    where $n_{\rm anc}=O\left(\log\left(\frac{1}{\Delta_{\rm min}}\right)\log\left(\frac{l_{\rm max}}{\epsilon}\right)\right)$ and $\ket{\xi}$ is an unnormalized state such that $\|\ket{\xi}\|\le\epsilon$.
    In this process, operators $U_{\beta_1},...,U_{\beta_l}$ are called
    \begin{equation}
        O\left(\frac{l_{\rm max}}{\sqrt{\Delta_{\rm min}}}\log^2\left(\frac{l_{\rm max}}{\epsilon}\right) \right) \label{eq:compQSA}
    \end{equation}
    times.
    \label{th:QSA}
\end{theorem}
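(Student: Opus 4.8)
The plan is to treat the two assertions separately: first the success probability and query cost of the schedule-finding loop (lines 2--13 of Algorithm \ref{alg:QSA}), and then the cost of the final preparation of $\widetilde{\ket{P_{\beta_l}}}$ once the schedule is fixed. Both reduce to assembling the already-stated component results — the ANAE bound (Theorem 9 of \cite{Harrow2020}), Grover's $\frac{\pi}{3}$-amplification bound, and the A$\frac{\pi}{3}$AA bound (Theorem 5) — with a union bound over failure events and one genuinely analytic ingredient, the bound $l\le l_{\rm max}$ on the number of temperature steps.

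First I would establish the success probability. Output (A) is returned precisely when every ANAE invoked in line 3 succeeds across all iterations. The binary search over $\beta^\prime\in(\beta_i,1]$ with precision $1/L_{\rm max}$ uses $O(\log L_{\rm max})$ ANAE calls per iteration, and there are at most $l_{\rm max}$ iterations, so the total number of ANAE calls is $O(l_{\rm max}\log L_{\rm max})$. Since each is run with failure probability $\eta/(l_{\rm max}L_{\rm max})$, a union bound gives total failure probability $O(\eta\log L_{\rm max}/L_{\rm max})\le\eta$. The accuracy $1/10e^2$ of ANAE together with the threshold $e^{-2}$ in the search guarantees that the accepted $\beta_{i+1}$ has true overlap at least $e^{-2}-1/10e^2=9/10e^2$, which is exactly the overlap asserted in output (A).

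The crux is showing that the schedule reaches $\beta=1$ within $l_{\rm max}=\sqrt{\bar L\log\bar L}$ steps. Writing $Z_\beta:=\sum_{x\in\Omega}P_0(x)e^{-\beta L(x)}$, a direct computation gives $|\braket{P_\beta|P_{\beta^\prime}}|^2=Z_{(\beta+\beta^\prime)/2}^2/(Z_\beta Z_{\beta^\prime})$; since $\log Z_\beta$ is convex in $\beta$, this quantity lies in $(0,1]$, and its decay is governed by the local curvature of $\log Z_\beta$, i.e. by $\mathrm{Var}_{P_\beta}(L)$. A step that keeps the overlap pinned at $e^{-2}$ therefore has size $\Theta(1/\sqrt{\mathrm{Var}_{P_\beta}(L)})$, so the number of steps is controlled by $\int_0^1\sqrt{\mathrm{Var}_{P_\beta}(L)}\,d\beta$, which one relates to $\bar L=\mathbb{E}_{P_0}[L]$ to obtain $l\le l_{\rm max}$. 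I expect this cooling-schedule estimate to be the main obstacle, as it is the only part requiring real analysis rather than bookkeeping; since the statement is quoted from \cite{Harrow2020}, I would invoke their schedule analysis here rather than reprove it.

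It remains to count walk-operator queries. Each ANAE call with constant accuracy and failure probability $\eta/(l_{\rm max}L_{\rm max})$ costs $O(\frac{1}{\sqrt{\Delta_{\rm min}}}\log(l_{\rm max}L_{\rm max}/\eta))$ queries by Theorem 9; multiplying by the $O(l_{\rm max}\log L_{\rm max})$ calls produces the $\frac{l_{\rm max}}{\sqrt{\Delta_{\rm min}}}\log L_{\rm max}\log(l_{\rm max}L_{\rm max}/\eta)$ term. The single A$\frac{\pi}{3}$AA step of line 12 prepares the next approximate state from the current one; because that state is consumed only by subsequent ANAE calls, which tolerate $\Theta(1)$ input error, it suffices to prepare it to precision polynomial in $1/l_{\rm max}$ (so the cumulative error over all steps stays $O(1)$), whence each step costs $O(\frac{1}{\sqrt{\Delta_{\rm min}}}\log^2 l_{\rm max})$ queries — $O(\log l_{\rm max})$ approximate reflections, each from $O(\log l_{\rm max})$ walk calls — and summing over the $\le l_{\rm max}$ iterations yields the $\frac{l_{\rm max}}{\sqrt{\Delta_{\rm min}}}\log^2 l_{\rm max}$ term; the two together give the first claimed bound. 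Finally, for the fixed schedule $\beta_0,\dots,\beta_l$ with $l\le l_{\rm max}$ and constant overlap $9/10e^2$, applying Theorem 5 with target accuracy $\epsilon$ gives $O(\frac{l_{\rm max}}{\sqrt{\Delta_{\rm min}}}\log^2(l_{\rm max}/\epsilon))$ queries and the guarantee $\|\ket{\xi}\|\le\epsilon$; the $\log\frac1p$ factor there is $O(1)$, so this is exactly Eq. (\ref{eq:compQSA}), completing the proof.
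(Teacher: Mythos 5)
This theorem is imported verbatim from \cite{Harrow2020} (Theorem 10 there); the paper offers no proof of it and simply cites the reference, so there is no in-paper argument to compare against. Your reconstruction is consistent with how the result is actually obtained: it correctly assembles the stated building blocks (the ANAE complexity bound with a union bound over the $O(l_{\rm max}\log L_{\rm max})$ binary-search calls at failure probability $\eta/l_{\rm max}L_{\rm max}$, the overlap margin $e^{-2}-\tfrac{1}{10e^2}=\tfrac{9}{10e^2}$, and the A$\frac{\pi}{3}$AA cost for the final preparation), and it correctly isolates the one genuinely analytic ingredient --- the cooling-schedule bound $l\le l_{\rm max}=\sqrt{\bar{L}\log\bar{L}}$ via the log-convexity of $Z_\beta$ and $\mathrm{Var}_{P_\beta}(L)$ --- which, like the paper, you appropriately defer to \cite{Harrow2020} rather than reprove. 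Your observation that the intermediate states in the schedule-finding phase need only $\mathrm{poly}(1/l_{\rm max})$ precision (since they feed only into constant-accuracy ANAE calls) is the right way to reconcile the $\log^2 l_{\rm max}$ term in the first bound with the $\epsilon/l_{\rm max}$ accuracy written in line 12 of Algorithm \ref{alg:QSA}; no gaps beyond the deliberately deferred schedule analysis.
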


\subsection{Quantum Monte Carlo integration}

\cite{Montanaro2015} presented a quantum algorithm to calculate an expected value of a random variable, which we call QMCI in this paper.

\begin{theorem}[Theorem 2.3 in \cite{Montanaro2015}]
    Let $P$ be a probability distribution on a finite sample space $\Omega\subset\mathbb{R}^d$.
    Suppose that we have a quantum circuit $O_{P}$ on a two-register system that acts as $O_{P}\ket{0}\ket{0}=\sum_{x\in\Omega} \sqrt{P(x)}\ket{\phi_x}\ket{x}$, where $\ket{\phi_x}$ is some state on the first register.
    Also suppose that, for a function $F:\Omega\rightarrow[0,1]$, we have a quantum circuit $O_{F}$ on a two-register system that acts as $O_{F}\ket{x}\ket{0}=\ket{x}\ket{F(x)}$ for any $x\in\Omega$.
    Then, for any $\epsilon\in\mathbb{R}_+$ and $\delta\in(0,1)$, there is a quantum algorithm that, with probability at least $1-\delta$, outputs an $\epsilon$-approximation of $\mu_F:=\sum_{x\in\Omega}P(x)F(x)$, making
    \begin{equation}
        O\left(\frac{1}{\epsilon}\log\delta^{-1}\right) \label{eq:compQMCIBound}
    \end{equation}
    uses of $O_{P}$ and $O_F$.
    \label{th:QMCIBound}
\end{theorem}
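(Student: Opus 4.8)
The plan is to reduce the estimation of $\mu_F$ to quantum amplitude estimation (QAE) \cite{brassard2002quantum} applied to a single composite unitary, and then to amplify the success probability by a median-of-repetitions argument. The only ingredients beyond the two given oracles are a controlled rotation and the standard QAE routine.

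First I would build a unitary $\mathcal{A}$ whose action encodes $\mu_F$ as the probability that a flag qubit is found in $\ket{1}$. Starting from $\ket{0}\ket{0}\ket{0}\ket{0}$, apply $O_{P}$ to the first two registers to obtain $\sum_{x\in\Omega}\sqrt{P(x)}\ket{\phi_x}\ket{x}$, then apply $O_F$ to write $F(x)$ into a value register, and finally apply a rotation $R_F$ that, conditioned on $\ket{F(x)}$, rotates a fresh ancilla by $\arcsin\sqrt{F(x)}$, i.e. $\ket{F(x)}\ket{0}\mapsto\ket{F(x)}\bigl(\sqrt{1-F(x)}\ket{0}+\sqrt{F(x)}\ket{1}\bigr)$. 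Because the $\ket{x}$ are orthonormal computational-basis states and each $\ket{\phi_x}$ is a unit vector (so that $O_P\ket{0}\ket{0}$ is normalized), the distinct branches do not interfere and the ancilla is measured in $\ket{1}$ with probability exactly $\sum_{x\in\Omega}P(x)F(x)=\mu_F$. Thus $\mathcal{A}\ket{0}^{\otimes}=\sqrt{\mu_F}\ket{\Psi_1}\ket{1}+\sqrt{1-\mu_F}\ket{\Psi_0}\ket{0}$, and one use of $\mathcal{A}$ or $\mathcal{A}^\dagger$ costs exactly one use each of $O_P$ and $O_F$.

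Next I would run QAE on $\mathcal{A}$, marking the ancilla value $\ket{1}$ as the ``good'' subspace, using $t$ applications of the Grover-type iterate $Q=\mathcal{A}S_0\mathcal{A}^\dagger S_{\rm good}$. The standard amplitude-estimation bound yields an estimate $\tilde{\mu}$ with $|\tilde{\mu}-\mu_F|\le 2\pi\sqrt{\mu_F(1-\mu_F)}/t+\pi^2/t^2$ with probability at least $8/\pi^2$. Since $\sqrt{\mu_F(1-\mu_F)}\le\tfrac12$ uniformly over $\mu_F\in[0,1]$, choosing $t=\lceil C/\epsilon\rceil$ for a suitable constant $C$ forces the right-hand side below $\epsilon$, so a single QAE run already returns an $\epsilon$-approximation of $\mu_F$ with constant success probability, using $O(t)=O(1/\epsilon)$ queries to $O_P$ and $O_F$.

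Finally, to reach failure probability $\delta$, I would repeat the QAE procedure $K=O(\log\delta^{-1})$ times independently and output the median of the $K$ estimates; a Chernoff bound shows that more than half the runs succeed with probability at least $1-\delta$, and whenever this holds the median lies within $\epsilon$ of $\mu_F$. The total cost is $K\cdot O(t)=O\bigl(\tfrac{1}{\epsilon}\log\delta^{-1}\bigr)$, matching the claim. The step I expect to require the most care is the single-run error analysis: one must check that the $\sqrt{\mu_F(1-\mu_F)}/t$ scaling of the QAE bound indeed collapses to an additive $\epsilon$ guarantee over the whole range of $\mu_F$ with $t=O(1/\epsilon)$, and that QAE's constant-probability guarantee is genuinely boosted to $1-\delta$ by the \emph{median} (not the mean), which is what makes the confidence enter only logarithmically.
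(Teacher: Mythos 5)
Your proposal is correct, and it is essentially the standard argument behind Theorem 2.3 of \cite{Montanaro2015}: encode $\mu_F$ as the success amplitude of a composite unitary built from $O_P$, $O_F$ and a controlled rotation, run amplitude estimation with $t=O(1/\epsilon)$ iterations, and boost the constant success probability to $1-\delta$ by taking the median of $O(\log\delta^{-1})$ independent runs (the powering lemma). The paper does not reprove this result but simply imports it from \cite{Montanaro2015}, and your reconstruction matches the proof given there.
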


The above theorem is on a version of the algorithm for a bounded integrand $F$.
\cite{Montanaro2015} presented another version for an integrand with a bounded variance.
We now present a modification of this so that it can be used in QSA.
Namely, we aim to obtain not an approximate value of an expectation but a quantum state in which approximations are encoded, making no measurement.
Besides, we concentrate on the situation that we compute the mean of a finite number of real numbers, which is sufficient for our purpose.

\begin{theorem}
    Let $M$ be a positive integer and $\mathcal{X}$ be a set of $M$ real numbers, $X_0,...,X_{M-1}$, whose mean is $\mu:=\frac{1}{M}\sum_{i=0}^{M-1}X_i$ and sample variance satisfies $\frac{1}{M}\sum_{i=0}^{M-1}X_i^2 - \mu^2 \le \sigma^2$ with some $\sigma\in\mathbb{R}_+$.
    Suppose that we are given an access to a unitary operator $O_\mathcal{X}$ that acts as
    \begin{equation}
        O_{\mathcal{X}}\ket{i}\ket{0}=\ket{i}\ket{X_i}, \label{eq:OX}
    \end{equation}
    for any $i\in[M]_0$.
    Then, for any 
    $\epsilon\in\mathbb{R}_+$ and $\delta\in(0,1)$, we have an access to a unitary operator $O_{\mathcal{X},\epsilon,\delta,\sigma}^{\rm mean}$ that acts on a system of two registers $R_1$ and $R_2$ as
    \begin{equation}
        O_{\mathcal{X},\epsilon,\delta,\sigma}^{\rm mean}\ket{0}_{R_1}\ket{0}_{R_2}= \ket{0}_{R_1}\ket{\tilde{\mu}}_{R_2} + \gamma\ket{\psi}_{R_1,R_2}, \label{eq:OmeanX}
    \end{equation}
    where $\tilde{\mu}$ is an $\epsilon$-approximation of $\mu$, $\ket{\psi}_{R_1,R_2}$ is a state on the entire system,
    and $\gamma\in\mathbb{C}$ satisfies $|\gamma|^2\le\delta$.
    In $O_{\mathcal{X},\epsilon,\delta,\sigma}^{\rm mean}$, $O_\mathcal{X}$ is used
    \begin{equation}
        O\left(\frac{\sigma}{\epsilon}\log^{3/2}\left(\frac{\sigma}{\epsilon}\right)\log\log\left(\frac{\sigma}{\epsilon}\right)\log\left(\frac{1}{\delta}\right)\right) \label{eq:compQMCI}
    \end{equation}
    time.
    The total qubit number in the system of $R_1$ and $R_2$ is of order
    \begin{equation}
        O\left(\left(\log M+\log\left(\frac{\sigma}{\epsilon}\right)\right)\log\left(\frac{\sigma}{\epsilon}\right)\log\log\left(\frac{\sigma}{\epsilon}\right)\log\delta^{-1}\right).
        \label{eq:qubitQMCI}
    \end{equation} 
    \label{th:QMCI}
\end{theorem}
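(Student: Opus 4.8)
The plan is to build $O^{\rm mean}_{\mathcal{X},\epsilon,\delta,\sigma}$ by taking the bounded-variance mean estimation algorithm of \cite{Montanaro2015} and making every step coherent, so that the estimate is written into the register $R_2$ rather than read out by measurement. First I would reduce to the normalized case: rescaling $X_i \mapsto X_i/\sigma$ turns the variance bound into $\frac{1}{M}\sum_i (X_i/\sigma)^2 - (\mu/\sigma)^2 \le 1$ and the target accuracy into $\epsilon/\sigma$, which is why the query count in Eq. (\ref{eq:compQMCI}) depends only on the ratio $\sigma/\epsilon$; at the end I multiply the register value back by $\sigma$. Throughout, the distribution on the index register is the uniform one $i\mapsto 1/M$, prepared by Hadamards on $\log M$ qubits, and $O_\mathcal{X}$ in Eq. (\ref{eq:OX}) plays the role of the function oracle feeding the amplitude-estimation primitive.

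Next I would decompose the normalized integrand dyadically, as in \cite{Montanaro2015}: split $X_i$ into its nonnegative and nonpositive parts, and each of those into a bulk piece clipped to $[0,1)$ together with tail pieces supported on $[2^{k-1},2^k)$ for $k=1,\dots,K$ with $K=O(\log(\sigma/\epsilon))$. The second-moment bound gives $\mathbb{E}[|X|\mathbf{1}_{|X|>2^K}]\le 2^{-K}$, so choosing $2^K=\Theta(\sigma/\epsilon)$ makes the truncated tail contribute only $O(\epsilon/\sigma)$ to the mean and is therefore harmless. For each bounded piece I would invoke the amplitude-estimation primitive underlying Theorem \ref{th:QMCIBound}, but run it coherently: amplitude estimation is phase estimation on a Grover-type operator, so by simply declining to measure the phase register one obtains a unitary that leaves a superposition peaked at the true value of that piece. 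Reversible arithmetic that sums the per-piece estimates (each rescaled by the appropriate $2^k$) into $R_2$ then assembles $\tilde\mu$.

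The delicate part, and where I expect the main obstacle, is converting the \emph{success probability} of each coherent amplitude estimation into a bound on the \emph{squared amplitude} $|\gamma|^2 \le \delta$ of the discarded subspace, while guaranteeing that the good subspace carries a single definite value $\ket{\tilde\mu}$ rather than a broad superposition. To this end I would run $O(\log(1/\delta))$ independent coherent estimations of each piece and compute their median into an auxiliary register by a reversible circuit; median-of-means concentration shows that, on a subspace of squared amplitude at least $1-\delta/K$, the rounded median collapses onto one fixed grid value that is $(\epsilon/\sigma)$-close to that piece's true mean. A union bound over the $O(\log(\sigma/\epsilon))$ pieces then yields the stated form $\ket{0}_{R_1}\ket{\tilde\mu}_{R_2}+\gamma\ket{\psi}_{R_1,R_2}$ with $|\gamma|^2\le\delta$, where $R_1$ collects the phase and median ancillas, restored to $\ket{0}$ on the good branch. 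The query count multiplies the per-piece cost $O\!\left((\sigma/\epsilon)\,\mathrm{polylog}(\sigma/\epsilon)\right)$ by the $O(\log(1/\delta))$ median copies, with the $\log^{3/2}$ and $\log\log$ factors of Eq. (\ref{eq:compQMCI}) arising from the internal precision of phase estimation and the number of dyadic pieces, exactly as in \cite{Montanaro2015}. Finally, Eq. (\ref{eq:qubitQMCI}) follows by accounting for the $\log M$ index qubits, the $\log(\sigma/\epsilon)$ value qubits, and the phase and median ancillas, each scaled by the relevant polylogarithmic factors.
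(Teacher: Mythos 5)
Your overall architecture---run a coherent, measurement-free version of Montanaro's bounded-variance estimator and then force the output onto a single basis state by rounding and uncomputing the ancillas---contains the paper's key idea, but you rebuild the estimator from scratch, whereas the paper's proof simply invokes Theorem 5 of \cite{Miyamoto2022} (restated as Theorem \ref{th:QMCIorg}) as a black box: it applies $\tilde{O}_{\mathcal{X},\epsilon^\prime,\delta^\prime,\sigma}^{\rm mean}$ with $\epsilon^\prime=2^{b-1}$, $b=\lfloor\log_2\epsilon\rfloor$, $\delta^\prime=\delta/4$, copies the bits of the estimate from position $b$ upward into a fresh register, and then applies $(\tilde{O}_{\mathcal{X},\epsilon^\prime,\delta^\prime,\sigma}^{\rm mean})^\dagger$. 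Your reconstruction has a genuine gap in the dyadic-decomposition step: the truncation argument requires a bound on the \emph{second moment} of the rescaled variable, not on its variance. After the rescaling $X_i\mapsto X_i/\sigma$ you have $\frac{1}{M}\sum_i(X_i/\sigma)^2=1+(\mu/\sigma)^2$, which is unbounded when $|\mu|\gg\sigma$, so the tail bound $\mathbb{E}\left[|X|\mathbf{1}_{|X|>2^K}\right]\le 2^{-K}$ that you invoke is false in general, and truncating at $2^K=\Theta(\sigma/\epsilon)$ can discard an uncontrolled fraction of the mean. Montanaro repairs this by first drawing one classical sample to obtain a rough center $\tilde{m}$ with $|\tilde{m}-\mu|=O(\sigma)$ (Chebyshev) and running the decomposition on $(X-\tilde{m})/\sigma$; in the fully coherent setting demanded by Eq. (\ref{eq:OmeanX}) that preliminary measurement must itself be replaced by something unitary, which is precisely the bookkeeping the paper sidesteps by citing \cite{Miyamoto2022}.

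Two secondary points. First, the median of $O(\log(1/\delta))$ coherent phase estimations does not by itself concentrate on one grid value---adjacent grid points each retain constant amplitude---so your argument rests entirely on the subsequent rounding at a coarser scale; that is the same device the paper uses (and both arguments share the caveat that rounding is not constant on an $\epsilon^\prime$-neighborhood of $\mu$ when $\mu$ lies near a rounding boundary). Second, to actually reach the form $\ket{0}_{R_1}\ket{\tilde{\mu}}_{R_2}+\gamma\ket{\psi}_{R_1,R_2}$ you must copy the rounded value into $R_2$ and then run the entire estimation circuit in reverse; asserting that the ancillas are ``restored to $\ket{0}$ on the good branch'' omits the compute--copy--uncompute structure and the accompanying triangle-inequality bookkeeping by which the paper budgets $\delta^\prime=\delta/4$ so that the residual after uncomputation has norm at most $2\sqrt{\delta^\prime}=\sqrt{\delta}$, i.e. $|\gamma|^2\le\delta$.
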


Although this theorem resembles Theorem 5 in \cite{Miyamoto2022}, there is a following difference.
$O_{\mathcal{X},\epsilon,\delta,\sigma}^{\rm mean}$ in \cite{Miyamoto2022}, which we rename $\tilde{O}_{\mathcal{X},\epsilon,\delta,\sigma}^{\rm mean}$, generates a superposition of $\ket{y_1},\ket{y_2},...$, where $\{y_i\}$ are real numbers close to $\mu$.
On the other hand, the state in Eq. (\ref{eq:OmeanX}) is almost equal to a product state of $\ket{\tilde{\mu}}$, a computational basis state corresponding to one approximation of $\mu$, and $\ket{0}$, except a small residual term $\gamma\ket{\psi}$.
This is realized by combining $\tilde{O}_{\mathcal{X},\epsilon,\delta,\sigma}^{\rm mean}$ and rounding.
Including this point, the proof of Theorem \ref{th:QMCI} is presented in Appendix \ref{sec:QMCIDetail}.

\section{Proposed algorithm \label{sec:ourAlg}}

Now, let us present the proposed algorithm, the quantum MH algorithm with the target distribution estimated by QMCI.

\subsection{Modified quantum walk operator \label{sec:ourWalkOp}}

We start from generalizing the quantum walk operator in Eq. (\ref{eq:walkOp}) for Ising models to that for the Markov chain generated by Algorithm \ref{alg:MH} with a general finite state space $\Omega\subset\mathbb{R}^d$.
We define
\begin{equation}
    U=RV^\dagger B^\dagger SFBV.  \label{eq:walkOpOurs}
\end{equation}
This acts on a system of two quantum registers $R_{\rm S}$ and $R_{\rm M}$, which now have a sufficient number of qubits to represent real vectors, and a qubit $R_{\rm C}$.
$V$ acts on the system of $R_{\rm S}$ and $R_{\rm M}$ as
\begin{equation}
    V\ket{x}_{R_{\rm S}}\ket{0}_{R_{\rm M}}=\ket{x}_{R_{\rm S}}\sum_{\Delta x\in\Omega_x} \sqrt{T(x,x+\Delta x)}\ket{\Delta x}_{R_{\rm M}} \label{eq:V}
\end{equation}
for any $x\in\Omega$, where
\begin{equation}
    \Delta\Omega_x := \left\{\Delta x \in \mathbb{R}^d \  \middle| \ x+\Delta x\in\Omega \right\}
\end{equation}
is the set of possible moves in a transition from $x$.
$B$
acts as
\begin{align}
    & B\ket{x}_{R_{\rm S}}\ket{\Delta x}_{R_{\rm M}}\ket{\phi}_{R_{\rm C}}= \nonumber\\
    & \ \ \ket{x}_{R_{\rm S}}\ket{\Delta x}_{R_{\rm M}} \nonumber \\
    & \ \ \otimes
    \begin{pmatrix}
    \sqrt{1-A(x,x+\Delta x)} & -\sqrt{A(x,x+\Delta x)} \\
    \sqrt{A(x,x+\Delta x)} & \sqrt{1-A(x,x+\Delta x)}
    \end{pmatrix}
    \ket{\phi}_{R_{\rm C}}
\end{align}
for any $x\in\Omega$, $\Delta x \in \Omega_x$ and state $\ket{\phi}$ on $R_{\rm C}$.
$F$ makes a state transition, which is implemented by an adder gate controlled by $R_{\rm C}$, that is,
\begin{align}
    &F\ket{x}_{R_{\rm S}}\ket{\Delta x}_{R_{\rm M}}\ket{\phi}_{R_{\rm C}}= \nonumber \\
    & \quad
    \begin{cases}
    \ket{x}_{R_{\rm S}}\ket{\Delta x}_{R_{\rm M}}\ket{0}_{R_{\rm C}} & {\rm if} \ \ket{\phi}=\ket{0}_{R_{\rm C}} \\
    \ket{x+\Delta x}_{R_{\rm S}}\ket{\Delta x}_{R_{\rm M}}\ket{1}_{R_{\rm C}} & {\rm if} \ \ket{\phi}_{R_{\rm C}}=\ket{1}_{R_{\rm C}}
    \end{cases}.
\end{align}
The unitary $S$, for which Eq. (\ref{eq:walkOp}) has no counterpart, acts on the system of $R_{\rm M}$ and $R_{\rm C}$ to flip the sign of the value on $R_{\rm M}$ under the control by $R_{\rm C}$:
\begin{equation}
    S\ket{\Delta x}_{R_{\rm M}}\ket{\phi}_{R_{\rm C}}=
    \begin{cases}
    \ket{\Delta x}_{R_{\rm M}}\ket{0}_{R_{\rm C}} & {\rm if} \ \ket{\phi}_{R_{\rm C}}=\ket{0}_{R_{\rm C}} \\
    \ket{-\Delta x}_{R_{\rm M}}\ket{1}_{R_{\rm C}} & {\rm if} \ \ket{\phi}_{R_{\rm C}}=\ket{1}_{R_{\rm C}}
    \end{cases}.
\end{equation}
In other words, $S$ converts the move from $x$ to $y$ to the inverse move from $y$ to $x$.
We can consider that an identity operator is contained in Eq. (\ref{eq:walkOp}) as a counterpart for $S$, since any spin flip is the inverse transform of itself.
Finally, $R$ is same as $R_{\rm IS}$ in Eq. (\ref{eq:R}).
$U$ in Eq. (\ref{eq:walkOpOurs}) also has the property same as $U_{\rm IS}$ in Eq. (\ref{eq:walkOp}), which is proved in Appendix \ref{sec:proofWalkOp}.

\begin{theorem}
    Consider the Markov chain generated by Algorithm \ref{alg:MH} and denote by $\Delta$ its spectral gap.
    Define
    \begin{equation}
    \mathcal{A}:={\rm span}\{\ket{x}_{R_{\rm S}}\ket{0}_{R_{\rm M}}\ket{0}_{R_{\rm C}} | x\in\Omega\},\mathcal{B}:=V^\dagger B^\dagger SFBV\mathcal{A}. \label{eq:subspaceB}
    \end{equation}
    Then, on $\mathcal{A}+\mathcal{B}$, $\ket{P}$ is the unique eigenstate of $U$ with eigenvalue 1, and any other eigenvalue is written as $e^{i\theta}$ with $\theta\in\mathbb{R}$ such that $|\theta|\ge\arccos(1-\Delta)$.
    \label{th:phasegapOurs}
\end{theorem}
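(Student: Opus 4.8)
The plan is to treat $U=RM$ as a product of two reflections, exactly as in the Ising case of Theorem~\ref{th:phasegap} and the construction of \cite{Lemieux2020efficientquantum}, and to reduce its spectrum on $\mathcal{A}+\mathcal{B}$ to that of the discriminant matrix of the chain. Here I write $M:=V^\dagger B^\dagger SFBV$, so that $\mathcal{B}=M\mathcal{A}$, and $R=2\Lambda_0-I$ with $\Lambda_0=I_{R_{\rm S}}\otimes\ket{0}\bra{0}_{R_{\rm M}}\otimes\ket{0}\bra{0}_{R_{\rm C}}$ the orthogonal projector onto $\mathcal{A}$; thus $R$ is the reflection about $\mathcal{A}$. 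The two facts I would establish are (i) $M$ is a Hermitian involution, and (ii) in the orthonormal basis $\ket{a_x}:=\ket{x}_{R_{\rm S}}\ket{0}_{R_{\rm M}}\ket{0}_{R_{\rm C}}$ of $\mathcal{A}$ the matrix elements $\langle a_x|M|a_y\rangle$ reproduce the discriminant matrix $D_{xy}:=\sqrt{W_{xy}W_{yx}}$ of the chain, with $W$ as in Eq.~(\ref{eq:tranMat}).

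For (i), note $S^\dagger=S$ and $S^2=I$, and verify the single operator identity $SFS=F^\dagger$: on the $R_{\rm C}=\ket{0}$ branch all three factors act trivially, while on the $R_{\rm C}=\ket{1}$ branch $S$ negates the move, $F$ adds the (now negated) move to $R_{\rm S}$, and $S$ restores its sign, so that $\ket{x}\ket{\Delta x}\ket{1}\mapsto\ket{x-\Delta x}\ket{\Delta x}\ket{1}$, which is exactly $F^{-1}$. Consequently $F^\dagger S=SF$, and
\[
M^\dagger=V^\dagger B^\dagger F^\dagger S BV=V^\dagger B^\dagger SFBV=M,
\]
so $M$ is unitary and self-adjoint, whence $M^2=I$. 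This is where the new factor $S$ earns its place: it plays the role taken in the Ising case by the self-inverse spin flip $F_{\rm IS}$, for which $F_{\rm IS}^2=I$ holds automatically.

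For (ii), I would push $\ket{a_y}$ through $BV$, split off the $R_{\rm C}=\ket{0}$ and $\ket{1}$ branches, and apply $SF$; the rejected ($R_{\rm C}=\ket{0}$) branch contributes the diagonal term $W_{yy}$, while on the accepted branch $S$ turns the move $\Delta y$ into $-\Delta y$, converting a move $y\to y+\Delta y$ into its reverse. Taking the overlap with $BV\ket{a_x}$ then forces $y=x+\Delta x$ and pairs the forward factor $\sqrt{T(x,y)A(x,y)}$ with the reverse factor $\sqrt{T(y,x)A(y,x)}$, giving $\langle a_x|M|a_y\rangle=\sqrt{W_{xy}W_{yx}}=D_{xy}$ with $A$ as in Eq.~(\ref{eq:accRatio}). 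I expect this overlap computation to be the main obstacle, since it requires careful bookkeeping of the three registers and is precisely the step where the design of $S$ must conspire to produce the symmetric product $W_{xy}W_{yx}$. Because the chain is reversible (detailed balance holds by the MH construction) and irreducible, $D=\Pi^{1/2}W\Pi^{-1/2}$ with $\Pi={\rm diag}(P(x))$, so $D$ is symmetric with the same spectrum as $W$: by Lemmas~12.1 and 12.2 of \cite{levin2017markov} its largest eigenvalue is $1$, non-degenerate, with eigenvector $(\sqrt{P(x)})_{x\in\Omega}$, and all other eigenvalues have modulus at most $1-\Delta$.

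Finally I would assemble the blocks. Diagonalizing $D$ with real orthonormal eigenvectors $c^{(j)}$ (eigenvalues $\lambda_j$) yields $\ket{a^{(j)}}:=\sum_x c^{(j)}_x\ket{a_x}\in\mathcal{A}$ and $\ket{b^{(j)}}:=M\ket{a^{(j)}}\in\mathcal{B}$; using $M^2=I$ together with $\langle a_x|M|a_y\rangle=D_{xy}$ one checks that the pairs $(\ket{a^{(j)}},\ket{b^{(j)}})$ span mutually orthogonal, $U$-invariant subspaces that exhaust $\mathcal{A}+\mathcal{B}$. On the block labelled by $\lambda_j$ one computes $U\ket{a^{(j)}}=2\lambda_j\ket{a^{(j)}}-\ket{b^{(j)}}$ and $U\ket{b^{(j)}}=\ket{a^{(j)}}$, whose characteristic polynomial is $\mu^2-2\lambda_j\mu+1$, with roots $e^{\pm i\theta_j}$ satisfying $\cos\theta_j=\lambda_j$. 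The value $\lambda=1$ gives the single eigenvalue $1$ of $U$, the block collapsing to the one-dimensional span of $\sum_x\sqrt{P(x)}\ket{a_x}=\ket{P}$, and its uniqueness follows from that of the top eigenvalue of $D$; every remaining eigenvalue obeys $\lambda_j\le 1-\Delta$, hence $|\theta_j|=\arccos\lambda_j\ge\arccos(1-\Delta)$, which is the claimed phase gap.
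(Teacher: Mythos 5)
Your proof is correct, and its two pillars are exactly the paper's: your step (i), that $M=V^\dagger B^\dagger SFBV$ is a self-adjoint involution via $SFS=F^\dagger$, is the content of the paper's Lemma~\ref{lem:SF} ($SFSF=I$), and your step (ii), the identity $\langle a_x|M|a_y\rangle=\sqrt{W_{xy}W_{yx}}$, is the paper's Lemma~\ref{lem:SameEigW}. Where you genuinely diverge is the final assembly. The paper observes that $U^2$ acts as $R_{\mathcal{A}}R_{\mathcal{B}}$ on $\mathcal{A}+\mathcal{B}$, invokes Szegedy's Theorem~1 to get the spectrum of that product of reflections from the singular values of $V_{\mathcal{A}}^\dagger V_{\mathcal{B}}$, takes square roots, and then needs a separate direct computation (Lemma~\ref{lem:Peig}, again via detailed balance) to resolve the sign ambiguity and certify that the unique real unit eigenvalue of $U$ is $+1$ with eigenvector $\ket{P}$. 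You instead diagonalize the symmetric discriminant $D$ and block-decompose $U$ itself into the invariant planes ${\rm span}\{\ket{a^{(j)}},M\ket{a^{(j)}}\}$, reading off $\mu^2-2\lambda_j\mu+1$ directly; this makes the $+1$ eigenvalue and its eigenvector $\sum_x\sqrt{P(x)}\ket{a_x}=\ket{P}$ fall out of the $\lambda_j=1$ block with no extra work, and it handles negative eigenvalues of $W$ uniformly via $\theta_j=\arccos\lambda_j$ rather than through the $-e^{\pm i\theta}$ and $\pm i$ case analysis. The only points to tighten in a full write-up are the ones you already flag: the register bookkeeping in the overlap computation (in particular that the rejected branch plus the $\Delta y=0$ accepted branch together reproduce the diagonal entry $W_{yy}$ of Eq.~(\ref{eq:tranMat})), and the degenerate blocks $\lambda_j=\pm 1$ where $\ket{b^{(j)}}=\pm\ket{a^{(j)}}$ and the plane collapses to a line.
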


Let us consider how to implement the building-block operators in $U$.
$F$ and $S$ are an addition and a sign flip controlled by the qubit $R_{\rm C}$, respectively.
Various quantum circuits for arithmetic have been proposed so far (see \cite{MunosCoreas2022} as a review on circuits for four arithmetic operations and \cite{Bhaskar2016,haner2018} as studies on circuits for elementary functions), and making them controlled is straightforward.
$R$ is an operator that multiplies $-1$ to the state vector when all the qubits in $R_{\rm M}$ and $R_{\rm C}$ take $\ket{0}$ and thus implemented with a multi-controlled Pauli $Z$ gate.
$V$ is a circuit for loading a probability distribution into a quantum state, which have been also studied widely so far.
If $T$ can be calculated by some arithmetic, $V$ can be implemented by so-called Grover-Rudolph method \cite{grover2002creating}, using a logarithmic number of arithmetic circuits with respect to the number of grid points for discrete approximation.
Recently, some methods that avoid usage of arithmetic circuits have been proposed \cite{Sanders2019,wang2021fast,Wang_2022,rattew2022preparing,Bausch2022fastblackboxquantum}, including variational ones such as quantum generative adversarial network \cite{DallaireDemers2018,zoufal2019quantum,SITU2020193,Stein2021,Anand2021,assouel2022quantum,Agliardi2022,kasture2022protocols}.

Compared with these operators, $B$ can be costly in some situations.
Specifically, calculating the target distribution $P$, which is needed to evaluate the acceptance ratio, can be costly.
For example, parameter estimation in GW detection experiments, which has been mentioned in Introduction and will be explained in more detail in Sec. \ref{sec:GW}, $P$ is obtained via calculating the log-likelihood function.
It is determined by GW parameters and detector output data and evaluated as a sum of many terms that corresponds to contributions from various frequency modes of the data.
Naively calculating and summing up these terms leads to a large number of operations proportional to the number of terms.
More generally, a similar issue can arise in big-data analysis, specifically, when we estimate parameters of a statistical model based on a lot of independent sample data and the log-likelihood is a sum of contributions from them. 

\subsection{Approximate quantum walk operator via calculating the target distribution by quantum Monte Carlo integration \label{sec:walkOpQMCI}}

Then, we are motivated to develop some faster way to calculate of $P$ in the aforementioned situation.
We consider whether QMCI can be used to speedup summation of many terms in calculation of $P$.

We start from presenting the setup we consider.
We make the following assumption.

\begin{assumption}
For every $x\in\Omega$, $P$ is written as
\begin{equation}
    P(x)=P_0(x)e^{-L(x)}. \label{eq:PeL}
\end{equation}
Here, $P_0$ is a probability distribution on $\Omega$.
$L:\Omega\rightarrow\mathbb{R}_+$ is called the negative log-likelihood and written as 
\begin{equation}
    L(x)=L_{\rm sum}(x) + \ell_0(x)+C \label{eq:likeli}
\end{equation}
with $\ell_0:\Omega\rightarrow\mathbb{R}$, $C$ a constant independent of $x$, and
\begin{equation}
    L_{\rm sum}(x):=\frac{1}{M}\sum_{i=0}^{M-1} \ell(i,x), \label{eq:Lsum}
\end{equation}
where $M\in\mathbb{N}$ and $\ell:[M]_0\times\Omega\rightarrow\mathbb{R}$.
Besides, we are given the quantum circuits $O_{\ell}$, which acts on a 3-register system as
\begin{equation}
    O_{\ell}\ket{x}\ket{i}\ket{0}=\ket{x}\ket{i}\ket{\ell(i,x)}
\end{equation}
for any $i\in[M]_0$ and $x\in\Omega$.
Moreover, we are given $\sigma\in\mathbb{R}_+$ such that
\begin{equation}
    \frac{1}{M}\sum_{i=0}^{M-1}(\ell(i,x))^2 - \left(\frac{1}{M}\sum_{i=0}^{M-1}\ell(i,x)\right)^2 \le \sigma^2
\end{equation}
for any $x\in\Omega$.
\label{ass:P}
\end{assumption}

This assumption is threefold.
The first part, the form of $P$ is in line with the aforementioned situation, where the log-likelihood contains a sum of many terms.
The second one is availability of the quantum circuit $O_{\ell}$ to calculate the terms $\ell$, which is used in QMCI.
For large $M$, $O_{\ell}$ is the circuit queried most, and thus we hereafter focus on the number of queries to this as a metric of the complexity of our algorithm. 
The third one, the boundedness of the variance of $\ell$, is needed to bound the error in QMCI.

Note that the form of $L_{\rm sum}$ is in fact an average rather than a sum.
This is just for convenience in applying QMCI to computing it.
Also note that the order of $\sigma$ can depends on the term number $M$.
For example, if $L_{\rm sum}$ is a sum of contributions from $M$ independent samples, which applies to many cases in estimating parameters of statistical models, putting an overall factor $1/M$ and redefining $M\ell$ as $\ell$ leads to the form in Eq. (\ref{eq:Lsum}), but this makes the order of $\ell$ $O(M)$ if it is originally independent of $M$.

Hereafter, we denote by $\mathcal{C}_L$ the Markov chain generated by Algorithm \ref{alg:MH} with $P$ written as Eq. (\ref{eq:PeL}) with $L$.

We also assume the availability of the quantum circuits to generate the states that encode the proposal distribution $T$ and the prior distribution $P_0$ in amplitudes.

\begin{assumption}
    We are given quantum circuits $V$ that acts as Eq. (\ref{eq:V}).
    \label{ass:oraclesForT}
\end{assumption}

\begin{assumption}
    We are given quantum circuits $O_{P_0}$ that acts as Eq. (\ref{eq:OP0}).
    \label{ass:OP0}
\end{assumption}

Furthermore, we assume that we can use a quantum circuit to compute the acceptance ratio $A(x,y)$, given estimates $\hat{L}_x$ and $\hat{L}_y$ of $L_{\rm sum}(x)$ and $L_{\rm sum}(y)$.

\begin{assumption}
    We are given quantum circuits $O_{\rm AR}$ that acts as
    \begin{align}
    & O_{\rm AR}\ket{x}\ket{y}\ket{\hat{L}_x}\ket{\hat{L}_y}\ket{0}= \nonumber \\
    & \quad \ket{x}\ket{y}\ket{\hat{L}_x}\ket{\hat{L}_y}\Ket{\frac{P_0(y)T(y,x)\exp\left(-\left(\hat{L}_y+\ell_0(y)\right)\right)}{P_0(x)T(x,y)\exp\left(-\left(\hat{L}_x+\ell_0(x)\right)\right)}}
    \end{align}
    for any $x,y\in\Omega$ and $\hat{L}_x,\hat{L}_y\in\mathbb{R}$.
    \label{ass:OAR}
\end{assumption}

In many cases, the formulae for $P_0$ and $T$ are explicitly given with elementary functions and thus $O_{\rm AR}$ is implemented with arithmetic circuits.

Under these assumptions, Theorem \ref{th:QMCI} leads to the following lemma.

\begin{lemma}
Let $\Omega$ be a finite subset of $\mathbb{R}^d$ and $P$ be a distribution on it.
Under Assumptions \ref{ass:P} and \ref{ass:OAR}, for any $\delta,\epsilon\in(0,1)$, we have an access to a unitary operator $\tilde{\tilde{B}}_{\delta,\epsilon}$ on a system of three registers $R_{\rm S}, R_{\rm M}$ and $R_{\rm A}$ and a qubit $R_{\rm C}$ such that, for any $x\in\Omega$, $\Delta x\in\Delta\Omega_x$ and state $\ket{\phi}_{R_{\rm C}}$ on $R_{\rm C}$,
\begin{align}
    &\tilde{\tilde{B}}_{\delta,\epsilon}\ket{x}_{R_{\rm S}}\ket{\Delta x}_{R_{\rm M}}\ket{\phi}_{R_{\rm C}}\ket{0}_{R_{\rm A}}=  \nonumber \\
    & \quad \tilde{B}_{\epsilon}\left(\ket{x}_{R_{\rm S}}\ket{\Delta x}_{R_{\rm M}}\ket{\phi}_{R_{\rm C}}\right)\ket{0}_{R_{\rm A}} + \gamma_{x,\Delta x,\delta,\epsilon}\ket{\Psi}.
    \label{eq:Btilde}
\end{align}
Here,
\begin{align}
    & \tilde{B}_{\epsilon}\ket{x}_{R_{\rm S}}\ket{\Delta x}_{R_{\rm M}}\ket{\phi}_{R_{\rm C}}= \nonumber\\
    & \ \ \ket{x}_{R_{\rm S}}\ket{\Delta x}_{R_{\rm M}} \nonumber \\
    & \ \ \otimes
    \begin{pmatrix}
    \sqrt{1-\tilde{A}(x,x+\Delta x)} & -\sqrt{\tilde{A}(x,x+\Delta x)} \\
    \sqrt{\tilde{A}(x,x+\Delta x)} & \sqrt{1-\tilde{A}(x,x+\Delta x)}
    \end{pmatrix}
    \ket{\phi}_{R_{\rm C}},
\end{align}
where $\tilde{A}:\Omega\times\Omega\rightarrow\mathbb{R}$ is written as
\begin{equation}
    \tilde{A}(x,y)=\min\left\{1,\frac{P_0(y)e^{-\tilde{L}(y)}T(y,x)}{P_0(x)e^{-\tilde{L}(x)}T(x,y)}\right\} \label{eq:Atil}
\end{equation}
with $\tilde{L}:\Omega\rightarrow\mathbb{R}_+$ such that
\begin{equation}
    \max_{x\in\Omega} |\tilde{L}(x)-L(x)|\le\epsilon. \label{eq:delLeps}
\end{equation}
$\ket{\Psi}$ is some state on the entire system. $\gamma_{x,\Delta x,\delta,\epsilon}\in\mathbb{C}$ satisfies $|\gamma_{x,\Delta x,\delta,\epsilon}|\le \delta$.
$\tilde{\tilde{B}}_{\delta,\epsilon}$ makes queries to $O_{\ell}$, whose number is of order (\ref{eq:compQMCI}).
The qubit number in the entire system is of order (\ref{eq:qubitQMCI}).

\label{lem:approxB}
\end{lemma}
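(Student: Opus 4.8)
The plan is to realize $\tilde{\tilde{B}}_{\delta,\epsilon}$ as the circuit that computes the approximate acceptance ratio $\tilde{A}$ and applies the corresponding coin rotation on $R_{\rm C}$, with the two exact evaluations of $L_{\rm sum}$ replaced by two QMCI subroutines that write their estimates into $R_{\rm A}$, followed by a complete uncomputation of all scratch data so that $R_{\rm A}$ returns to $\ket{0}$ on the branch where both estimations succeed. The only deviation from the ideal operator $\tilde{B}_\epsilon$ then comes from the QMCI residuals, which I would collect into $\gamma_{x,\Delta x,\delta,\epsilon}\ket{\Psi}$.

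First I would invoke Theorem \ref{th:QMCI} with the identification $O_{\mathcal{X}}=O_\ell$: for each fixed $x\in\Omega$ supplied by $R_{\rm S}$, the data set is $X_i=\ell(i,x)$ with mean $L_{\rm sum}(x)$, and Assumption \ref{ass:P} provides exactly the uniform sample-variance bound $\sigma^2$ that the theorem requires. With accuracy $\epsilon$ and failure parameter $\delta'$ (fixed below), this yields a unitary $O^{\rm mean}_{\mathcal{X},\epsilon,\delta',\sigma}$ producing $\ket{0}\ket{\hat{L}_x}+\gamma_x\ket{\psi_x}$ with $|\hat{L}_x-L_{\rm sum}(x)|\le\epsilon$ and $|\gamma_x|^2\le\delta'$. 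I would then compose $\tilde{\tilde{B}}_{\delta,\epsilon}$ from: (i) an adder writing $y=x+\Delta x$ into a scratch part of $R_{\rm A}$; (ii) the QMCI unitary with $R_{\rm S}$ supplying $x$, producing $\hat{L}_x$; (iii) the QMCI unitary with the scratch register supplying $y$, producing $\hat{L}_y$; (iv) $O_{\rm AR}$ of Assumption \ref{ass:OAR} fed with $x,y,\hat{L}_x,\hat{L}_y$, followed by the minimum with $1$ and the rotation-angle computation, all by exact arithmetic that does not query $O_\ell$; (v) the controlled $Y$-rotation on $R_{\rm C}$ realizing the stated matrix with $A$ replaced by $\tilde{A}(x,y)$ of Eq. (\ref{eq:Atil}); and (vi) the inverses of steps (iv), (iii), (ii), (i), restoring $R_{\rm A}$ to $\ket{0}$.

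On the branch where both QMCI ancillas end in $\ket{0}$, steps (i)--(vi) act exactly as the ideal sequence, so the output is $\tilde{B}_\epsilon(\ket{x}_{R_{\rm S}}\ket{\Delta x}_{R_{\rm M}}\ket{\phi}_{R_{\rm C}})\ket{0}_{R_{\rm A}}$ with $\tilde{L}(x):=\hat{L}_x+\ell_0(x)+C$; since $\ell_0$ and $C$ are treated exactly by $O_{\rm AR}$ and $C$ cancels in the ratio, the estimation bound gives $|\tilde{L}(x)-L(x)|=|\hat{L}_x-L_{\rm sum}(x)|\le\epsilon$, which is Eq. (\ref{eq:delLeps}). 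Everything off this branch I collect into $\gamma_{x,\Delta x,\delta,\epsilon}\ket{\Psi}$; since steps (iv)--(vi) are unitary, a triangle-inequality argument over the two residual contributions of norm at most $\sqrt{\delta'}$ gives $|\gamma_{x,\Delta x,\delta,\epsilon}|\le 2\sqrt{\delta'}$, so choosing $\delta'=\delta^2/4$ secures $|\gamma_{x,\Delta x,\delta,\epsilon}|\le\delta$. Only the QMCI subroutines (ii), (iii) and their inverses query $O_\ell$, a constant number of invocations each of order (\ref{eq:compQMCI}) with $\log(1/\delta')=O(\log(1/\delta))$, so the total query count is of order (\ref{eq:compQMCI}); the qubit count is dominated by the QMCI registers and is of order (\ref{eq:qubitQMCI}).

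The step I expect to be the main obstacle is the residual bookkeeping: I must check that feeding the imperfect post-QMCI state through $O_{\rm AR}$, the coin rotation, and the full uncomputation leaves the good branch exactly equal to the ideal $\tilde{B}_\epsilon$ action while keeping the accumulated off-branch amplitude below $\delta$. What makes this tractable is that $\tilde{A}(x,y)$ is a deterministic function of $x,y,\hat{L}_x,\hat{L}_y$ that is independent of the contents of $R_{\rm C}$, so the coin rotation in (v) does not impede the uncomputation of the scratch data in (vi), and the two QMCI residuals live on disjoint work registers, so their amplitudes simply add rather than interfere uncontrollably.
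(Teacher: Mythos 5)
Your construction is the same as the paper's: two invocations of the QMCI unitary of Theorem \ref{th:QMCI} to load estimates of $L_{\rm sum}(x)$ and $L_{\rm sum}(x+\Delta x)$ into ancilla registers, then $O_{\rm AR}$ and a controlled $Y$-rotation on $R_{\rm C}$, then full uncomputation, with all deviations collected into $\gamma_{x,\Delta x,\delta,\epsilon}\ket{\Psi}$. However, your residual bookkeeping --- the step you yourself flag as the main obstacle --- undercounts by a factor of two. The claim that ``on the branch where both QMCI ancillas end in $\ket{0}$, steps (i)--(vi) act exactly as the ideal sequence'' is not correct: the inverse QMCI unitary in step (vi) does not map the good branch $\ket{x}\ket{0}\ket{\hat{L}_x}$ back to $\ket{x}\ket{0}\ket{0}$ exactly. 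Since $O^{\rm mean}_{\mathcal{X},\epsilon,\delta',\sigma}\ket{x}\ket{0}\ket{0}=\ket{x}\ket{0}\ket{\hat{L}_x}+\gamma_x\ket{\psi_x}$, one has
\begin{equation*}
\left(O^{\rm mean}_{\mathcal{X},\epsilon,\delta',\sigma}\right)^\dagger\ket{x}\ket{0}\ket{\hat{L}_x}=\ket{x}\ket{0}\ket{0}-\gamma_x\left(O^{\rm mean}_{\mathcal{X},\epsilon,\delta',\sigma}\right)^\dagger\ket{\psi_x},
\end{equation*}
so each of the two uncomputations contributes a fresh residual of magnitude up to $\sqrt{\delta'}$ in addition to the two forward ones. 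The correct bound is therefore $|\gamma_{x,\Delta x,\delta,\epsilon}|\le 4\sqrt{\delta'}$; your choice $\delta'=\delta^2/4$ only guarantees $2\delta$, whereas the paper takes the failure parameter to be $\delta^2/16$ precisely so that the four contributions sum to $\delta$. This is a constant-factor repair that leaves the asymptotic query count (\ref{eq:compQMCI}) and qubit count (\ref{eq:qubitQMCI}) unchanged, but as written your stated bound $|\gamma_{x,\Delta x,\delta,\epsilon}|\le\delta$ does not follow. The rest of the argument (identifying $\tilde{L}=\tilde{L}_{\rm sum}+\ell_0+C$ so that Eq. (\ref{eq:delLeps}) follows from the QMCI accuracy, and noting that the two forward residuals add by the triangle inequality) matches the paper's proof.
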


\begin{proof}

Because of Theorem \ref{th:QMCI}, given $O_{\ell}$, we have an access to an unitary operator $O_{L_{\rm sum},\epsilon,\frac{\delta^2}{16},\sigma}$ on the system of three registers that, for any $x\in\Omega$, acts as
\begin{equation}
    O_{L_{\rm sum},\epsilon,\frac{\delta^2}{16},\sigma}\ket{x}\ket{0}\ket{0}= \ket{x}\left(\ket{0}\ket{\tilde{L}_{\rm sum}(x)} + \gamma_{x,\frac{\delta^2}{16}}\ket{\psi_x}\right), \label{O_L}
\end{equation}
where $\gamma_{x,\frac{\delta^2}{16}}\in\mathbb{C}$ satisfies $\left|\gamma_{x,\frac{\delta^2}{16}}\right|^2\le\frac{\delta^2}{16}$, $\ket{\psi_x}$ is some state on the system of the second and third register, and $\tilde{L}_{\rm sum}:\Omega\rightarrow\mathbb{R}$ satisfies
\begin{equation}
    \max_{x\in\Omega} |\tilde{L}_{\rm sum}(x)-L_{\rm sum}(x)|\le\epsilon. \label{eq:delLsumeps}
\end{equation}
Equipped with this, we can construct the quantum circuit for the following operation on the system of $R_{\rm S}, R_{\rm M},R_{\rm C}$ and ancillary registers $R_{\rm A,1},...,R_{\rm A,6}$:
\begin{widetext}
\begin{align}
    & \ket{x}_{R_{\rm S}}\ket{\Delta x}_{R_{\rm M}}\ket{0}_{R_{\rm A,1}}\ket{0}_{R_{\rm A,2}}\ket{0}_{R_{\rm A,3}}\ket{0}_{R_{\rm A,4}}\ket{0}_{R_{\rm A,5}}\ket{0}_{R_{\rm A,6}}\ket{\phi}_{R_{\rm C}} \nonumber \\
     \rightarrow & \ket{x}_{R_{\rm S}}\ket{\Delta x}_{R_{\rm M}}\ket{x+\Delta x}_{R_{\rm A,1}}\ket{0}_{R_{\rm A,2}}\ket{0}_{R_{\rm A,3}}\ket{0}_{R_{\rm A,4}}\ket{0}_{R_{\rm A,5}}\ket{0}_{R_{\rm A,6}}\ket{\phi}_{R_{\rm C}}  \nonumber \\
     \rightarrow & \ket{x}_{R_{\rm S}}\ket{\Delta x}_{R_{\rm M}}\ket{x+\Delta x}_{R_{\rm A,1}} \left(\ket{0}_{R_{\rm A,2}}\ket{\tilde{L}_{\rm sum}(x)}_{R_{\rm A,3}} + \gamma_{x,\frac{\delta^2}{16}}\ket{\psi_x}_{R_{\rm A,2},R_{\rm A,3}}\right)\ket{0}_{R_{\rm A,4}}\ket{0}_{R_{\rm A,5}}\ket{0}_{R_{\rm A,6}}\ket{\phi}_{R_{\rm C}} \nonumber \\
    :=& \ket{x}_{R_{\rm S}}\ket{\Delta x}_{R_{\rm M}}\ket{x+\Delta x}_{R_{\rm A,1}}\ket{0}_{R_{\rm A,2}}\ket{\tilde{L}_{\rm sum}(x)}_{R_{\rm A,3}}\ket{0}_{R_{\rm A,4}}\ket{0}_{R_{\rm A,5}}\ket{0}_{R_{\rm A,6}}\ket{\phi}_{R_{\rm C}} + \gamma_{x,\frac{\delta^2}{16}}\ket{\Psi^{(1)}_{x,\Delta x}} \nonumber \\
     \rightarrow & \ket{x}_{R_{\rm S}}\ket{\Delta x}_{R_{\rm M}}\ket{x+\Delta x}_{R_{\rm A,1}} \ket{0}_{R_{\rm A,2}}\ket{\tilde{L}_{\rm sum}(x)}_{R_{\rm A,3}}\left(\ket{0}_{R_{\rm A,4}}\ket{\tilde{L}_{\rm sum}(x+\Delta x)}_{R_{\rm A,5}} + \gamma_{x+\Delta x,\frac{\delta^2}{16}}\ket{\psi_{x+\Delta x}}_{R_{\rm A,4},R_{\rm A,5}}\right)\ket{0}_{R_{\rm A,6}}\ket{\phi}_{R_{\rm C}} \nonumber \\
     & \quad + \gamma_{x,\frac{\delta^2}{16}}\ket{\Psi^{(2)}_{x,\Delta x}} \nonumber \\
    :=& \ket{x}_{R_{\rm S}}\ket{\Delta x}_{R_{\rm M}}\ket{x+\Delta x}_{R_{\rm A,1}}\ket{0}_{R_{\rm A,2}}\ket{\tilde{L}_{\rm sum}(x)}_{R_{\rm A,3}}\ket{0}_{R_{\rm A,4}}\ket{\tilde{L}_{\rm sum}(x+\Delta x)}_{R_{\rm A,5}}\ket{0}_{R_{\rm A,6}}\ket{\phi}_{R_{\rm C}} + \gamma^\prime_{x,\Delta x}\ket{\Psi^{(3)}_{x,\Delta x}} \nonumber \\
     \rightarrow & \ket{x}_{R_{\rm S}}\ket{\Delta x}_{R_{\rm M}}\ket{x+\Delta x}_{R_{\rm A,1}}\ket{0}_{R_{\rm A,2}}\ket{\tilde{L}_{\rm sum}(x)}_{R_{\rm A,3}}\ket{0}_{R_{\rm A,4}}\ket{\tilde{L}_{\rm sum}(x+\Delta x)}_{R_{\rm A,5}}\Ket{\tilde{A}(x,x+\Delta x)}_{R_{\rm A,6}}\ket{\phi}_{R_{\rm C}}
    + \gamma^\prime_{x,\Delta x}\ket{\Psi^{(4)}_{x,\Delta x}} \nonumber \\
    \rightarrow&\ket{x}_{R_{\rm S}}\ket{\Delta x}_{R_{\rm M}}\ket{x+\Delta x}_{R_{\rm A,1}}\ket{0}_{R_{\rm A,2}}\ket{\tilde{L}_{\rm sum}(x)}_{R_{\rm A,3}}\ket{0}_{R_{\rm A,4}}\ket{\tilde{L}_{\rm sum}(x+\Delta x)}_{R_{\rm A,5}}\ket{\tilde{A}(x,x+\Delta x)}_{R_{\rm A,6}} \nonumber \\
    &\quad\otimes
    \begin{pmatrix}
    \sqrt{1-\tilde{A}(x,x+\Delta x)} & -\sqrt{\tilde{A}(x,x+\Delta x)} \\
    \sqrt{\tilde{A}(x,x+\Delta x)} & \sqrt{1-\tilde{A}(x,x+\Delta x)}
    \end{pmatrix}
    \ket{\phi}_{R_{\rm C}} + \gamma^\prime_{x,\Delta x}\ket{\Psi^{(5)}_{x,\Delta x}}\nonumber \\
    \rightarrow & \ket{x}_{R_{\rm S}}\ket{\Delta x}_{R_{\rm M}}\ket{x+\Delta x}_{R_{\rm A,1}}\ket{0}_{R_{\rm A,2}}\ket{\tilde{L}_{\rm sum}(x)}_{R_{\rm A,3}}\ket{0}_{R_{\rm A,4}}\ket{\tilde{L}_{\rm sum}(x+\Delta x)}_{R_{\rm A,5}}\ket{0}_{R_{\rm A,6}} \nonumber \\
    & \quad\otimes
    \begin{pmatrix}
    \sqrt{1-\tilde{A}(x,x+\Delta x)} & -\sqrt{\tilde{A}(x,x+\Delta x)} \\
    \sqrt{\tilde{A}(x,x+\Delta x)} & \sqrt{1-\tilde{A}(x,x+\Delta x)}
    \end{pmatrix}
    \ket{\phi}_{R_{\rm C}} + \gamma^\prime_{x,\Delta x}\ket{\Psi^{(6)}_{x,\Delta x}} \nonumber \\
    \rightarrow & \ket{x}_{R_{\rm S}}\ket{\Delta x}_{R_{\rm M}}\ket{x+\Delta x}_{R_{\rm A,1}}\ket{0}_{R_{\rm A,2}}\ket{0}_{R_{\rm A,3}}\ket{0}_{R_{\rm A,4}}\ket{0}_{R_{\rm A,5}}\ket{0}_{R_{\rm A,6}} \nonumber \\
    & \quad \otimes
    \begin{pmatrix}
    \sqrt{1-\tilde{A}(x,x+\Delta x)} & -\sqrt{\tilde{A}(x,x+\Delta x)} \\
    \sqrt{\tilde{A}(x,x+\Delta x)} & \sqrt{1-\tilde{A}(x,x+\Delta x)}
    \end{pmatrix}
    \ket{\phi}_{R_{\rm C}} + \gamma^{\prime\prime}_{x,\Delta x}\ket{\Psi^{(7)}_{x,\Delta x}}  \nonumber \\
    \rightarrow & \ket{x}_{R_{\rm S}}\ket{\Delta x}_{R_{\rm M}}\ket{0}_{R_{\rm A,1}}\ket{0}_{R_{\rm A,2}}\ket{0}_{R_{\rm A,3}}\ket{0}_{R_{\rm A,4}}\ket{0}_{R_{\rm A,5}}\ket{0}_{R_{\rm A,6}} \nonumber \\
    & \quad \otimes
    \begin{pmatrix}
    \sqrt{1-\tilde{A}(x,x+\Delta x)} & -\sqrt{\tilde{A}(x,x+\Delta x)} \\
    \sqrt{\tilde{A}(x,x+\Delta x)} & \sqrt{1-\tilde{A}(x,x+\Delta x)}
    \end{pmatrix}
    \ket{\phi}_{R_{\rm C}} + \gamma^{\prime\prime}_{x,\Delta x}\ket{\Psi^{(8)}_{x,\Delta x}}  \nonumber \\
    =:&\ket{\tilde{\Phi}_{x,\Delta x}}
    , \label{eq:stFlow}
\end{align}
\end{widetext}
where $\ket{\Psi^{(1)}_{x,\Delta x}}$, ..., $\ket{\Psi^{(8)}_{x,\Delta x}}$ are some states on the entire system and $\gamma^{\prime}_{x,\Delta x},\gamma^{\prime\prime}_{x,\Delta x}\in\mathbb{C}$.
In Eq. (\ref{eq:stFlow}), we use an adder circuit at the first arrow.
At the second and third arrows, we use $O_{L_{\rm sum},\epsilon,\frac{\delta^2}{16},\sigma}$ on the system of $R_{\rm S}$, $R_{\rm A,2}$ and $R_{\rm A,3}$ and that of $R_{\rm A_1}$, $R_{\rm A,4}$ and $R_{\rm A,5}$, respectively.
At the fourth arrow, we use $O_{\rm AR}$ to compute $\tilde{A}(x,x+\Delta x)$ as Eq. (\ref{eq:Atil}) with $\tilde{L}=\tilde{L}_{\rm sum}+\ell_0+C$, which satisfies Eq. (\ref{eq:delLeps}) because of Eq. (\ref{eq:delLsumeps}).
The fifth arrow is by the Y-rotation $\begin{pmatrix} \cos\frac{\varphi}{2} & -\sin\frac{\varphi}{2} \\ \sin\frac{\varphi}{2} & \cos\frac{\varphi}{2} \end{pmatrix}$ on $R_{\rm C}$ with the rotation angle specified by $R_{\rm A,6}$, which is implemented as follows \cite{Egger2021}: we compute $\varphi=2\arcsin \left(\sqrt{\tilde{A}(x,x+\Delta x)}\right)$ onto another ancillary register using arithmetic circuits \cite{MunosCoreas2022,haner2018,Bhaskar2016} and apply fixed-angle Y-rotation gates controlled by qubits in that ancillary register to $R_{\rm C}$.
At the sixth arrow, we perform the inverse of the operation at the fourth arrow.
The seventh arrow is by the inverses of the operations at the second and third arrows, which act as
\begin{align}
    & (O_{L_{\rm sum},\epsilon,\frac{\delta^2}{16},\sigma})^\dagger\ket{x}_{R_{\rm S}}\ket{0}_{R_{\rm A,2}}\ket{\tilde{L}_{\rm sum}(x)}_{R_{\rm A,3}} \nonumber \\
    =& (O_{L_{\rm sum},\epsilon,\frac{\delta^2}{16},\sigma})^\dagger\ket{x}_{R_{\rm S}} \nonumber \\
    & \quad \otimes\left(\ket{0}_{R_{\rm A,2}}\ket{\tilde{L}_{\rm sum}(x)}_{R_{\rm A,3}}+ \gamma_{x,\frac{\delta^2}{16}}\ket{\psi_x}_{R_{\rm A,2},R_{\rm A,3}}\right) \nonumber \\
    & \quad -\gamma_{x,\frac{\delta^2}{16}}(O_{L_{\rm sum},\epsilon,\frac{\delta^2}{16},\sigma})^\dagger\ket{\psi_x}_{R_{\rm A,2},R_{\rm A,3}}\ \nonumber \\
    =& \ket{x}_{R_{\rm S}}\ket{0}_{R_{\rm A,2}}\ket{0}_{R_{\rm A,3}} -\gamma_{x,\frac{\delta^2}{16}}(O_{L_{\rm sum},\epsilon,\frac{\delta^2}{16},\sigma})^\dagger\ket{\psi_x}_{R_{\rm A,2},R_{\rm A,3}}
\end{align}
and, similarly,
\begin{align}
    & (O_{L_{\rm sum},\epsilon,\frac{\delta^2}{16},\sigma})^\dagger\ket{x+\Delta x}_{R_{\rm A,1}}\ket{0}_{R_{\rm A,4}}\ket{\tilde{L}_{\rm sum}(x+\Delta x)}_{R_{\rm A,5}} \nonumber \\
    =& \ket{x+\Delta x}_{R_{\rm A,1}}\ket{0}_{R_{\rm A,4}}\ket{0}_{R_{\rm A,5}} \nonumber \\
    & \ -\gamma_{x+\Delta x,\frac{\delta^2}{16}}(O_{L_{\rm sum},\epsilon,\frac{\delta^2}{16},\sigma})^\dagger\ket{\psi_{x+\Delta x}}_{R_{\rm A,4},R_{\rm A,5}}.
\end{align}
At the last arrow, we perform the inverse of the operation at the first arrow.

Then, let us show that $\ket{\tilde{\Phi}_{x,\Delta x}}$ is in the form of Eq. (\ref{eq:Btilde}), with $R_{\rm A,1},...,R_{\rm A,6}$ collectively seen as $R_{\rm A}$.
Since we have seen that Eq. (\ref{eq:delLeps}) holds, it is sufficient to check $|\gamma^{\prime\prime}_{x,\Delta x}|\le\delta$.
This is done as follows.
We see that $\gamma^\prime_{x,\Delta x}$, which is introduced by two $O_{L_{\rm sum},\epsilon,\frac{\delta^2}{16},\sigma}$'s in the second and third arrows  in Eq. (\ref{eq:stFlow}), is bounded as
\begin{equation}
    \left|\gamma^\prime_{x,\Delta x}\right| \le \left|\gamma_{x,\frac{\delta^2}{16}}\right| + \left|\gamma_{x+\Delta x,\frac{\delta^2}{16}}\right|\le \sqrt{\frac{\delta^2}{16}} + \sqrt{\frac{\delta^2}{16}} = \frac{\delta}{2}.
\end{equation}
Similarly, applying $(O_{L_{\rm sum},\epsilon,\frac{\delta^2}{16},\sigma})^\dagger$ twice at the eighth arrow in Eq. (\ref{eq:stFlow}) increases this by at most $\frac{\delta}{2}$:
\begin{equation}
|\gamma^{\prime\prime}_{x,\Delta x}|
\le |\gamma^\prime_{x,\Delta x}|+\frac{\delta}{2}  \le \delta. \label{eq:gamma2pr}
\end{equation}
Thus, we have $|\gamma^{\prime\prime}_{x,\Delta x}|\le\delta$.

The statements on the number of queries to $O_\ell$ and the qubit number immediately follow from Theorem \ref{th:QMCI}, which gives the bounds on the query number and qubit number in $O_{L_{\rm sum},\epsilon,\frac{\delta^2}{16},\sigma}$ as Eqs. (\ref{eq:compQMCI}) and (\ref{eq:qubitQMCI}).

\end{proof}

We now define the approximate quantum walk operator.
\begin{equation}
    \tilde{\tilde{U}}_{\delta,\epsilon}:=\tilde{R}\tilde{V}^\dagger \tilde{\tilde{B}}_{\frac{\delta}{2},\epsilon}^\dagger \tilde{S}\tilde{F}\tilde{\tilde{B}}_{\frac{\delta}{2},\epsilon}\tilde{V} \label{eq:walkOpApp}
\end{equation}
on the system of $R_{\rm S}$, $R_{\rm M}$, $R_{\rm A}$ and $R_{\rm C}$, with $\tilde{R}$, $\tilde{V}$, $\tilde{S}$ and $\tilde{F}$ defined as $R\otimes I_{R_{\rm A}}$ and so on.
We then have the following lemma immediately.

\begin{lemma}
Let $\Omega$ be a finite subset of $\mathbb{R}^d$ and $P$ be a distribution on it.
Under Assumption \ref{ass:P}, \ref{ass:oraclesForT} and \ref{ass:OAR}, for any 
$\delta,\epsilon\in(0,1)$, we have access to a unitary operator $\tilde{\tilde{U}}_{\delta,\epsilon}$ on a system of three registers $R_{\rm S}, R_{\rm M}$ and $R_{\rm A}$ and a qubit $R_{\rm C}$, which is given as Eq. (\ref{eq:walkOpApp}), and, for any $x\in\Omega$, $\Delta x\in\Delta\Omega_x$ and state $\ket{\phi}_{R_{\rm C}}$ on $R_{\rm C}$, acts as
\begin{align}
    &  \tilde{\tilde{U}}_{\delta,\epsilon}\ket{x}_{R_{\rm S}}\ket{\Delta x}_{R_{\rm M}}\ket{\phi}_{R_{\rm C}}\ket{0}_{R_{\rm A}} \nonumber\\
    &  \ = \left(\tilde{U}_{\epsilon}\ket{x}_{R_{\rm S}}\ket{\Delta x}_{R_{\rm M}}\ket{\phi}_{R_{\rm C}}\right)\ket{0}_{R_{\rm A}}+ \tilde{\gamma}_{x,\Delta x,\phi,\delta,\epsilon}\ket{\tilde{\Psi}},
    \label{eq:Utilde}
\end{align}
where $\ket{\tilde{\Psi}}$ is some state on the entire system, $\tilde{\gamma}_{x,\Delta x,\phi,\delta,\epsilon}\in\mathbb{C}$ satisfies $|\tilde{\gamma}_{x,\Delta x,\phi,\delta,\epsilon}|\le \delta$ and $\tilde{U}_{\epsilon}:=RV^\dagger \tilde{B}_{\epsilon}^\dagger SF\tilde{B}_{\epsilon}V$.
$\tilde{\tilde{U}}_{\delta,\epsilon}$ makes a number of order (\ref{eq:compQMCI}) of calls to $O_{\ell}$ and uses a number of order (\ref{eq:qubitQMCI}) of qubits.

\label{lem:approxQWalk}
\end{lemma}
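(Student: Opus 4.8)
The plan is to track how the two QMCI-based blocks in $\tilde{\tilde{U}}_{\delta,\epsilon}$ perturb the idealized walk $\tilde{U}_\epsilon$, exploiting that every other factor in Eq.~(\ref{eq:walkOpApp}) is an exact unitary of the form (operator on $R_{\rm S},R_{\rm M},R_{\rm C}$)$\,\otimes\,I_{R_{\rm A}}$. Write the input as $\ket{\Phi_0}:=\ket{x}_{R_{\rm S}}\ket{\Delta x}_{R_{\rm M}}\ket{\phi}_{R_{\rm C}}\ket{0}_{R_{\rm A}}$ and apply the factors of $\tilde{\tilde{U}}_{\delta,\epsilon}$ right to left. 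Since $\tilde{V}=V\otimes I_{R_{\rm A}}$ is exact and leaves $R_{\rm A}$ in $\ket{0}$, after $\tilde{V}$ we obtain $\ket{\psi_1}\ket{0}_{R_{\rm A}}$ with $\ket{\psi_1}:=V\ket{x}\ket{\Delta x}\ket{\phi}$.

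The first step is to recast Lemma~\ref{lem:approxB} as an operator-norm statement: for \emph{any} normalized state $\ket{\eta}$ on $R_{\rm S}R_{\rm M}R_{\rm C}$, $\|\tilde{\tilde{B}}_{\delta/2,\epsilon}(\ket{\eta}\ket{0}_{R_{\rm A}})-(\tilde{B}_\epsilon\ket{\eta})\ket{0}_{R_{\rm A}}\|\le\delta/2$. Lemma~\ref{lem:approxB} gives this only for computational-basis $\ket{x}\ket{\Delta x}$, so I would lift it to the superposition $\ket{\psi_1}$ by expanding in the $R_{\rm S},R_{\rm M}$ basis; because $\tilde{\tilde{B}}_{\delta/2,\epsilon}$ neither measures nor alters the contents of $R_{\rm S}$ and $R_{\rm M}$, the residual states attached to distinct branches are mutually orthogonal, so the per-branch bounds $\le\delta/2$ combine in quadrature and the total residual has norm $\le\delta/2$ rather than accumulating through a triangle inequality over branches. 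I would also need the same bound for the adjoint on $R_{\rm A}=\ket{0}$ inputs; this follows by inserting $\tilde{\tilde{B}}_{\delta/2,\epsilon}$: for any normalized $\ket{\zeta}$, $\|\tilde{\tilde{B}}_{\delta/2,\epsilon}^\dagger(\ket{\zeta}\ket{0})-(\tilde{B}_\epsilon^\dagger\ket{\zeta})\ket{0}\|=\|\ket{\zeta}\ket{0}-\tilde{\tilde{B}}_{\delta/2,\epsilon}((\tilde{B}_\epsilon^\dagger\ket{\zeta})\ket{0})\|\le\delta/2$, using unitarity of $\tilde{\tilde{B}}_{\delta/2,\epsilon}$ and the forward bound applied to $\ket{\eta}=\tilde{B}_\epsilon^\dagger\ket{\zeta}$.

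With these two facts the error propagation is mechanical. Applying $\tilde{\tilde{B}}_{\delta/2,\epsilon}$ to $\ket{\psi_1}\ket{0}_{R_{\rm A}}$ gives $(\tilde{B}_\epsilon\ket{\psi_1})\ket{0}_{R_{\rm A}}+\ket{e_1}$ with $\|\ket{e_1}\|\le\delta/2$, the clean part keeping $R_{\rm A}$ in $\ket{0}$. The exact unitary $\tilde{S}\tilde{F}$ preserves this structure and the norm of $\ket{e_1}$; then $\tilde{\tilde{B}}_{\delta/2,\epsilon}^\dagger$ acts on a clean part that again has $R_{\rm A}=\ket{0}$, so the adjoint bound applies and adds a residual $\ket{e_2}$ of norm $\le\delta/2$, while $\ket{e_1}$ is merely transported unitarily. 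Finally $\tilde{R}\tilde{V}^\dagger=(RV^\dagger)\otimes I_{R_{\rm A}}$ is exact, so collecting the clean parts yields $(RV^\dagger\tilde{B}_\epsilon^\dagger SF\tilde{B}_\epsilon V\ket{x}\ket{\Delta x}\ket{\phi})\ket{0}_{R_{\rm A}}=(\tilde{U}_\epsilon\ket{x}\ket{\Delta x}\ket{\phi})\ket{0}_{R_{\rm A}}$, the leading term of Eq.~(\ref{eq:Utilde}). The two residuals, carried by norm-preserving unitaries, obey $\|\ket{e_1}\|+\|\ket{e_2}\|\le\delta$ by the triangle inequality, so setting $\tilde{\gamma}_{x,\Delta x,\phi,\delta,\epsilon}\ket{\tilde{\Psi}}$ equal to their transported sum gives $|\tilde{\gamma}_{x,\Delta x,\phi,\delta,\epsilon}|\le\delta$. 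The query count~(\ref{eq:compQMCI}) and qubit count~(\ref{eq:qubitQMCI}) are inherited from Lemma~\ref{lem:approxB}, since $\tilde{\tilde{U}}_{\delta,\epsilon}$ invokes $\tilde{\tilde{B}}_{\delta/2,\epsilon}$ (and its inverse) a constant number of times and no other factor queries $O_\ell$.

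The main obstacle I expect is the first step rather than the bookkeeping: promoting Lemma~\ref{lem:approxB} from a per-basis-state guarantee to a uniform bound valid on the entangled superposition produced by $\tilde{V}$, and confirming that the same bound survives adjunction. Everything downstream is a triangle-inequality accounting that hinges on the remaining factors being exact, $R_{\rm A}$-preserving unitaries that keep the clean component in the $R_{\rm A}=\ket{0}$ subspace between the two applications of the approximate block.
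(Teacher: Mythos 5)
Your proof is correct and follows the route the paper intends (the paper states this lemma as following ``immediately'' from Lemma \ref{lem:approxB} and the definition in Eq.~(\ref{eq:walkOpApp}), giving no explicit argument). You correctly identify and resolve the one point the paper glosses over --- promoting the per-basis-state guarantee of Lemma \ref{lem:approxB} to the superposition produced by $\tilde{V}$ via the fact that $\tilde{\tilde{B}}_{\delta/2,\epsilon}$ is block-diagonal in the computational basis of $R_{\rm S},R_{\rm M}$ (so the branch residuals are orthogonal and combine in quadrature), together with the standard unitarity argument for the adjoint --- after which the $\delta/2+\delta/2$ accounting and the query/qubit counts are as you state.
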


Note that $\tilde{\tilde{U}}_{\delta,\epsilon}$ has two types of differences from the exact quantum walk operator $U$.
First, $\tilde{\tilde{U}}_{\delta,\epsilon}$ does not exactly act as a quantum walk operator because it generates the residual term $\tilde{\gamma}_{x,\Delta x,\phi,\delta,\epsilon}\ket{\tilde{\Psi}}$ in Eq. (\ref{eq:Utilde}).
Second, even if there were no residual term, $\tilde{\tilde{U}}_{\delta,\epsilon}$ would not be the quantum walk operator for the Markov chain $\mathcal{C}_L$ we consider but that for another one $\mathcal{C}_{\tilde{L}}$ because of the error in the approximation $\tilde{L}$ of the exact negative log-likelihood $L$.
This difference makes the stationary distribution differ from the target distribution $P$.
Nevertheless, we can use $\tilde{\tilde{U}}_{\delta,\epsilon}$, controlling these differences by taking sufficiently small $\delta$ and $\epsilon$.   

\subsection{Quantum simulated annealing with the approximate quantum walk operator \label{sec:appQSA}}

Now, we can construct an approximation of the phase gate $R^\omega_{\ket{P}}$ using this $\tilde{\tilde{U}}_{\delta,\epsilon}$ instead of the exact quantum walk operator $U$.

\begin{lemma}

Let $\delta,\epsilon\in(0,1)$.
Under Assupmtions \ref{ass:P}, \ref{ass:oraclesForT} and \ref{ass:OAR}, consider a Markov chain $\mathcal{C}_L$.
Denote its transition matrix by $W$ and its spectral gap by $\Delta$.
Denote by $\kappa$ the condition number of the matrix $Q$ such that $Q^{-1}WQ$ is diagonal.
Let $\omega$ be a complex number with unit modulus.
Then, we have an access to a unitary operator $\tilde{\tilde{R}}^\omega_{L,\delta,\epsilon}$ that has the following properties:
\begin{itemize}
    
    \item $\tilde{\tilde{R}}^\omega_{L,\delta,\epsilon}$ acts on a system of $R_{\rm S}$ and $n_{\rm anc}$ ancillary qubits.
    Here,
    \begin{align}
    &n_{\rm anc}=O\left(\log\left(\frac{1}{\Delta}\right)\log\left(\frac{1}{\delta}\right)\right.+ \nonumber \\
    &\quad \left.\left(\log M+\log\left(\frac{\sigma}{\epsilon^\prime}\right)\right)\log\left(\frac{\sigma}{\epsilon^\prime}\right)\log\log\left(\frac{\sigma}{\epsilon^\prime}\right)\log\left(\frac{1}{\delta\sqrt{\Delta}}\right)\right),
    \label{eq:qubitRtiltil}
    \end{align}
    where
    \begin{equation}
        \epsilon^\prime:=\min\left\{\epsilon,\frac{\Delta}{16\sqrt{\max_{y\in\Omega} \sum_{x\in\Omega\setminus\{y\}} T_{xy}}\kappa}\right\}.
    \end{equation}
    
    \item $\tilde{\tilde{R}}^\omega_{L,\delta,\epsilon}$ uses $O_{\ell}$
    \begin{equation}
        O\left(\frac{\sigma}{\epsilon^\prime\sqrt{\Delta}}\log^{3/2}\left(\frac{\sigma}{\epsilon^\prime}\right)\log\log\left(\frac{\sigma}{\epsilon^\prime}\right)\log\left(\frac{1}{\delta\sqrt{\Delta}}\right)\log\left(\frac{1}{\delta}\right)\right) \label{eq:compUtil}
    \end{equation} times.
    
    \item For any state $\ket{\Xi}$ on $R_{\rm S}$,
    \begin{equation}
    \tilde{\tilde{R}}^{\omega}_{L,\delta,\epsilon}\ket{\Xi}\ket{0}^{\otimes n_{\rm anc}}=(R^{\omega}_{\ket{\tilde{P}}}\ket{\Xi})\ket{0}^{\otimes n_{\rm anc}}+\ket{\xi}. \label{eq:Rtiltil}
    \end{equation}
    Here, $R^{\omega}_{\ket{\tilde{P}}}$ is a unitary defined as Eq. (\ref{eq:RomegaP}), where $\tilde{P}$ is a distribution on $\Omega$ in the form of
    \begin{equation}
        \tilde{P}(x) \propto P_0(x)e^{-\tilde{L}(x)}
    \end{equation}
    with some function $\tilde{L}:\Omega\rightarrow\mathbb{R}_+$ satisfying
    \begin{equation}
    \max_{x\in\Omega} |\tilde{L}(x)-L(x)|\le\epsilon^\prime, \label{eq:delLEpsPr}
    \end{equation}
    and $\ket{\xi}$ is an unnormalized state vector with $\|\ket{\xi}\|\le \delta$.

\end{itemize}

\label{lem:Rtiltil}
\end{lemma}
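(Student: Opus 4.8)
The plan is to recognize that the operator $\tilde{U}_\epsilon$ sitting inside $\tilde{\tilde{U}}_{\delta,\epsilon}$ in Lemma \ref{lem:approxQWalk} is itself an \emph{exact} quantum walk operator, namely the one for the Metropolis--Hastings chain $\mathcal{C}_{\tilde{L}}$ whose target is $\tilde{P}(x)\propto P_0(x)e^{-\tilde{L}(x)}$ with $\tilde{L}=\tilde{L}_{\rm sum}+\ell_0+C$. Indeed $\tilde{U}_\epsilon=RV^\dagger\tilde{B}_\epsilon^\dagger SF\tilde{B}_\epsilon V$ has exactly the form of Eq. (\ref{eq:walkOpOurs}) with $A$ replaced by $\tilde{A}$ of Eq. (\ref{eq:Atil}); since $\tilde{P}>0$ on all of $\Omega$ and the proposal $T$ is unchanged, $\mathcal{C}_{\tilde{L}}$ is reversible and irreducible with stationary distribution $\tilde{P}$. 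Hence Theorem \ref{th:phasegapOurs} applies to $\tilde{U}_\epsilon$: on $\mathcal{A}+\mathcal{B}$ its unique $1$-eigenstate is $\ket{\tilde{P}}$ and every other eigenvalue $e^{i\theta}$ obeys $|\theta|\ge\arccos(1-\tilde{\Delta})$, where $\tilde{\Delta}$ is the spectral gap of $\mathcal{C}_{\tilde{L}}$. With this phase-gap property I would feed $\tilde{U}_\epsilon$ into Theorem \ref{th:Romega} to obtain $R^\omega_{\ket{\tilde{P}}}$ up to controllable error, and then swap $\tilde{U}_\epsilon$ for the implementable $\tilde{\tilde{U}}_{\delta_0,\epsilon^\prime}$, propagating the residual of Lemma \ref{lem:approxQWalk}.

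The main obstacle is to guarantee that $\tilde{\Delta}$ is not much smaller than $\Delta$, so that the $1/\sqrt{\tilde{\Delta}}$ cost of Theorem \ref{th:Romega} collapses to $1/\sqrt{\Delta}$; this is precisely what forces the choice of $\epsilon^\prime$. I would first bound the acceptance-ratio perturbation: since $|\tilde{L}(x)-L(x)|\le\epsilon^\prime$, the argument of the $\min$ in Eq. (\ref{eq:Atil}) is multiplied by $e^{\pm 2\epsilon^\prime}$ relative to Eq. (\ref{eq:accRatio}), so $|\tilde{A}(x,y)-A(x,y)|=O(\epsilon^\prime)$ uniformly. Writing $W,\tilde{W}$ for the transition matrices of $\mathcal{C}_L,\mathcal{C}_{\tilde{L}}$ through Eq. (\ref{eq:tranMat}), the off-diagonal entries of $E:=\tilde{W}-W$ are $T_{xy}(\tilde{A}-A)$ and the diagonal ones are the negated row sums of these, which lets me bound the induced row- and column-sum norms of $E$ and hence, via $\|E\|\le\sqrt{\|E\|_1\|E\|_\infty}$, obtain $\|E\|=O\!\left(\epsilon^\prime\sqrt{\max_{y}\sum_{x\neq y}T_{xy}}\right)$. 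Because both $W$ and $\tilde{W}$ are stochastic with simple Perron eigenvalue exactly $1$ separated from the rest of the spectrum by $\Delta$, a Bauer--Fike argument applied to the diagonalization $W=Q\Lambda Q^{-1}$ (whose $Q$ has condition number $\kappa$) keeps exactly one eigenvalue of $\tilde{W}$ near $1$ and gives $|\tilde{\lambda}_1|\le|\lambda_1|+\kappa\|E\|$, whence $\tilde{\Delta}\ge\Delta-\kappa\|E\|$. The definition $\epsilon^\prime=\min\{\epsilon,\Delta/(16\sqrt{\max_y\sum_{x\neq y}T_{xy}}\,\kappa)\}$ is calibrated exactly so that $\kappa\|E\|\le\Delta/2$, yielding $\tilde{\Delta}\ge\Delta/2$ and $\arccos(1-\tilde{\Delta})=\Omega(\sqrt{\Delta})$; the remaining constants are routine. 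Eq. (\ref{eq:delLEpsPr}) then holds because Lemma \ref{lem:approxQWalk} is invoked with accuracy $\epsilon^\prime$ in place of $\epsilon$.

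With $\tilde{\Delta}\ge\Delta/2$ secured, I would apply Theorem \ref{th:Romega} to $\tilde{U}_\epsilon$ with internal accuracy $\delta/2$, producing a circuit $G$ that realizes $R^\omega_{\ket{\tilde{P}}}$ up to error $\delta/2$ while invoking the (controlled, inverse) walk operator $N=O(\log(1/\delta)/\sqrt{\tilde{\Delta}})=O(\log(1/\delta)/\sqrt{\Delta})$ times and using $O(\log(1/\tilde{\Delta})\log(1/\delta))=O(\log(1/\Delta)\log(1/\delta))$ of its own ancillas. I then define $\tilde{\tilde{R}}^\omega_{L,\delta,\epsilon}$ to be this same $G$ with each occurrence of $\tilde{U}_\epsilon$ replaced by $\tilde{\tilde{U}}_{\delta_0,\epsilon^\prime}$. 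By Eq. (\ref{eq:Utilde}) of Lemma \ref{lem:approxQWalk}, each such replacement (with $R_{\rm A}$ entering in $\ket{0}$ and returned to $\ket{0}$ up to residual) differs from the ideal $\tilde{U}_\epsilon\otimes\ket{0}_{R_{\rm A}}$ by at most $\delta_0$ in operator norm on the relevant inputs; a standard telescoping (hybrid) argument over the $N$ insertions then gives total residual norm at most $N\delta_0$. Hence $\tilde{\tilde{R}}^\omega_{L,\delta,\epsilon}\ket{\Xi}\ket{0}^{\otimes n_{\rm anc}}=(R^\omega_{\ket{\tilde{P}}}\ket{\Xi})\ket{0}^{\otimes n_{\rm anc}}+\ket{\xi}$ with $\|\ket{\xi}\|\le\delta/2+N\delta_0$, and taking $\delta_0=\Theta(\delta\sqrt{\Delta}/\log(1/\delta))$ makes this $\le\delta$, establishing Eq. (\ref{eq:Rtiltil}).

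It remains to collect the costs. Each call to $\tilde{\tilde{U}}_{\delta_0,\epsilon^\prime}$ issues a number of $O_\ell$-queries of order (\ref{eq:compQMCI}) with $\epsilon\to\epsilon^\prime$ and $\delta\to\delta_0$, and since $\log(1/\delta_0)=O(\log(1/(\delta\sqrt{\Delta})))$, multiplying by $N=O(\log(1/\delta)/\sqrt{\Delta})$ reproduces exactly the query count (\ref{eq:compUtil})---one factor $\log(1/\delta)$ coming from $N$ and one $\log(1/(\delta\sqrt{\Delta}))$ from $\log(1/\delta_0)$. Likewise, adding the $O(\log(1/\Delta)\log(1/\delta))$ ancillas of Theorem \ref{th:Romega} to the qubit count (\ref{eq:qubitQMCI}) of $\tilde{\tilde{U}}_{\delta_0,\epsilon^\prime}$ (again with $\epsilon\to\epsilon^\prime$, $\delta\to\delta_0$) yields (\ref{eq:qubitRtiltil}), completing the proof.
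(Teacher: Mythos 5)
Your proposal is correct and follows essentially the same route as the paper: establish $\tilde{\Delta}\ge\Delta/2$ via the acceptance-ratio perturbation, the bound $\|E\|\le\sqrt{\|E\|_1\|E\|_\infty}$ and Bauer--Fike (the paper factors this step out as Lemma \ref{lem:SpGapPert}, proved in Appendix \ref{sec:proofLemSpGapPert} via Lemma \ref{lem:Aerr}), then apply Theorem \ref{th:Romega} to $\tilde{U}_{\epsilon^\prime}$ with internal accuracy $\delta/2$ and substitute $\tilde{\tilde{U}}_{\delta^\prime,\epsilon^\prime}$ with $\delta^\prime=\Theta\left(\delta\sqrt{\Delta}/\log(1/\delta)\right)$ so that the accumulated residual over the $O(\log(1/\delta)/\sqrt{\Delta})$ insertions stays below $\delta/2$. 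The only cosmetic difference is that you re-derive the spectral-gap stability inline rather than citing it, and your claim that $|\tilde{A}-A|=O(\epsilon^\prime)$ ``uniformly'' implicitly relies on the case split of Lemma \ref{lem:Aerr} (the multiplicative perturbation $e^{\pm 2\epsilon^\prime}$ only yields an additive $O(\epsilon^\prime)$ bound when the ratio is bounded; when it is large both acceptance probabilities saturate at $1$).
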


To prove this, we use the following lemma on the spectral gap of the Markov chain with replacement of $L$ with $\tilde{L}$.
The proof of this is presented in Appendix. \ref{sec:proofLemSpGapPert}.

\begin{lemma}
    Let $\tilde{L}:\Omega\rightarrow\mathbb{R}_+$ be a function on $\Omega$ and denote by $\tilde{\Delta}$ the spectral gap of $\mathcal{C}_{\tilde{L}}$.
    Let $\Delta$ and $\kappa$ be the same as Lemma \ref{lem:Rtiltil}.
    Then, if $\epsilon:=\max_{x\in\Omega} |\tilde{L}(x)-L(x)|\le \frac{1}{4}$,
    \begin{equation}
        \tilde{\Delta} \ge \Delta - 16\left(\max_{y\in\Omega} \sum_{x\in\Omega\setminus\{y\}} T_{xy}\right)^{1/2}\kappa\epsilon
        \label{eq:SpGapPert}
    \end{equation}
    holds.

    \label{lem:SpGapPert}
\end{lemma}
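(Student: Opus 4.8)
The plan is to treat $\mathcal{C}_{\tilde L}$ as a perturbation of $\mathcal{C}_L$ at the level of their transition matrices and then transfer the perturbation bound to the subdominant eigenvalue via the Bauer--Fike theorem. Both chains are Metropolis--Hastings chains built from the \emph{same} proposal $T$, so by Eq.~(\ref{eq:tranMat}) their transition matrices $W$ and $\tilde W$ differ only through the acceptance ratios $A$ and $\tilde A$ of Eqs.~(\ref{eq:accRatio}) and (\ref{eq:Atil}). First I would bound the entrywise difference $|A(x,y)-\tilde A(x,y)|$. Writing $r(x,y):=\frac{P(y)T(y,x)}{P(x)T(x,y)}$ and $\tilde r(x,y)$ for its analogue with $\tilde P$, the hypothesis $\max_x|\tilde L(x)-L(x)|\le\epsilon$ gives $\tilde r/r=e^{(L(y)-\tilde L(y))-(L(x)-\tilde L(x))}\in[e^{-2\epsilon},e^{2\epsilon}]$. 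Since $s\mapsto\min\{1,s\}$ is $1$-Lipschitz and both ratios are capped at $1$, a short case analysis shows $|A(x,y)-\tilde A(x,y)|\le e^{2\epsilon}-1\le 4\epsilon$ for $\epsilon\le\tfrac14$.

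Next I would convert this into a bound on $\|W-\tilde W\|$, writing $E:=W-\tilde W$. For $x\ne y$ one has $E_{x,y}=T(x,y)(A(x,y)-\tilde A(x,y))$, while $E_{x,x}=\sum_{z\ne x}T(x,z)(\tilde A(x,z)-A(x,z))$. The absolute row sums are then controlled by the \emph{outgoing} proposal mass $\sum_{y\ne x}T(x,y)\le1$, giving $\|E\|_\infty\le 2(e^{2\epsilon}-1)$, whereas the absolute column sums bring in the \emph{incoming} proposal mass, $\|E\|_1\le (e^{2\epsilon}-1)(1+\max_{y}\sum_{x\ne y}T_{xy})$. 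Combining through $\|E\|_2\le\sqrt{\|E\|_1\,\|E\|_\infty}$ yields a bound of the form $\|E\|_2\le c\,(\max_{y}\sum_{x\ne y}T_{xy})^{1/2}\epsilon$ with an absolute constant $c$ (the additive $1$ being absorbed in the regime where the claim is nontrivial), which is the source of the square-root factor in Eq.~(\ref{eq:SpGapPert}).

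Finally I would transfer this to the gap. Since $W=Q\Lambda Q^{-1}$ with $\Lambda$ diagonal and $\kappa=\|Q\|\,\|Q^{-1}\|$ as in Lemma~\ref{lem:Rtiltil}, the Bauer--Fike theorem gives that every eigenvalue of $\tilde W=W+E$ lies within $\kappa\|E\|_2$ of some eigenvalue of $W$. Both $W$ and $\tilde W$ are stochastic and irreducible, so each has the simple Perron eigenvalue $1$, and the remaining eigenvalues of $W$ have modulus at most $1-\Delta$. The remaining work is to argue that the \emph{subdominant} eigenvalue $\tilde\lambda_1$ of $\tilde W$ is matched to a subdominant eigenvalue of $W$ rather than to $1$: when $\kappa\|E\|_2$ is a small fraction of $\Delta$, the disk of radius $\kappa\|E\|_2$ about $1$ is separated from the disks about the other eigenvalues of $W$, and a continuity/eigenvalue-counting argument along the segment from $W$ to $\tilde W$ shows exactly one eigenvalue of $\tilde W$ (its own Perron value $1$) sits in that disk. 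Hence $|\tilde\lambda_1|\le 1-\Delta+\kappa\|E\|_2$, i.e.\ $\tilde\Delta\ge\Delta-\kappa\|E\|_2$, and inserting the norm bound yields Eq.~(\ref{eq:SpGapPert}). This pairing step is the main obstacle, since Bauer--Fike alone does not match eigenvalues; it is harmless because the claimed bound is vacuous ($\Delta-16(\cdots)^{1/2}\kappa\epsilon\le0\le\tilde\Delta$) precisely in the regime where the perturbation is too large to guarantee the pairing.
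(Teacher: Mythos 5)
Your proposal follows essentially the same route as the paper's proof in Appendix \ref{sec:proofLemSpGapPert}: an entrywise Lipschitz bound on the perturbed acceptance ratio (the paper's Lemma \ref{lem:Aerr} obtains $|\tilde{A}-A|\le 8\epsilon$ by the same case split on whether the ratio exceeds a constant), the interpolation $\|\delta W\|\le\sqrt{\|\delta W\|_1\|\delta W\|_\infty}$ to produce the square-root factor from the column sums of $T$, and the Bauer--Fike theorem with the condition number $\kappa$ of the diagonalizer. The one substantive refinement is that you make explicit the eigenvalue-pairing step in Bauer--Fike, which the paper's proof passes over when it asserts $\tilde{\Delta}\ge\Delta-\kappa\|\delta W\|$ directly from Theorem \ref{th:eigenPert}; your disk-separation and continuity argument is the standard way to justify that the subdominant eigenvalue of $\tilde{W}$ is matched to a non-unit eigenvalue of $W$, and your observation that the bound is weak exactly when the disks overlap is the right way to dispose of the remaining regime (note also that both your bound on $\|\delta W\|_1$ with its additive $1$ from the diagonal column entry and the paper's version require $\max_y\sum_{x\ne y}T_{xy}$ to be bounded below by a constant to recover the stated coefficient $16$, a minor constant-tracking point common to both arguments).
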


This lemma means that, although the MH Markov chain with $\tilde{L}$ instead of $L$ is different from the original one, the change of the spectral gap is small if $\tilde{L}$ is close to $L$.

Then, the proof of Lemma \ref{lem:Rtiltil} is as follows.

\begin{proof}[Proof of Lemma \ref{lem:Rtiltil}]

   Note that, because of Lemma \ref{lem:SpGapPert} and Eq. (\ref{eq:delLEpsPr}), the spectral gap $\tilde{\Delta}$ of $\mathcal{C}_{\tilde{L}}$ satisfies $\tilde{\Delta} \ge \frac{\Delta}{2}$.
    Then, because of Theorem \ref{th:Romega}, if we had an access to $\tilde{U}_{\epsilon^\prime}$, we could construct $\tilde{R}^\omega_{\ket{\tilde{P}},\frac{\delta}{2}}$ making $O\left(\frac{\log\left(1/\delta\right)}{\sqrt{\Delta}}\right)$ uses of $\tilde{U}_{\epsilon^\prime}$, with $\tilde{P}$ having the stated property.
    In reality, we can use $\tilde{\tilde{U}}_{\delta^\prime,\epsilon^\prime}$, an approximation of $\tilde{U}_{\epsilon^\prime}$ with some $\delta^\prime\in(0,1)$.
    Recalling Lemma \ref{lem:approxQWalk}, we see that the unitary $\mathcal{R}_{\delta^\prime}$ we obtain by using $\tilde{\tilde{U}}_{\delta^\prime,\epsilon^\prime}$ instead of $\tilde{U}_{\epsilon^\prime}$ in construction of $\tilde{R}^\omega_{\ket{\tilde{P}},\frac{\delta}{2}}$ acts as $\mathcal{R}_{\delta^\prime}\ket{\Xi}\ket{0}^{\otimes n_{\rm anc}}=\tilde{R}^\omega_{\ket{\tilde{P}},\frac{\delta}{2}}\ket{\Xi}\ket{0}^{\otimes n_{\rm anc}}+\ket{\xi^\prime}$, where $\|\ket{\xi^\prime}\|=O\left(\delta^\prime \frac{\log\left(1/\delta\right)}{\sqrt{\Delta}}\right)$.
    Thus, there exists $\delta^\prime=\Theta\left(\frac{\delta\sqrt{\Delta}}{\log\left(1/\delta\right)}\right)$ that makes $\|\ket{\xi^\prime}\|\le\frac{\delta}{2}$.
    Since $\left\|\tilde{R}^{\omega}_{\ket{\tilde{P}},\frac{\delta}{2}}\ket{\Xi}\ket{0}^{\otimes n_{\rm anc}}-(R^{\omega}_{\ket{\tilde{P}}}\ket{\Xi})\ket{0}^{\otimes n_{\rm anc}}\right\|\le\frac{\delta}{2}$, $\mathcal{R}_{\delta^\prime}$ with this $\delta^\prime$ is in fact $\tilde{\tilde{R}}^{\omega}_{L,\delta,\epsilon}$ that satisfies Eq. (\ref{eq:Rtiltil}).

    The statement on the qubit number follows since constructing $\tilde{R}^\omega_{\ket{\tilde{P}},\frac{\delta}{2}}$ with $\tilde{U}_{\epsilon^\prime}$ uses $O\left(\log\left(\frac{1}{\Delta}\right)\log\left(\frac{1}{\delta}\right)\right)$ qubits and using $\tilde{\tilde{U}}_{\delta^\prime,\epsilon^\prime}$ instead of $\tilde{U}_{\epsilon^\prime}$ adds $O\left(\left(\log M+\log\left(\frac{\sigma}{\epsilon^\prime}\right)\right)\log\left(\frac{\sigma}{\epsilon^\prime}\right)\log\log\left(\frac{\sigma}{\epsilon^\prime}\right)\log(\delta^\prime)^{-1}\right)$ qubits, whose sum is of order (\ref{eq:qubitRtiltil}).

    The upper bound (\ref{eq:compUtil}) on the number of queries to $O_{\ell}$ is obtained by substituting  $\delta^\prime=\Theta\left(\frac{\delta\sqrt{\Delta}}{\log\left(1/\delta\right)}\right)$ for $\delta$ and $\epsilon^\prime$ for $\epsilon$ in Eq. (\ref{eq:compQMCI}), which yields the query number in one $\tilde{\tilde{U}}_{\delta^\prime,\epsilon^\prime}$, and multiplying $O\left(\frac{\log\left(1/\delta\right)}{\sqrt{\Delta}}\right)$.
\end{proof}

We can use this approximate phase gate instead of the exact one in QSA.
Before we make a statement on this approximate QSA, let us make some preparation.
First, we make the following assumptions.
\begin{assumption}
    There exists $\Delta_{\rm min}\in(0,1)$ such that, for any $\beta\in(0,1]$, the spectral gap of the Markov chain $\mathcal{C}_{\beta L}$ is equal to or larger than $\Delta_{\rm min}$. 
    \label{ass:SpGap}
\end{assumption}
\begin{assumption}
    There exists $\kappa_{\rm min}\in\mathbb{R}_+$ such that, for any $\beta\in(0,1]$, the condition number of the matrix $Q_\beta$ that diagonalizes the transition matrix $W_\beta$ for the Markov chain $\mathcal{C}_{\beta L}$, which means $Q^{-1}_\beta W_\beta Q_\beta$ is diagonal, is equal to or smaller than $\kappa_{\rm min}$. 
    \label{ass:CondNum}
\end{assumption}

Besides, we also present the following lemma, whose proof is presented in Appendix. \ref{sec:proofLemTVvsLerr}.

\begin{lemma}
    Consider the Markov chains $\mathcal{C}_L$ and $\mathcal{C}_{\tilde{L}}$ with $L:\Omega\rightarrow\mathbb{R}_+$ and $\tilde{L}:\Omega\rightarrow\mathbb{R}_+$.
    Then, for their stationary distributions $P\propto P_0 e^{-L}$ and $\tilde{P}\propto P_0 e^{-\tilde{L}}$,
    \begin{equation}
        \|\tilde{P}-P\|_{\rm TV} \le 8\left(\left\lceil\frac{\log(2\sqrt{P_{\rm min}})}{\log(1-\Delta)}\right\rceil+\frac{1}{\Delta}\right)\times \max_{x\in\Omega} |\tilde{L}(x)-L(x)| \label{eq:diffPtilP}
    \end{equation}
    holds, where $\Delta$ is the spectral gap of $\mathcal{C}_L$ and $P_{\rm min}:=\min_{x\in\Omega}P(x)$.
    \label{lem:TVvsLerr}
\end{lemma}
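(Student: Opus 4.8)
The plan is to reduce the claim to the perturbed-acceptance-ratio bound already recorded in Eq. (\ref{eq:AppARDistDiffMH}), applied with $\mathcal{C}_L$ as the reference chain (so that $\Pi=P$, $\Pi_{\rm min}=P_{\rm min}$, and $\Delta$ is its spectral gap) and $\mathcal{C}_{\tilde L}$ as the perturbed chain (so that $\Pi^\prime=\tilde P$). Since both are Metropolis--Hastings chains sharing the same proposal $T$ and prior $P_0$ and differing only through the log-likelihood, they are reversible and irreducible, so Theorem \ref{th:mixtime} supplies the constants $\rho=1-\Delta$ and $C=1/(2\sqrt{P_{\rm min}})$ behind Eq. (\ref{eq:AppARDistDiffMH}), and that bound applies. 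All the work is therefore in bounding $\max_{x,y\in\Omega}|A(x,y)-\tilde A(x,y)|$ by a multiple of $\epsilon_0:=\max_{x\in\Omega}|\tilde L(x)-L(x)|$, where $A$ is the acceptance ratio (\ref{eq:accRatio}) of $\mathcal{C}_L$ and $\tilde A$ is that of $\mathcal{C}_{\tilde L}$, namely the quantity in Eq. (\ref{eq:Atil}).

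First I would write both acceptance ratios with a common factor. Setting $r(x,y):=P_0(y)T(y,x)/\bigl(P_0(x)T(x,y)\bigr)$, we have $A(x,y)=\min\{1,r(x,y)e^{L(x)-L(y)}\}$ and $\tilde A(x,y)=\min\{1,r(x,y)e^{\tilde L(x)-\tilde L(y)}\}$, so the two arguments of the minima differ only by the multiplicative factor $e^{(\tilde L(x)-L(x))-(\tilde L(y)-L(y))}$, whose exponent has modulus at most $2\epsilon_0$. Using that $t\mapsto\min\{1,t\}$ is $1$-Lipschitz, together with the elementary fact that for $\tilde a\le 1$ the difference $|a-\tilde a|=\tilde a\,|e^{u}-1|$ with $|u|\le 2\epsilon_0$ is bounded by $e^{2\epsilon_0}-1$, a short case check over the positions of $a:=r e^{L(x)-L(y)}$ and $\tilde a:=r e^{\tilde L(x)-\tilde L(y)}$ relative to $1$ gives $|A(x,y)-\tilde A(x,y)|\le e^{2\epsilon_0}-1$ for every $x,y\in\Omega$.

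Finally I would convert this into the stated linear-in-$\epsilon_0$ bound, exploiting the factor $8$ to avoid any smallness hypothesis on $\epsilon_0$. If $\epsilon_0\ge\tfrac18$, then since $\Delta\in(0,1)$ and the ceiling term is nonnegative, the right-hand side of (\ref{eq:diffPtilP}) is at least $8\epsilon_0/\Delta>1\ge\|\tilde P-P\|_{\rm TV}$, so the inequality holds trivially. If $\epsilon_0<\tfrac18$, then $2\epsilon_0<\ln 2$, whence $e^{2\epsilon_0}-1\le 4\epsilon_0$; substituting $\max_{x,y}|A-\tilde A|\le 4\epsilon_0$ into Eq. (\ref{eq:AppARDistDiffMH}) yields $\|\tilde P-P\|_{\rm TV}\le 4\epsilon_0\bigl(\lceil\cdots\rceil+\tfrac1\Delta\bigr)$, which is below the claimed bound. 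The main obstacle is the middle step: securing the clean constant in $|A-\tilde A|\le e^{2\epsilon_0}-1$ uniformly in $x,y$, in particular handling the boundary cases where exactly one of $a,\tilde a$ exceeds $1$, where one must verify $1-e^{-2\epsilon_0}\le e^{2\epsilon_0}-1$ rather than naively differencing the unclipped ratios $a$ and $\tilde a$ (which carry the uncontrolled prefactor $r$).
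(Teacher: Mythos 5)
Your proof is correct and follows the same top-level route as the paper's: reduce to the noisy-Metropolis--Hastings bound of Eq.~(\ref{eq:AppARDistDiffMH}) with $\mathcal{C}_L$ as the reference chain, control the acceptance-ratio perturbation in terms of $\epsilon_0=\max_{x\in\Omega}|\tilde L(x)-L(x)|$, and dispose of large $\epsilon_0$ by noting the right-hand side of (\ref{eq:diffPtilP}) then exceeds $1$. Where you diverge is the acceptance-ratio estimate. The paper invokes its Lemma~\ref{lem:Aerr}, whose proof splits on whether the unperturbed ratio $P(y)T(y,x)/(P(x)T(x,y))$ is at most $2$ (there it linearizes $e^{u}-1$ directly, which is why it needs $\epsilon_0\le\frac14$) or exceeds $2$ (there both clipped ratios equal $1$). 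You instead split on the positions of the two arguments of the $\min$ relative to $1$ and obtain the uniform bound $|A(x,y)-\tilde A(x,y)|\le e^{2\epsilon_0}-1$ with no smallness hypothesis, then linearize to $4\epsilon_0$ for $\epsilon_0<\frac18$; your case check, including the boundary case where exactly one argument exceeds $1$ (where $1-e^{-2\epsilon_0}\le e^{2\epsilon_0}-1$ is what is needed), is sound. This buys a self-contained estimate valid for all $\epsilon_0$ and a constant $4$ rather than $8$ in the nontrivial regime, at the cost of not reusing Lemma~\ref{lem:Aerr}, which the paper also needs for Lemma~\ref{lem:SpGapPert}. Both your argument and the paper's share the same implicit assumption that the ceiling term in (\ref{eq:diffPtilP}) is nonnegative (true whenever $P_{\rm min}\le\frac14$). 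The only blemish is notational: with $\tilde a=ae^{u}$ one has $|a-\tilde a|=\tilde a\,|e^{-u}-1|$ rather than $\tilde a\,|e^{u}-1|$, but since $|e^{\pm u}-1|\le e^{|u|}-1$ the stated bound is unaffected.
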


Then, we have the following theorem.

\begin{theorem}
    Suppose that Assumptions \ref{ass:P} to \ref{ass:CondNum} are satisfied.
    Then, for any $\epsilon,\delta\in(0,1)$, there exists an algorithm that makes
    \begin{align}
        & O\left(\frac{\sigma l_{\rm max}}{\epsilon^{\prime\prime}\sqrt{\Delta_{\rm min}}}\log^{3/2}\left(\frac{\sigma}{\epsilon^{\prime\prime}}\right)\log\log\left(\frac{\sigma}{\epsilon^{\prime\prime}}\right)\log\left(\frac{1}{\sqrt{\Delta_{\rm min}}}\right)\right. \nonumber \\
        &\qquad \left. \times \left(\log l_{\rm max}+\log L_{\rm max}\log\left(\frac{l_{\rm max}L_{\rm max}}{\delta}\right)\right) \right) \label{eq:compGetBetaApp}
    \end{align}
    queries to $O_{\ell}$, where
    \begin{align}
        &\epsilon^{\prime\prime}:=\min\left\{\frac{\Delta_{\rm min}\epsilon}{8\left(\Delta_{\rm min}\left\lceil\frac{\log(2\sqrt{P_{\rm min}})}{\log(1-\Delta_{\rm min})}\right\rceil+1\right)},\right. \nonumber \\
        &\qquad\qquad\quad\left.\frac{\Delta_{\rm min}}{16\sqrt{\max_{y\in\Omega} \sum_{x\in\Omega\setminus\{y\}} T_{xy}}\kappa_{\rm min}}, 
        \frac{\bar{L}}{2}\right\},
    \end{align}
    and, with probability at least $1-\delta$, outputs a sequence $\tilde{\beta}_0=0<\tilde{\beta}_1<\cdots<\tilde{\beta}_{l-1}<\tilde{\beta}_l=1$ with following properties:
    \begin{itemize}
        \item $l\le l_{\rm max}$

        \item Given this sequence, we have an unitary operator that generates a state $\widetilde{\ket{\tilde{P}}}$ $\epsilon$-close to $\ket{\tilde{P}}\ket{0}^{\otimes n_{\rm anc}}$, where $\tilde{P}$ is a probability distribution on $\Omega$ such that $\|\tilde{P}-P\|_{\rm TV}\le\epsilon$ and
        \begin{align}
        &n_{\rm anc}= \nonumber \\
        &\quad O\left(\log\left(\frac{1}{\Delta_{\rm min}}\right)\log\left(\frac{l_{\rm max}}{\epsilon^{\prime\prime}}\right)\right.+ \nonumber \\
        &\quad \qquad \left(\log M+\log\left(\frac{\sigma }{\epsilon^{\prime\prime}}\right)\right)\log\left(\frac{\sigma}{\epsilon^{\prime\prime}}\right) \log\log\left(\frac{\sigma}{\epsilon^{\prime\prime}}\right) \nonumber \\
        &\quad \qquad \left.\times\log\left(\frac{l_{\rm max}}{\epsilon^{\prime\prime}\sqrt{\Delta_{\rm min}}}\right)\right).  \label{eq:qubitQSAApp}
        \end{align}
        In that operator, $O_{\ell}$ is called
    \begin{align}
    & O\left(\frac{\sigma l_{\rm max}}{\epsilon^{\prime\prime} \sqrt{\Delta_{\rm min}}}\log^{3/2}\left(\frac{\sigma}{\epsilon^{\prime\prime}}\right)\log\log\left(\frac{\sigma}{\epsilon^{\prime\prime}}\right)\log\left(\frac{l_{\rm max}}{\epsilon \sqrt{\Delta_{\rm min}}}\right)\right. \nonumber \\
    &\left.\qquad\times\log^2\left(\frac{l_{\rm max}}{\epsilon}\right) \right) \label{eq:compQSAApp}
    \end{align}
    times.
        
    \end{itemize}

    \label{th:QSAbyQMCI}
\end{theorem}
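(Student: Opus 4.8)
The plan is to run the quantum simulated annealing procedure of Algorithm~\ref{alg:QSA} essentially unchanged, but along the \emph{perturbed} annealing path and with every exact phase gate replaced by its approximate counterpart from Lemma~\ref{lem:Rtiltil}. The starting observation is that the QMCI subroutine $O^{\rm mean}_{\mathcal{X},\epsilon,\delta,\sigma}$ of Theorem~\ref{th:QMCI} is a fixed unitary whose dominant term encodes a single deterministic estimate $\tilde{L}_{\rm sum}(x)$ of $L_{\rm sum}(x)$ for each $x$. Hence it defines one perturbed negative log-likelihood $\tilde{L}=\tilde{L}_{\rm sum}+\ell_0+C$ with $\max_{x\in\Omega}|\tilde{L}(x)-L(x)|\le\epsilon''$, and one coherent family of perturbed Gibbs distributions $\tilde{P}_\beta\propto P_0 e^{-\beta\tilde{L}}$ sharing the same $\tilde{L}$; note that at $\beta=0$ the perturbation vanishes, so the initial state $\ket{P_0}$ supplied by Assumption~\ref{ass:OP0} is exact. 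By Lemma~\ref{lem:Rtiltil} (built on Lemmas~\ref{lem:approxB} and~\ref{lem:approxQWalk} under Assumptions~\ref{ass:P}, \ref{ass:oraclesForT} and~\ref{ass:OAR}), the approximate gate $\tilde{\tilde{R}}^{\omega}_{\beta L,\delta',\epsilon''}$ acts as the exact phase gate $R^{\omega}_{\ket{\tilde{P}_\beta}}$ up to an unnormalized residual of norm $\le\delta'$, both for $\omega=\omega_{\pi/3}$ (the A$\frac{\pi}{3}$AA steps) and $\omega=-1$ (the ANAE reflection steps). Thus the whole run is a genuine QSA execution for $\{\mathcal{C}_{\beta\tilde{L}}\}$ whose output is $\ket{\tilde{P}_1}=\ket{\tilde{P}}$.

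The second step is to justify the three branches defining $\epsilon''$. The middle branch, combined with Lemma~\ref{lem:SpGapPert} applied at each $\beta$ to the pair $\mathcal{C}_{\beta L},\mathcal{C}_{\beta\tilde{L}}$ (using Assumptions~\ref{ass:SpGap} and~\ref{ass:CondNum} for the uniform bounds $\Delta_{\rm min}$ and $\kappa_{\rm min}$, and $\beta\le1$ so the perturbation stays $\le\epsilon''$), keeps the spectral gap of every perturbed chain bounded below by a constant fraction of $\Delta_{\rm min}$ (namely $\Delta_{\rm min}/2$, exactly as in the proof of Lemma~\ref{lem:Rtiltil}); this is what lets the per-gate cost and ancilla counts of Lemma~\ref{lem:Rtiltil} be stated with $\Delta$ replaced by $\Delta_{\rm min}$. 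The last branch $\epsilon''\le\bar{L}/2$ forces $\mathbb{E}_{P_0}[\tilde{L}]\in[\bar{L}/2,3\bar{L}/2]$, so the perturbed schedule needs at most $l\le l_{\rm max}=\sqrt{\bar{L}\log\bar{L}}$ temperatures, matching Algorithm~\ref{alg:QSA}. The first branch is calibrated so that Lemma~\ref{lem:TVvsLerr}, applied to $\mathcal{C}_L,\mathcal{C}_{\tilde{L}}$ at $\beta=1$ with spectral gap $\ge\Delta_{\rm min}$, gives $\|\tilde{P}-P\|_{\rm TV}\le 8\big(\lceil\tfrac{\log(2\sqrt{P_{\rm min}})}{\log(1-\Delta_{\rm min})}\rceil+\Delta_{\rm min}^{-1}\big)\epsilon''\le\epsilon$, the required closeness of the produced target $\tilde{P}$ to $P$.

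The third step is the error and failure-probability budget. Using approximate rather than exact gates deviates from the ideal perturbed run in two ways: the per-gate residuals $\delta'\ket{\tilde{\Psi}}$ and the finite failure probability of the ANAE overlap estimates that drive the schedule search. Exactly as in~\cite{Harrow2020}, I would assign each ANAE call failure probability $\eta/(l_{\rm max}L_{\rm max})$ with $\eta=\delta$, so that a union bound over all calls makes the schedule search succeed with probability $\ge1-\delta$; and I would pick the per-gate residual $\delta'$ polynomially small in $\epsilon,\delta,\Delta_{\rm min}$ and in the (fixed) total number of gate applications, so that the residuals accumulate to at most $\epsilon$ in the final state and are negligible in the constant-accuracy ANAE steps. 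Since $\delta'$ enters the per-gate cost only logarithmically, this contributes only the extra logarithmic factors appearing in Eqs.~(\ref{eq:compGetBetaApp}) and~(\ref{eq:compQSAApp}).

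Finally, the complexity bounds follow by composition. Theorem~\ref{th:QSA}, applied to the perturbed path whose gaps are $\Theta(\Delta_{\rm min})$, gives the number of quantum-walk-operator queries in the schedule-search and state-generation phases; replacing each walk operator by the approximate $\tilde{\tilde{U}}_{\delta',\epsilon''}$ of Lemma~\ref{lem:approxQWalk}, which calls $O_\ell$ a number of times of order~(\ref{eq:compQMCI}) with $\epsilon\to\epsilon''$ and $\delta\to\delta'$, and multiplying the two counts yields Eqs.~(\ref{eq:compGetBetaApp}) and~(\ref{eq:compQSAApp}), while the ancilla registers of the two layers add to give Eq.~(\ref{eq:qubitQSAApp}). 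I expect the main obstacle to be the bookkeeping of this composition: one must verify that the single deterministic $\tilde{L}$ simultaneously (i) preserves all spectral gaps, (ii) keeps the schedule length below $l_{\rm max}$, and (iii) yields $\|\tilde{P}-P\|_{\rm TV}\le\epsilon$, while checking that the accumulated residual and ANAE failures do not corrupt the output --- that is, that the three independently motivated branches of $\epsilon''$ together with the choices of $\delta'$ and $\eta$ are mutually consistent and tight enough to reproduce the stated logarithmic factors.
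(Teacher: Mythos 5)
Your proposal is correct and follows essentially the same route as the paper's proof: a single deterministic perturbed $\tilde{L}$ from QMCI defines the coherent family $\tilde{P}_\beta\propto P_0 e^{-\beta\tilde{L}}$, Algorithm \ref{alg:QSA} is run with the approximate phase gates of Lemma \ref{lem:Rtiltil}, the three branches of $\epsilon''$ are used exactly as you describe (Lemma \ref{lem:SpGapPert} for the gap, $\bar{L}/2$ for $\tilde{l}_{\rm max}=O(l_{\rm max})$, and Lemma \ref{lem:TVvsLerr} for the TV bound), and the query counts are obtained by composing the per-gate cost with the gate counts from \cite{Harrow2020}. The bookkeeping you flag as the main obstacle is indeed all the paper's proof does beyond this outline.
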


For $\epsilon$ such that
\begin{align}
    \epsilon&\le\min\left\{\frac{1}{2\sqrt{\max_{y\in\Omega} \sum_{x\in\Omega\setminus\{y\}} T_{xy}}\kappa_{\rm min}},\frac{4\bar{L}}{\Delta_{\rm min}}\right\} \nonumber \\
    &\qquad \times\left(\Delta_{\rm min}\left\lceil\frac{\log(2\sqrt{P_{\rm min}})}{\log(1-\Delta_{\rm min})}\right\rceil+1\right) \label{eq:smallEpsCond},
\end{align}
Eq. (\ref{eq:compQSAApp}) becomes
\begin{equation}
    \tilde{O}\left(\frac{\sigma \bar{L}^{1/2}}{\epsilon \Delta_{\rm min}^{3/2}}\right).
    \label{eq:compQSAAppWoLog}
\end{equation}

\begin{proof}[Proof of Theorem \ref{th:QSAbyQMCI}]

First, note that, for any $\beta\in(0,1]$,  $\tilde{\tilde{R}}^{\omega}_{\beta L,\delta^\prime,\epsilon^{\prime\prime}}$ is equal to $\tilde{R}^{\omega}_{\ket{\tilde{P}_{\beta}},\delta^\prime}$, where $\delta^\prime\in(0,1)$ and $\tilde{P}_{\beta}$ is a distribution on $\Omega$ in the form of $\tilde{P}_{\beta}\propto P_0e^{-\beta\tilde{L}}$ with some function $\tilde{L}:\Omega\rightarrow \mathbb{R}_+$ satisfying
\begin{equation}
    \max_{x\in\Omega} |\tilde{L}(x)-L(x)| \le \epsilon^{\prime\prime}. \label{eq:tilLepsprpr}
\end{equation}
On the other hand, according to \cite{Harrow2020}, given $\tilde{R}^{\omega_{\pi/3}}_{\ket{\tilde{P}_{\beta}},\delta^\prime}$ and $\tilde{R}^{-1}_{\ket{\tilde{P}_{\beta}},\delta^\prime}$ with some $\delta^\prime=\Theta(1)$ for any $\beta\in(0,1]$, Algorithm \ref{alg:QSA} outputs the sequence $\tilde{\beta}_0=0<\tilde{\beta}_1<\cdots<\tilde{\beta}_{l-1}<\tilde{\beta}_l=1$ such that $l\le \tilde{l}_{\rm max}$ and $|\braket{\tilde{P}_{\tilde{\beta}_i} | \tilde{P}_{\tilde{\beta}_{i+1}}}|^2\ge \frac{9}{10e^2}$ with probability at least $1-\delta$, making $O\left(\tilde{l}_{\rm max}\log\tilde{l}_{\rm max}\right)$ uses of operators in $\left\{\tilde{R}^{\omega_{\pi/3}}_{\ket{\tilde{P}_{\beta}},\delta^\prime} \ \middle| \ \beta\in(0,1]\right\}$ and $O\left(\tilde{l}_{\rm max}\log \tilde{L}_{\rm max}\log \left(\frac{\tilde{l}_{\rm max}\tilde{L}_{\rm max}}{\delta}\right)\right)$ uses of operators in $\left\{\tilde{R}^{\omega_{-1}}_{\ket{\tilde{P}_{\beta}},\delta^\prime} \ \middle| \ \beta\in(0,1]\right\}$.
Here, $\tilde{l}_{\rm max}:=\mathbb{E}_{P_0}[\tilde{L}(x)]\log\left(\mathbb{E}_{P_0}[\tilde{L}(x)]\right)$ and $\tilde{L}_{\rm max}:=\max_{x\in\Omega} \tilde{L}(x)$, which are of order $O(l_{\rm max})$ and $O(L_{\rm max})$, respectively, because of Eq. (\ref{eq:tilLepsprpr}).
Since we can construct $\tilde{\tilde{R}}^{\omega}_{\beta L,\delta^\prime,\epsilon^{\prime\prime}}$ similarly to $\tilde{\tilde{R}}^{\omega}_{L,\delta^\prime,\epsilon^{\prime\prime}}$ using $O_\ell$, we can obtain the above $\left\{\tilde{\beta}_i\right\}$ by Algorithm \ref{alg:QSA}.
Because of Lemma \ref{lem:Rtiltil}, the number of queries to $O_\ell$ in $\tilde{\tilde{R}}^{\omega}_{\beta L,\delta^\prime,\epsilon^{\prime\prime}}$ is 
\begin{equation}
    O\left(\frac{\sigma}{\epsilon^{\prime\prime}\sqrt{\tilde{\Delta}_{\rm min}}}\log^{3/2}\left(\frac{\sigma}{\epsilon^{\prime\prime}}\right)\log\log\left(\frac{\sigma}{\epsilon^{\prime\prime}}\right)\log\left(\frac{1}{\sqrt{\tilde{\Delta}_{\rm min}}}\right)\right).
\end{equation}
Here, $\tilde{\Delta}_{\rm min}$ is a lower bound of the spectral gaps of $\{\mathcal{C}_{\beta\tilde{L}} \ | \ \beta\in(0,1] \}$, which satisfies $\tilde{\Delta}_{\rm min}\ge\frac{\Delta_{\rm min}}{2}$ because of Lemma \ref{lem:SpGapPert} and Eq. (\ref{eq:tilLepsprpr}).
Combining the above discussions, we see that in finding $\tilde{\beta}_1,...,\tilde{\beta}_{l-1}$ the total number of calls to $O_{\ell}$ is of order (\ref{eq:compGetBetaApp}).

After that, as shown in \cite{Harrow2020}, a series of A$\frac{\pi}{3}$AA generates $\widetilde{\ket{\tilde{P}}}$ $\epsilon$-close to $\ket{\tilde{P}}$, where $\tilde{P}:=\tilde{P}_1\propto P_0 e^{-\tilde{L}}$.
According to \cite{Harrow2020}, in this procedure, we makes $O\left(l \log\left(\frac{l}{\epsilon}\right)\right)$ uses of $\tilde{\tilde{R}}^{\omega}_{\tilde{\beta}_i L,\delta^{\prime\prime},\epsilon^{\prime\prime}}$ with some $\delta^{\prime\prime}=\Theta\left(\frac{\epsilon}{l \log\left(l/\epsilon\right)}\right)$.
The number of calls to $O_\ell$ in $\tilde{\tilde{R}}^{\omega}_{\tilde{\beta}_i L,\delta^{\prime\prime},\epsilon^{\prime\prime}}$ is
\begin{equation}
    O\left(\frac{\sigma}{\epsilon^{\prime\prime}\sqrt{\Delta_{\rm min}}}\log^{3/2}\left(\frac{\sigma}{\epsilon^{\prime\prime}}\right)\log\log\left(\frac{\sigma}{\epsilon^{\prime\prime}}\right)\log\left(\frac{l}{\epsilon\sqrt{\Delta_{\rm min}}}\right)\log\left(\frac{l}{\epsilon}\right)\right)
\end{equation}
because of Lemma \ref{lem:Rtiltil}, and multiplying $l \log\left(\frac{l}{\epsilon}\right)$ to this and replacing $l$ with its upper bound $l_{\rm max}$ yields the bound on the total query number in Eq. (\ref{eq:compQSAApp}).

The statement on $n_{\rm anc}$ is obtained by substituting $\delta^{\prime\prime}$ for $\delta$ and $\epsilon^{\prime\prime}$ for $\epsilon^\prime$ in Eq. (\ref{eq:qubitRtiltil}).

Lastly, $\|\tilde{P}-P\|_{\rm TV}\le\epsilon$ is seen from Lemma \ref{lem:TVvsLerr} and Eq. (\ref{eq:tilLepsprpr}).
\end{proof}

\subsection{Finding the credible interval \label{sec:CI}}

By the above method, we can get an approximation of the quantum state in which the target distribution $P$ is encoded in amplitudes.
However, in practice, our aim is not to get a quantum state but some statistics on $P$.
Although it seems that the previous studies on quantum algorithms for MCMC have not focused on this point, this paper considers it.
Concretely, as a quantity that we typically want, we consider the equal-tailed credible interval of a random variable that obeys $P$.
Formally, writing $x\in\Omega\subset \mathbb{R}^d$ as $x=(x^{(1)},...,x^{(d)})$ and defining $\Omega^{(i)}:=\{x^{(i)} | x\in\Omega\}$, 
we want $x_{\rm ub}^{(i)},x_{\rm lb}^{(i)}\in\Omega^{(i)}$ that satisfy\footnote{In the current setting that $\Omega$ is discrete and so is each $\Omega^{(i)}$, $x_{\rm ub}^{(i)}$ and $x_{\rm lb}^{(i)}$ satisfying Eq. (\ref{eq:defCI}) may not exist. However, for simplicity, we now assume that such $x_{\rm ub}^{(i)}$ and $x_{\rm lb}^{(i)}$ exist. As long as the discretization is sufficiently fine as assumed in Sec. \ref{sec:notation}, it is reasonable to expect that there are $x_{\rm ub}^{(i)},x_{\rm lb}^{(i)}\in\Omega^{(i)}$ such that $P(\{x_i>x_{\rm ub}^{(i)}\})$ and $P(\{x_i<x_{\rm lb}^{(i)}\})$ are much closer to $\frac{\alpha}{2}$ than the accuracy we require.}
\begin{equation}
    P\left(\left\{x_i>x_{\rm ub}^{(i)}\right\}\right)=\frac{\alpha}{2}, P\left(\left\{x_i<x_{\rm lb}^{(i)}\right\}\right)=\frac{\alpha}{2} \label{eq:defCI}
\end{equation}
with a credibility level $\alpha\in(0,1)$ for each $i\in[d]$.
In other words, $x^{(i)}$ is in the interval $\left[x_{\rm lb}^{(i)},x_{\rm ub}^{(i)}\right]$ with probability $\alpha$.
A typical example of this type of problem is parameter estimation by Bayesian inference: with $P$ the posterior distribution of the parameters in some statistical model, we find the bound for each parameter in the above form.

Given the quantum circuit to approximately generate $\ket{P}$, a natural approach is finding $x_{\rm ub}^{(i)}$ by binary search with the CDF $\Phi_{P}(a):=P(\{x^{(i)}>a\})$ computed by QMCI (and $x_{\rm lb}^{(i)}$ is found in the same fashion).
We hereafter elaborate this approach.
First, we describe how to compute $\Phi_{P}(a)$.

\begin{lemma}
    Suppose that Assumptions \ref{ass:P} to \ref{ass:CondNum} are satisfied.
    Then, for any $i\in[d]$, $\epsilon,\delta\in(0,1)$ and $a\in\Omega^{(i)}$, there exists an algorithm that, with probability at least $1-\delta$, outputs an $\epsilon$-approximation of $\Phi_P(a)$, making
    \begin{align}
        &O\left(\frac{\sigma l_{\rm max}}{\epsilon^{\prime\prime}\sqrt{\Delta_{\rm min}}}\log^{3/2}\left(\frac{\sigma}{\epsilon^{\prime\prime}}\right)\log\log\left(\frac{\sigma}{\epsilon^{\prime\prime}}\right)\log\left(\frac{1}{\sqrt{\Delta_{\rm min}}}\right) \right. \nonumber \\
        &\qquad \times 
        \left(\log l_{\rm max}+\log L_{\rm max}\log\left(\frac{l_{\rm max}L_{\rm max}}{\delta}\right)\right) \nonumber \\
        &\quad + \frac{\sigma l_{\rm max}}{\epsilon\epsilon^{\prime\prime} \sqrt{\Delta_{\rm min}}}\log^{3/2}\left(\frac{\sigma}{\epsilon^{\prime\prime}}\right)\log\log\left(\frac{\sigma}{\epsilon^{\prime\prime}}\right)\log\left(\frac{l_{\rm max}}{\epsilon \sqrt{\Delta_{\rm min}}}\right) \nonumber \\
        &\qquad \times \left.\log^2\left(\frac{l_{\rm max}}{\epsilon}\right)\log\left(\frac{1}{\delta}\right) \right) \label{eq:compGetPhia}
    \end{align}      
    uses of $O_{\ell}$. \label{lem:AlgCDF}
\end{lemma}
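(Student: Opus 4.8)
The plan is to reduce the task to estimating the cumulative distribution function of the computable perturbed distribution $\tilde{P}$, and then to evaluate the latter by applying QMCI to the indicator of the event $\{x^{(i)}>a\}$, using the state-generation unitary of Theorem \ref{th:QSAbyQMCI} as the state-preparation oracle.

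First I would invoke Theorem \ref{th:QSAbyQMCI} with its accuracy parameter set to $\epsilon_{\rm g}:=\epsilon/6$ and its failure probability set to $\delta/2$. This produces, once and with probability at least $1-\delta/2$, the temperature schedule $\tilde{\beta}_0=0<\cdots<\tilde{\beta}_l=1$ together with a unitary $V_{\tilde{P}}$ whose output $\widetilde{\ket{\tilde{P}}}$ is $\epsilon_{\rm g}$-close to $\ket{\tilde{P}}\ket{0}^{\otimes n_{\rm anc}}$ for a distribution $\tilde{P}\propto P_0 e^{-\tilde{L}}$ obeying $\|\tilde{P}-P\|_{\rm TV}\le\epsilon_{\rm g}$. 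Writing $\Phi_{\tilde{P}}(a):=\tilde{P}(\{x^{(i)}>a\})$ and observing that $\{x^{(i)}>a\}\in 2^\Omega$, the definition of the total variation distance immediately gives $|\Phi_P(a)-\Phi_{\tilde{P}}(a)|\le\|\tilde{P}-P\|_{\rm TV}\le\epsilon_{\rm g}$, so it remains to estimate $\Phi_{\tilde{P}}(a)$.

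Next I would apply QMCI in its bounded-integrand form, Theorem \ref{th:QMCIBound}, to the function $F(x):=\mathbf{1}_{x^{(i)}>a}$, which takes values in $[0,1]$ and satisfies $\Phi_{\tilde{P}}(a)=\sum_{x\in\Omega}\tilde{P}(x)F(x)$. The oracle $O_F$ is a comparator acting only on the register carrying $x^{(i)}$ and so makes no queries to $O_\ell$, while the state-preparation oracle is taken to be $V_{\tilde{P}}$, with target accuracy $\epsilon/2$ and failure probability $\delta/2$. The delicate point is that $V_{\tilde{P}}$ prepares the ideal state only approximately. Since $V_{\tilde{P}}$ is nonetheless an exact unitary, the amplitude-estimation core of QMCI runs without error accumulation and returns, with probability at least $1-\delta/2$, an $\epsilon/2$-approximation of the good-subspace probability $p$ of the state obtained from $V_{\tilde{P}}\ket{0}$. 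Because the comparator and the subsequent controlled rotation are identical unitaries in the ideal and the actual runs, the two post-preparation states differ in norm by at most $\epsilon_{\rm g}$, and a reverse-triangle-inequality estimate on the squared norms of their good-subspace components gives $|p-\Phi_{\tilde{P}}(a)|\le 2\epsilon_{\rm g}=\epsilon/3$. A union bound over the two phases and the triangle inequality $|\hat{p}-\Phi_P(a)|\le \epsilon/2+2\epsilon_{\rm g}+\epsilon_{\rm g}=\epsilon$ then show that the overall output is an $\epsilon$-approximation of $\Phi_P(a)$ with probability at least $1-\delta$.

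Finally I would tally the queries to $O_\ell$. The schedule-finding phase is executed once at the cost in Eq.~(\ref{eq:compGetBetaApp}), which reproduces the first two lines of Eq.~(\ref{eq:compGetPhia}); since $\epsilon_{\rm g}=\Theta(\epsilon)$, the parameter $\epsilon^{\prime\prime}$ and all logarithmic factors agree up to constants. By Theorem \ref{th:QMCIBound}, QMCI invokes $V_{\tilde{P}}$ and $V_{\tilde{P}}^\dagger$ a total of $O(\epsilon^{-1}\log\delta^{-1})$ times, and each invocation costs the number of $O_\ell$-queries in Eq.~(\ref{eq:compQSAApp}); multiplying the two yields exactly the last two lines of Eq.~(\ref{eq:compGetPhia}), and adding the one-time setup cost gives the stated total. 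I expect the genuine obstacle to be the middle step: establishing rigorously that QMCI driven by the merely approximate preparation $V_{\tilde{P}}$ still estimates $\Phi_{\tilde{P}}(a)$ with an extra error that is linear in the preparation error $\epsilon_{\rm g}$ and, crucially, does not grow with the $O(\epsilon^{-1})$ QMCI iterations precisely because $V_{\tilde{P}}$ is exactly unitary.
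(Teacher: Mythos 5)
Your proposal is correct and follows essentially the same route as the paper: one invocation of Theorem \ref{th:QSAbyQMCI} to obtain the schedule and the approximate state-preparation unitary, a TV-distance bound to pass from $\Phi_P(a)$ to $\Phi_{\tilde{P}}(a)$, bounded-integrand QMCI on the indicator with that unitary as the state-preparation oracle, and the same two-term query tally. The only difference is in the ``delicate point'' you flag: you bound the shift in the good-subspace probability by the general estimate $\bigl|\|\Pi\ket{\psi}\|^2-\|\Pi\ket{\psi'}\|^2\bigr|\le 2\|\ket{\psi}-\ket{\psi'}\|$, whereas the paper reaches the same linear-in-$\epsilon_{\rm g}$ conclusion by explicitly decomposing the approximate state into on-$\Omega$ and off-$\Omega$ components and applying the Cauchy--Schwarz inequality; both arguments are valid and yield matching constants up to the (immaterial) choice of error split.
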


\begin{proof}
    Because of Theorem \ref{th:QSAbyQMCI}, by Algorithm \ref{alg:QSA}, we get $\beta_0,...,\beta_l$, with which A$\frac{\pi}{3}$AA generates $\widetilde{\ket{\tilde{P}_\star}}$ $\frac{\epsilon}{9}$-close to $\ket{\tilde{P}_\star}\ket{0}^{n_{\rm anc}}$ with $n_{\rm anc}$ of order (\ref{eq:qubitQSAApp}).
    Here, $\tilde{P}_\star$ is a distribution on $\Omega$ such that $\|\tilde{P}_\star-P\|_{\rm TV}\le \frac{\epsilon}{3}$.
    We denote this A$\frac{\pi}{3}$AA by $\mathcal{V}_P$.

    Note that $\widetilde{\ket{\tilde{P}_\star}}$ is written as follows:
    \begin{equation}
        \widetilde{\ket{\tilde{P}_\star}}=\sum_{x\in\Omega} \sqrt{\tilde{P}_\star(x)} \ket{x}\ket{0}^{\otimes n_{\rm anc}} + \hat{\epsilon}\sum_{\hat{x}\in\hat{\Omega}} \sqrt{\hat{P}(\hat{x})} \ket{\hat{x}}\ket{\psi_{\hat{x}}}
    \end{equation}
    where $\hat{\epsilon}\in\left[0,\frac{\epsilon}{9}\right)$, $\hat{\Omega}$ is a finite subset of $\mathbb{R}^d$ that may differ from $\Omega$, $\hat{P}$ is a distribution on $\hat{\Omega}$, and $\ket{\psi_{\hat{x}}}$ is a state on $n_{\rm anc}$ ancillary qubits.
    $\widetilde{\ket{\tilde{P}_\star}}$ can be rewritten as
    \begin{align}
        \widetilde{\ket{\tilde{P}_\star}}&=\sum_{x\in\Omega}  \ket{x}\left(\sqrt{\tilde{P}_\star(x)}\ket{0}^{\otimes n_{\rm anc}}+\hat{\epsilon}\sqrt{\hat{P}(x)}\ket{\psi_{x}}\right) \nonumber \\ 
        & \quad + \hat{\epsilon}\sum_{\hat{x}\in\hat{\Omega}\cap\overline{\Omega}} \sqrt{\hat{P}(\hat{x})} \ket{\hat{x}}\ket{\psi_{\hat{x}}} \nonumber \\ 
        &=\sum_{x\in\Omega} \sqrt{\tilde{P}^{\prime}_\star(x)} \ket{x}\ket{\tilde{\psi}_x} + \hat{\epsilon}\sum_{\hat{x}\in\hat{\Omega}\cap\overline{\Omega}} \sqrt{\hat{P}(\hat{x})} \ket{\hat{x}}\ket{\psi_{\hat{x}}}.
    \end{align}
    Here, $\ket{\tilde{\psi}_{x}}$ is a state on $n_{\rm anc}$ ancillary qubits and
    $\sqrt{\tilde{P}^{\prime}_\star(x)}:=\left\|\sqrt{\tilde{P}_\star(x)}\ket{0}^{\otimes n_{\rm anc}}+\hat{\epsilon}\sqrt{\hat{P}(x)}\ket{\psi_{x}}\right\|$.
    Then, $\sum_{x\in\Omega}\tilde{P}^{\prime}_\star(x)\le 1$ follows from $\left\|\widetilde{\ket{\tilde{P}_\star}}\right\|=1$ and
    \begin{equation}
       \left| \sqrt{\tilde{P}^{\prime}_\star(x)} - \sqrt{\tilde{P}_\star(x)}\right| \le \hat{\epsilon} \sqrt{\hat{P}(x)} \label{eq:temp3}
    \end{equation}
    follows from the triangle inequality.

    On the other hand, we can implement a quantum circuit $O^{\rm comp}_a$ that acts as
    \begin{equation}
        O^{\rm comp}_a\ket{x}\ket{0}=\ket{x}\ket{\mathbf{1}_{x>a}}
    \end{equation}
    using arithmetic circuits.
    Therefore, because of Theorem \ref{th:QMCIBound}, the capability to generate $\widetilde{\ket{\tilde{P}_\star}}$ means that we can get a $\frac{\epsilon}{3}$-approximation $\tilde{\tilde{\Phi}}^\prime(a)$ of
    \begin{equation}
        \tilde{\Phi}^\prime(a):=\sum_{x\in\Omega} \tilde{P}^{\prime}_\star(x) \mathbf{1}_{x>a} + \hat{\epsilon}^2\sum_{\hat{x}\in\hat{\Omega}\cap\overline{\Omega}} \hat{P}(\hat{x})\mathbf{1}_{\hat{x}>a}
    \end{equation}
    by QMCI with probability $1-\delta$.
    Let us see that $\tilde{\tilde{\Phi}}^\prime(a)$ is an $\epsilon$-approximation of $\Phi(a)=\sum_{x\in\Omega} P(x) \mathbf{1}_{x>a}$.
    The difference between $\tilde{\Phi}^\prime(a)$ and $\Phi(a)$ is bounded as
    \begin{align}
        |\tilde{\Phi}^\prime(a)-\Phi(a)| 
        &\le |\tilde{\Phi}^\prime(a)-\tilde{\Phi}(a)| + |\tilde{\Phi}(a)-\Phi(a)|\nonumber \\
        &\le \sum_{x\in\Omega} \left|\tilde{P}^{\prime}_\star(x)-\tilde{P}_\star(x)\right|+\hat{\epsilon}^2\sum_{\hat{x}\in\hat{\Omega}\cap\overline{\Omega}} \hat{P}(\hat{x}) + \frac{\epsilon}{3}. \label{eq:tilPhiPraPhi}
    \end{align}
    Here, $\tilde{\Phi}(a):=\sum_{x\in\Omega} \tilde{P}_\star(x) \mathbf{1}_{x>a}$ and we used $|\tilde{\Phi}(a)-\Phi(a)|\le\frac{\epsilon}{3}$ that follows from $\|\tilde{P}_\star-P\|_{\rm TV}\le \frac{\epsilon}{3}$.
    The first term in Eq. (\ref{eq:tilPhiPraPhi}) is bounded as
    \begin{eqnarray}
        && \sum_{x\in\Omega} \left|\tilde{P}^{\prime}_\star(x)-\tilde{P}_\star(x)\right| \nonumber \\
        &= & \sum_{x\in\Omega} \left| \sqrt{\tilde{P}^{\prime}_\star(x)} - \sqrt{\tilde{P}_\star(x)}\right|\left(\sqrt{\tilde{P}^{\prime}_\star(x)} + \sqrt{\tilde{P}_\star(x)}\right) \nonumber \\
        &\le &\sum_{x\in\Omega}\hat{\epsilon} \sqrt{\hat{P}(x)}\left(\sqrt{\tilde{P}^{\prime}_\star(x)} + \sqrt{\tilde{P}_\star(x)}\right) \nonumber \\
        &\le & \hat{\epsilon}\left(\left(\sum_{x\in\Omega} \hat{P}(x)\right)^{1/2} \left(\sum_{x\in\Omega} \tilde{P}^{\prime}_\star(x)\right)^{1/2} \right. \nonumber \\
        &&\left.\quad +\left(\sum_{x\in\Omega} \hat{P}(x)\right)^{1/2} \left(\sum_{x\in\Omega} \tilde{P}_\star(x)\right)^{1/2}\right) \nonumber \\
        &\le & 2 \hat{\epsilon} \nonumber \\
        &\le & \frac{2}{9}\epsilon,
    \end{eqnarray}
    where we use Eq. (\ref{eq:temp3}) at the first inequality and the Cauchy–Schwarz inequality at the second inequality.
    The second term in Eq. (\ref{eq:tilPhiPraPhi}) is bounded as
    \begin{equation}
        \hat{\epsilon}^2\sum_{\hat{x}\in\hat{\Omega}\cap\overline{\Omega}} \hat{P}(\hat{x}) \le \hat{\epsilon}^2\le\hat{\epsilon}\le \frac{\epsilon}{9}.
    \end{equation}
    Consequently, $\tilde{\Phi}(a)$ is a $\frac{2}{3}\epsilon$-approximation of $\Phi(a)$, which means that $\tilde{\tilde{\Phi}}^\prime(a)$ is an $\epsilon$-approximation of $\Phi(a)$.

    Finally, let us check the query complexity.
    To get $\tilde{\beta}_0,...,\tilde{\beta}_l$ with probability at least $1-\frac{\delta}{2}$ by Algorithm \ref{alg:QSA}, we make a number of order (\ref{eq:compGetBetaApp}) of calls to $O_{\ell}$.
    After this, to get $\tilde{\tilde{\Phi}}^\prime(a)$ with probability at least $1-\frac{\delta}{2}$ by QMCI, we call the circuit to generate $\widetilde{\ket{\tilde{P}_\star}}$ $O\left(\frac{1}{\epsilon}\log\delta^{-1}\right)$ times, and one call to this circuit contains a number of order (\ref{eq:compQSAApp}) of calls to $O_{\ell}$.
    Summing up these, we see that the total number of calls to $O_{\ell}$ is of order (\ref{eq:compGetPhia}).
\end{proof}

Then, we reach the algorithm to find $x_{\rm ub}^{(i)}$.

\begin{theorem}
    Suppose that Assumptions \ref{ass:P} to \ref{ass:CondNum} are satisfied.
    Let $i\in[d]$, $\alpha\in(0,1)$, $\delta\in(0,1)$ and $\epsilon\in\left(0,\frac{\alpha}{2}\right)$. 
    Suppose that there exists $x^{(i)}\in\Omega^{(i)}$ such that
    \begin{equation}
       \frac{\alpha}{2} - \frac{\epsilon}{3} \le \Phi(x^{(i)}) \le \frac{\alpha}{2} + \frac{\epsilon}{3}.
       \label{eq:xubtilAssum}
    \end{equation}
    Then, there exists an algorithm that, with probability at least $1-\delta$, outputs $\tilde{x}_{\rm ub}^{(i)}\in\Omega^{(i)}$ such that
    \begin{equation}
       \frac{\alpha}{2} - \epsilon \le \Phi(\tilde{x}_{\rm ub}^{(i)}) \le \frac{\alpha}{2} + \epsilon.
       \label{eq:xubtil}
    \end{equation}
    making
    \begin{align}
        &O\left(\frac{\sigma l_{\rm max}}{\epsilon^{\prime\prime}\sqrt{\Delta_{\rm min}}}\log^{3/2}\left(\frac{\sigma}{\epsilon^{\prime\prime}}\right)\log\log\left(\frac{\sigma}{\epsilon^{\prime\prime}}\right)\log\left(\frac{1}{\sqrt{\Delta_{\rm min}}}\right) \right.\nonumber \\
        &\qquad\quad \times\left(\log l_{\rm max}+\log L_{\rm max}\log\left(\frac{l_{\rm max}L_{\rm max}}{\delta}\right)\right)  \nonumber \\
        &\qquad +\frac{\sigma l_{\rm max}}{\epsilon\epsilon^{\prime\prime} \sqrt{\Delta_{\rm min}}}\log^{3/2}\left(\frac{\sigma}{\epsilon^{\prime\prime}}\right)\log\log\left(\frac{\sigma}{\epsilon^{\prime\prime}}\right)\log\left(\frac{l_{\rm max}}{\epsilon \sqrt{\Delta_{\rm min}}}\right) \nonumber \\
        &\qquad\quad \left.\times\log^2\left(\frac{l_{\rm max}}{\epsilon}\right)\log\left(\frac{1}{\delta}\right)\log\left(\frac{\log|\Omega^{(i)}|}{\delta}\right) \log|\Omega^{(i)}|\right). \label{eq:compGetCI}
    \end{align}
    queries to $O_{\ell}$.
    \label{th:AlgCI}
\end{theorem}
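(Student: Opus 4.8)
The plan is to find $\tilde{x}_{\rm ub}^{(i)}$ by binary search over the finite sorted set $\Omega^{(i)}$, using Lemma \ref{lem:AlgCDF} to evaluate the monotone tail function $\Phi$ at each queried point. First I would record the structural fact that $\Phi(a)=P(\{x^{(i)}>a\})$ is non-increasing in $a$, so that the good set $G:=\{a\in\Omega^{(i)} : \frac{\alpha}{2}-\epsilon\le\Phi(a)\le\frac{\alpha}{2}+\epsilon\}$ is a contiguous block of $\Omega^{(i)}$, and that assumption (\ref{eq:xubtilAssum}) guarantees $G\neq\emptyset$: the point $x^{(i)}$ it produces has $\Phi(x^{(i)})$ within $\frac{\epsilon}{3}$ of $\frac{\alpha}{2}$ and hence lies in $G$ with margin. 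Since every element of $G$ satisfies the required bound (\ref{eq:xubtil}), it suffices to design a search that provably terminates inside $G$.

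Next I would set up the binary search over the sorted values $a_1<\cdots<a_N$ of $\Omega^{(i)}$, where $N=|\Omega^{(i)}|$. At each step I evaluate $\tilde{\Phi}(a_m)$ at the current midpoint by Lemma \ref{lem:AlgCDF} with accuracy $\frac{\epsilon}{3}$ and compare the result to $\frac{\alpha}{2}$; because $\Phi$ is non-increasing, $\tilde{\Phi}(a_m)\ge\frac{\alpha}{2}$ indicates the crossing lies at larger $a$ and $\tilde{\Phi}(a_m)<\frac{\alpha}{2}$ that it lies at smaller $a$, so I discard the corresponding half while retaining the midpoint as the new boundary. The heart of the argument is the invariant that the surviving interval always meets $G$. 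Assuming each evaluation is accurate to $\frac{\epsilon}{3}$, I would verify preservation: in the branch $\tilde{\Phi}(a_m)\ge\frac{\alpha}{2}$ one has $\Phi(a_m)\ge\frac{\alpha}{2}-\frac{\epsilon}{3}\ge\frac{\alpha}{2}-\epsilon$, and if the discarded half held the only good point $a_g$ (necessarily with $a_g<a_m$) then monotonicity gives $\Phi(a_m)\le\Phi(a_g)\le\frac{\alpha}{2}+\epsilon$, so $a_m\in G$ and $a_m$ is retained; the branch $\tilde{\Phi}(a_m)<\frac{\alpha}{2}$ is symmetric. After $O(\log N)$ halvings the interval collapses to a single index, which by the invariant lies in $G$ and therefore satisfies (\ref{eq:xubtil}); this index is the output $\tilde{x}_{\rm ub}^{(i)}$.

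For the probabilistic guarantee I would apply a union bound over the $O(\log N)$ evaluations, running each call to Lemma \ref{lem:AlgCDF} with failure probability $\delta/O(\log N)$, so that with probability at least $1-\delta$ all evaluations are simultaneously $\frac{\epsilon}{3}$-accurate, which is exactly the event under which the correctness argument holds. For the query count I would observe that the annealing schedule $\{\tilde{\beta}_i\}$ need only be produced once: its cost is the first block of (\ref{eq:compGetPhia}) (equivalently (\ref{eq:compGetBetaApp})), and it appears unchanged as the first block of (\ref{eq:compGetCI}). The state-generation circuit $\mathcal{V}_P$ built from this schedule is reused across all binary-search steps, so the remaining cost is the per-evaluation QMCI cost — the second block of (\ref{eq:compGetPhia}) — with $\delta$ replaced by $\delta/O(\log N)$ and multiplied by the $O(\log N)$ iteration count, which produces precisely the extra $\log|\Omega^{(i)}|$ and $\log(\log|\Omega^{(i)}|/\delta)$ factors appearing in the second block of (\ref{eq:compGetCI}).

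I expect the main obstacle to be the correctness invariant rather than the complexity bookkeeping. One must track the direction of monotonicity carefully so that the comparison threshold $\frac{\alpha}{2}$ together with the $\frac{\epsilon}{3}$ evaluation slack never lets the search discard the entire good block, and one must invoke assumption (\ref{eq:xubtilAssum}) together with $\epsilon<\frac{\alpha}{2}$ precisely to keep $G$ nonempty and to absorb the $\frac{\epsilon}{3}$ evaluation error within the final $\epsilon$ tolerance; handling the boundary (off-by-one) behavior of the halving so that termination lands on a single surviving index in $G$ is the remaining delicate point.
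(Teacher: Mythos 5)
Your proposal follows essentially the same route as the paper: sort $\Omega^{(i)}$, run a binary search in which each probe is a $\frac{\epsilon}{3}$-accurate CDF evaluation via Lemma \ref{lem:AlgCDF}, compute the annealing schedule only once, take a union bound over the $O(\log|\Omega^{(i)}|)$ probes with per-probe failure probability $\delta/O(\log|\Omega^{(i)}|)$, and assemble the complexity exactly as in Eq. (\ref{eq:compGetCI}). The invariant ``the surviving interval meets the good set $G$'' is a valid reformulation of the paper's correctness argument, and your preservation step (if the discarded half held the only good point, monotonicity forces the retained midpoint itself into $G$) is sound.

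The one genuine loose end is the termination/selection step you yourself flag, and it is not merely cosmetic. With a two-way comparison of $\tilde{\Phi}(a_m)$ against $\frac{\alpha}{2}$ and the midpoint retained as the new boundary, the search stalls once the interval consists of two adjacent indices: one of the two branches returns the same two-point interval, so the claim that the interval ``collapses to a single index'' is not established; and even granting collapse, you have not specified a rule for choosing between the two final adjacent candidates that provably lands in $G$ (naive choices fail, e.g.\ when only the upper candidate is $\epsilon$-good but its estimate drifts below $\frac{\alpha}{2}-\frac{2}{3}\epsilon$). The paper closes this with a three-way decision: it outputs $a_m$ immediately whenever $\bigl|\tilde{\tilde{\Phi}}^\prime(a_m)-\frac{\alpha}{2}\bigr|\le\frac{2}{3}\epsilon$ (which, given $\frac{\epsilon}{3}$-accuracy, certifies $\epsilon$-goodness of the output), and otherwise halves; it then argues by contradiction that if the loop ever reached $j_{\rm ub}-j_{\rm lb}=1$ without triggering the acceptance band, the two adjacent points would satisfy $\Phi(x^{(i)}_{j_{\rm lb}})>\frac{\alpha}{2}+\frac{\epsilon}{3}$ and $\Phi(x^{(i)}_{j_{\rm ub}})<\frac{\alpha}{2}-\frac{\epsilon}{3}$, contradicting assumption (\ref{eq:xubtilAssum}). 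You need this acceptance-band device (or the equivalent observation that your invariant actually preserves meeting the stronger set $\{a:|\Phi(a)-\frac{\alpha}{2}|\le\frac{\epsilon}{3}\}$, so that at the end one of the two adjacent candidates must have its estimate within $\frac{2}{3}\epsilon$ of $\frac{\alpha}{2}$ and can be selected on that basis) to finish the argument; everything else in your plan matches the paper.
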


For $\epsilon$ satisfying Eq. (\ref{eq:smallEpsCond}), Eq. (\ref{eq:compGetCI}) becomes
\begin{equation}
    \tilde{O}\left(\frac{\sigma \bar{L}^{1/2}}{\epsilon^2 \Delta_{\rm min}^{3/2}}\right).
    \label{eq:compGetCIWoLog}
\end{equation}

\begin{proof}[Proof of Theorem \ref{th:AlgCI}]
    The algorithm is presented as Algorithm \ref{alg:CI}.

\begin{algorithm}[H]
    \begin{algorithmic}[1]
    \REQUIRE {
        \ \\
        \begin{itemize}        




        \item Accuracy $\epsilon\in(0,1)$

        \item Failure probability $\delta\in(0,1)$

        \item Credibility level $\alpha\in(0,1)$
        
        \end{itemize}
    }

    \STATE By Algorithm \ref{alg:QSA}, get $\tilde{\beta}_0,...,\tilde{\beta}_l$, with which A$\frac{\pi}{3}$AA generates a state $\frac{\epsilon}{27}$-close to $\ket{\tilde{P}_\star}\ket{0}^{\otimes n_{\rm anc}}$, where $n_{\rm anc}$ is of order (\ref{eq:qubitQSAApp}) and $\tilde{P}_\star$ is a distribution on $\Omega$ satisfying $\|\tilde{P}_\star-P\|_{\rm TV}\le\frac{\epsilon}{9}$.
    
    \STATE Using the obtained $\tilde{\beta}_0,...,\tilde{\beta}_l$, construct the above A$\frac{\pi}{3}$AA as a quantum circuit $\mathcal{V}_P$.

    \STATE Sort the elements of $\Omega^{(i)}$ in the ascending order and denote them by $x^{(i)}_1,...,x^{(i)}_{n_i}$, where $n_i:=|\Omega^{(i)}|$.

    \STATE By QMCI with $\mathcal{V}_P$, get a $\frac{\epsilon}{3}$-approximation $\tilde{\tilde{\Phi}}^\prime(x^{(i)}_1)$ of $\Phi(x^{(i)}_1)$ with failure probability $\delta^\prime:=\frac{\delta}{n_{\rm max}+1}$, where $n_{\rm max}:=\left\lceil\log_2 (n_i-2)\right\rceil+1$ (we do not need to compute $\Phi(x^{(i)}_{n_i})$ since it is 0).

    \IF{$\left|\tilde{\tilde{\Phi}}^\prime(x^{(i)}_1)-\frac{\alpha}{2}\right|\le \frac{2}{3}\epsilon$}
        \STATE Output $x^{(i)}_1$ as $\tilde{x}_{\rm ub}^{(i)}$ and stop.
    \ELSIF{$\tilde{\tilde{\Phi}}^\prime(x^{(i)}_1)<\frac{\alpha}{2}-\frac{2}{3}\epsilon$}
        \STATE Stop with no output.
    \ENDIF

    \STATE Set $j_{\rm ub}=n_i$ and $j_{\rm lb}=1$.

    \REPEAT 

    \STATE Set $j_{\rm mid}=\left\lceil\frac{j_{\rm ub}+j_{\rm lb}}{2}\right\rceil$.

    \STATE By QMCI with $\tilde{V}_P$, get a $\frac{\epsilon}{3}$-approximation $\tilde{\tilde{\Phi}}^\prime(x^{(i)}_{j_{\rm mid}})$ of $\Phi(x^{(i)}_{j_{\rm mid}})$ with failure probability $\delta^\prime$.

    \IF{$\left|\tilde{\tilde{\Phi}}^\prime(x^{(i)}_{j_{\rm mid}})-\frac{\alpha}{2}\right|\le \frac{2}{3}\epsilon$}
        \STATE Output $x^{(i)}_{j_{\rm mid}}$ as $\tilde{x}_{\rm ub}^{(i)}$ and stop.
    \ELSIF{$\tilde{\tilde{\Phi}}^\prime(x^{(i)}_{j_{\rm mid}})>\frac{\alpha}{2}+\frac{2}{3}\epsilon$}
        \STATE Set $j_{\rm lb}=j_{\rm mid}$.
    \ELSE [in this case, $\tilde{\tilde{\Phi}}^\prime(x^{(i)}_{j_{\rm mid}})<\frac{\alpha}{2}-\frac{2}{3}\epsilon$]
        \STATE Set $j_{\rm ub}=j_{\rm mid}$.
    \ENDIF

    \UNTIL{$j_{\rm ub}-j_{\rm lb}=1$}

    \STATE Output $x^{(i)}_{j_{\rm mid}}$.
    
    \caption{Algorithm to find $x_{\rm ub}^{(i)}$}
    \label{alg:CI}
    \end{algorithmic}
\end{algorithm}

Then, let us show that this algorithm has a property stated in the theorem.

First, note that the loop in lines 11 to 21 ends in at most $n_{\rm max}$ iterations.
To see this, denoting $j_{\rm ub}$ and $j_{\rm lb}$ at the end of the $k$th iteration by $j_{{\rm ub},k}$ and $j_{{\rm lb},k}$, respectively, we notice that
\begin{equation}
    j_{{\rm ub},k+1}-j_{{\rm lb},k+1} \le \frac{j_{{\rm ub},k}-j_{{\rm lb},k}}{2} + \frac{1}{2},
\end{equation}
which implies
\begin{equation}
    j_{{\rm ub},k}-j_{{\rm lb},k} \le 2^{-k} (n_i-2)+1.
\end{equation}
Thus, $j_{{\rm ub},k}-j_{{\rm lb},k}$ becomes 2 or less in at least $\left\lceil\log_2 (n_i-2)\right\rceil$ iterations, and, even if it becomes 2, the next iteration makes it 1.
Therefore, the loop ends in $n_{\rm max}$ iterations by the condition $j_{\rm ub}-j_{\rm lb}=1$, or earlier by the condition $\left|\tilde{\tilde{\Phi}}^\prime(x^{(i)}_{j_{\rm mid}})-\frac{\alpha}{2}\right|\le \frac{2}{3}\epsilon$.

Let us consider the case that all the QMCIs in the algorithm, that in line 4 and those in loop 11-21, successfully outputs $\frac{\epsilon}{3}$-approximations of $\Phi(x^{(i)}_1)$ and $\Phi(x^{(i)}_{j_{\rm mid}})$.
This occurs with probability at least $\left(1-\delta^\prime\right)^{n_{\rm max}+1}\ge \delta$.
In these QMCIs, if we obtain $\left|\tilde{\tilde{\Phi}}^\prime(x^{(i)})-\frac{\alpha}{2}\right|\le \frac{2}{3}\epsilon$ for some $x^{(i)}$, $\left|\Phi(x^{(i)})-\frac{\alpha}{2}\right|\le \epsilon$ also holds because of $\left|\tilde{\tilde{\Phi}}^\prime(x^{(i)})-\Phi(x^{(i)})\right|\le \frac{\epsilon}{3}$.
In fact, we get such $x^{(i)}$ with certainty under the condition that all the QMCIs succeed.
This is seen by contradiction.
Suppose that, under this condition, loop 11-21 ends with $j_{\rm ub}-j_{\rm lb}=1$.
This means that $\tilde{\tilde{\Phi}}^\prime(x^{(i)}_{j_{\rm lb}})>\frac{\alpha}{2}+\frac{2}{3}\epsilon$ and $\tilde{\tilde{\Phi}}^\prime(x^{(i)}_{j_{\rm ub}})<\frac{\alpha}{2}-\frac{2}{3}\epsilon$, which leads to
\begin{equation}
\Phi(x^{(i)}_{j_{\rm lb}})>\frac{\alpha}{2}+\frac{\epsilon}{3}, \Phi(x^{(i)}_{j_{\rm ub}})<\frac{\alpha}{2}-\frac{\epsilon}{3}.
\label{eq:contradict}
\end{equation}
Since $\Phi$ is monotonically decreasing and there is no $\Omega^{(i)}$'s element between $x^{(i)}_{j_{\rm lb}}$ and $x^{(i)}_{j_{\rm ub}}$, Eq. (\ref{eq:contradict}) contradicts with the assumption that Eq. (\ref{eq:xubtilAssum}) holds for some $x^{(i)}\in\Omega^{(i)}$.

In summary, with probability at least $1-\delta$, $\tilde{x}^{(i)}_{\rm ub}$ satisfying Eq. (\ref{eq:xubtil}) is output after either of QMCIs.

The statement on the query complexity immediately follows from Lemma \ref{lem:AlgCDF}.
The first term in Eq. (\ref{eq:compGetCI}) correspond to finding $\tilde{\beta}_0,...,\tilde{\beta}_l$ and is similar to the first term in Eq. (\ref{eq:compGetPhia}).
The second term in Eq. (\ref{eq:compGetCI}) corresponds to QMCIs and is obtained by multiplying the number of QMCIs, which is of order $O\left(\log |\Omega^{(i)}|\right)$, to the second term in Eq. (\ref{eq:compGetPhia}), and substituting $\delta^\prime$ for $\delta$.

\end{proof}


Seemingly, the statement in Theorem \ref{th:AlgCI} is tricky: it assumes the existence of $x^{(i)}$ for which $\Phi(x^{(i)})$ is $\frac{\epsilon}{3}$-close to $\frac{\alpha}{2}$, but only guarantees that the algorithm's output is $\epsilon$-close.
This is because of the erroneous nature of QMCI.
Suppose that we search $x^{(i)}$ such that $\left|\Phi(x^{(i)})-\frac{\alpha}{2}\right|\le\epsilon$ and there exists $x^{(i)}$ that marginally satisfies this.
Then, even if we require high accuracy in QMCI, it may output an estimate of $\Phi(x^{(i)})$ out of the $\epsilon$-neighborhood of $\frac{\alpha}{2}$, which makes us fail to notice that $x^{(i)}$ is what we want.
We thus conduct QMCIs with accuracy $\frac{\epsilon}{3}$ and pick up $x^{(i)}$ with $\tilde{\tilde{\Phi}}^\prime(x^{(i)})$ $\frac{2}{3}\epsilon$-close to $\frac{\alpha}{2}$ as an answer.
Under this policy, we never miss $x^{(i)}$ satisfying Eq. (\ref{eq:xubtilAssum}), since the $\frac{\epsilon}{3}$-approximation of $\Phi(x^{(i)})$ is never out of the $\frac{2}{3}\epsilon$-neighborhood of $\frac{\alpha}{2}$.
Of course, we might pick up $x^{(i)}$ for which $\left|\Phi(x^{(i)})-\frac{\alpha}{2}\right|>\frac{2}{3}\epsilon$, given the erroneous QMCI estimate of $\Phi(x^{(i)})$ accidentally lying in the $\frac{2}{3}\epsilon$-neighborhood of $\frac{\alpha}{2}$.
Even if so, the chosen $x^{(i)}$ at least satisfies $\left|\Phi(x^{(i)})-\frac{\alpha}{2}\right|\le\epsilon$, since the $\frac{\epsilon}{3}$-approximation of a number distant from $\frac{\alpha}{2}$ by more than $\epsilon$ never lies in the $\frac{2}{3}\epsilon$-neighborhood of $\frac{\alpha}{2}$.

A similar discussion is found in consideration on setting the threshold of the SNR in the quantum algorithm for GW matched filtering proposed in \cite{Miyamoto2022}. 

\subsection{Comparison with other approaches \label{sec:CIComp}}

We now make a comparison of the above method for finding the credible interval with other approaches.
We compare the order of the number of queries to $O_{\ell}$ in the various approaches except logarithmic factors.
Since the binary search adds only logarithmic factors, it is sufficient to consider the complexity of calculating the CDF within accuracy $\epsilon$.

First, let us consider QSA {\it without} QMCI.
That is, the state $\ket{P}$ that encodes the target distribution $P$ is prepared via Algorithm \ref{alg:QSA} and A$\frac{\pi}{3}$AA with the obtained $\{\beta_i\}$, with $L$ computed by not QMCI but $M$-time iterated calculations and additions of $\ell$.
Then, using this state-preparing circuit as $\mathcal{V}_P$, we estimate $x^{(i)}_{\rm ub}$ by Algorithm \ref{alg:CI}.
We call this the exact QSA approach.
Note that, the quantum walk operator $U$, which is now exact one in Eq. (\ref{eq:walkOp}), makes $O(M)$ calls to $O_{\ell}$.
Combining this with Eq. (\ref{eq:compQSA}), we see that the number of calls to $O_{\ell}$ in generating $\ket{P}$ by QSA is
\begin{equation}
    \tilde{O}\left(\frac{M\bar{L}^{1/2}}{\Delta_{\rm min}^{1/2}}\right).
\end{equation}
Besides, considering the complexity of QMCI in Eq. (\ref{eq:compQMCIBound}), we estimate the total number of calls to $O_{\ell}$ in finding a credible interval in the exact QSA approach as
\begin{equation}
    \tilde{O}\left(\frac{M\bar{L}^{1/2}}{\Delta_{\rm min}^{1/2}\epsilon}\right). \label{eq:compexactQSA}
\end{equation}

Next, let us consider the fully classical approach: on a classical computer, generating the Markov chain by the MH method in Algorithm \ref{alg:MH}, with $L$ obtained by $M$-time iterative calculations.
We now regard $O_{\ell}$ as a classical subroutine to compute $\ell$.
Based on the bound (\ref{eq:MCSampEps2}) on the step number in MCMC-based expectation estimation, the total number of calls to $O_{\ell}$ in finding the credible interval is
\begin{equation}
    \tilde{O}\left(\frac{M}{\Delta\epsilon^2}\right). \label{eq:compFullCl}
\end{equation}
Also note that we do not need binary search with respect to the CDF in the classical approach.
We can store the sampled states on a classical memory, and thus, sorting them and taking the $100\left(1-\frac{\alpha}{2}\right)$th percentile yields an estimate on $x^{(i)}_{\rm ub}$. 


Note that the complexity of the proposed method in Eq. (\ref{eq:compGetCIWoLog}) is not better than those of the exact QSA approach and the fully classical approach in Eqs. (\ref{eq:compexactQSA}) and (\ref{eq:compFullCl}) with respect to the spectral gap and accuracy.
On the other hand, unlike Eqs. (\ref{eq:compexactQSA}) and (\ref{eq:compFullCl}), the complexity of the proposed method is not explicitly dependent on $M$ the number of terms in $L$ but on $\sigma$ the standard deviation of $\ell$.
Thus, the proposed method can be advantageous with respect to $M$ if $\sigma$ scales with $M$ sublinearly, and this actually holds in the case of GW parameter estimation considered in Sec. \ref{sec:GW}.


\section{Application: parameter estimation in gravitational wave detection experiments \label{sec:GW}}

As an application of the credible interval calculation method proposed above, we consider parameter estimation in GW experiments.
Since the first detection in 2015 \cite{GW150914}, GW events have been detected by laser interferometers such as LIGO and Virgo \cite{GWTC-1,GWTC-2,GWTC-3}.
Given a GW event, we want to estimate the parameters of the GW, such as masses of the sources for a GW from a compact binary coalescence (CBC).
For this purpose, Bayesian inference with MCMC is widely used (for a review, see \cite{thrane_talbot_2019}).
Given the detector output $s(t)$ as time-series data with time length $T$ and interval $\Delta t$, the negative log-likelihood for a point $x$ in the parameter space is given as follows:
\begin{align}
    &L(x)=-2\Re\left(h(\cdot,x) | s\right)+\left(h(\cdot,x) | h(\cdot,x)\right)  + C\nonumber \\
    &\Re\left(h(\cdot,x) | s\right)=\frac{4}{M}\sum_{k=1}^{\frac{M}{2}-1}\Re\left(\frac{\tilde{h}^*(f_k;x)\tilde{s}(f_k)}{S_{\rm n}(f_k)\Delta t}\right) \nonumber \\
    &\left(h(\cdot,x) | h(\cdot,x)\right) = \frac{4}{M}\sum_{k=1}^{\frac{M}{2}-1}\frac{|\tilde{h}(f_k;x)|^2}{S_{\rm n}(f_k)\Delta t}. \label{eq:LGW}
\end{align}
Here, $M=\frac{T}{\Delta t}$, $f_k:=\frac{k}{T}$, the tilde represents the Fourier transform of a function of time, $h(t,x)$ is the GW waveform for $x$, $S_{\rm n}$ is the single-sided power spectrum density of the noise, and $C$ is a term independent of $x$.
Since $\tilde{h}(\cdot,x)$ and $S_{\rm n}$ are smooth functions evaluated by explicit formulas, we assume that $\left(h(\cdot,x) | h(\cdot,x)\right)$ is approximated by the integral $4\int_0^\infty \frac{|\tilde{h}(f;x)|^2}{S_{\rm n}(f)}df$ and this is further approximated by some formula efficiently computable by arithmetic circuits.
Then, $L$ in Eq. (\ref{eq:LGW}) is in the form of Eq. (\ref{eq:likeli}).
In fact, $M$ can be as large as $10^6-10^{10}$ in typical cases \cite{Miyamoto2022}, and thus we are motivated to apply our QMCI-based method in Sec. \ref{sec:ourAlg} to find credible intervals for GW parameters, regarding $-2\Re\left(h(\cdot,x) | s(t)\right)$ as $L_{\rm sum}$ and $-4\Re\left(\frac{\tilde{h}^*(f_k;x)\tilde{s}(f_k)}{S_{\rm n}(f_k)\Delta t}\right)$ as $\ell(k,x)$.

Note that other conditions to apply the proposed method are met.
Usually, we have found a high SNR point in the parameter space by matched filtering conducted prior to parameter estimation, and thus we can set a parameter region to be searched, for example a hyperrectangle around such a point.
We can set $\Omega$ to the sufficiently dense discrete points in that region.
Commonly, the prior distribution $P_0$ is set to uniform on $\Omega$ and the proposal distribution $T(x,\cdot)$ is set to some easy-to-sample one such as the normal distribution around $x$, which means Assumptions \ref{ass:oraclesForT}, \ref{ass:OP0} and \ref{ass:OAR} are satisfied.
On the other hand, since the detector output is affected by the random noise and unable to be expressed as an analytic formula, $O_{\ell}$ is not implemented as a combination of arithmetic circuits.
Nevertheless, if we assume the availability of quantum random access memory (QRAM) \cite{Giovannetti2008QRAM}, we can implement $O_{\ell}$ using a QRAM that stores the values of $\tilde{s}(f_k)$, and thus Assumption \ref{ass:P} is satisfied.
The preparation of such a QRAM takes $O(M)$ time, but this is needed only once at the very beginning of calculation.

Let us estimate the query complexity of credible interval calculation for GW parameters by the proposed method.
To do so, we need to bound the variance $\sigma^2$ of terms in $\Re\left(h(\cdot,x) | s(t)\right)$.
According to \cite{Miyamoto2022}, $\sigma=O\left(\gamma M^{1/2}\right)$  with
\begin{equation}
    \gamma:=\max_{\substack{x\in\Omega \\ k\in\left[\frac{M}{2}-1\right]}}\frac{\tilde{h}(f_k,x)}{\sqrt{S_{\rm n}(f_k)\Delta t}},
\end{equation}
which is $O(1)$ in some cases if $h$ is normalized so that $\left(h(\cdot,x) | h(\cdot,x)\right)=1$ as in matched filtering \cite{Miyamoto2022}.
If we get the highest SNR $\rho=\left(h(\cdot,x_\star) | s\right)$ at the parameter point $x_\star$ with $\left(h(\cdot,x_\star) | h(\cdot,x_\star)\right)=1$ in matched filtering, $\Omega$ should be set in the neighbor of the parameter that corresponds to the waveform $\rho h(\cdot,x_\star)$, which leads to $\gamma=O(\rho)$ and $\sigma=O\left(\rho M^{1/2}\right)$.
Then, Eq. (\ref{eq:compGetCIWoLog}) becomes
\begin{equation}
    \tilde{O}\left(\frac{M^{1/2} \rho \bar{L}}{\epsilon^2 \Delta_{\rm min}^{3/2}}\right).
\end{equation}
Compared to the complexity of the exact QSA approach in Eq. (\ref{eq:compexactQSA}), the proposed method provides the quadratic speedup with respect to $M$, in compensation for the worse scaling on $\epsilon$ and $\Delta_{\rm min}$.

\section{Summary \label{sec:summary}}

In this paper, with the usage in Bayesian inference in mind, we have considered the quantum version of the MH algorithm in the case that the target probability $P$ is in the form of Eq. (\ref{eq:PeL}) and $L$ is given as Eqs. (\ref{eq:likeli}) and (\ref{eq:Lsum}) with large $M$, based on QSA.
In such a case, calculating $L$ takes the $O(M)$ query complexity naively, and thus we have proposed application of QMCI, which may speedup a costly summation.
We have presented not only the procedure to generate the state that encodes $P$ but also that for finding a credible interval of a parameter in a statistical model.
Setting the accuracy in QMCI based on the result in \cite{alquier2016noisy} on the MH algorithm with the perturbed acceptance ratio, we have derived the bound on the complexity, the number of calls to the quantum circuit to compute $\ell$, as summarized in Table \ref{tbl:CompSum}.
Comparing QSA with $L$ calculated exactly, the complexity of the proposed method scales worse on the required accuracy $\epsilon$ and the spectral gap $\Delta_{\rm min}$.
On the other hand, if $\sigma$ the standard deviation of $\ell$ scales on $M$ sublinearly, the proposed method is advantageous with respect to $M$.
As an example in which this holds, we have considered estimation of GW parameters in a GW detection experiment.
In this example, $\sigma$ scales on $M$ as $O\left(\sqrt{M}\right)$ and this results in the complexity shown in Table \ref{tbl:CompSum}, which is quadratically smaller with respect to $M$ compared to the exact QSA method.

\section*{acknowledgement}

This work is supported by MEXT Quantum Leap Flagship Program (MEXT Q-LEAP) Grant no. JPMXS0120319794 and JSPS
KAKENHI Grant no. JP22K11924.

\appendix

\section{Properties of the walk operator \label{sec:proofWalkOp}}

Here, we present the proof of Theorem \ref{th:phasegapOurs} on the spectrum of the quantum walk operator.
That of Theorem \ref{th:phasegap} is almost same with $S$ seen as $I$. 

First, we present the following theorem on which our proof is based.

\begin{theorem}[Theorem 1 in \cite{Szegedy2004}]

    Let $\mathcal{H}$ be a $N$-dimensional Hilbert space.
    Let $\mathcal{A}$ (resp. $\mathcal{B})$ be a $n$-dimensional subspace of $\mathcal{H}$ spanned by orthonormal vectors $u_1,...,u_m$ (resp. $v_1,...,v_n$).
    Denote by $V_{\mathcal{A}}$ (resp. $V_{\mathcal{B}}$) the $N \times m$ (resp. $N \times n$) matrix whose $i$th column is $u_i$ (resp. $v_i$).
    Define $R_{\mathcal{A}}=2V_{\mathcal{A}}V_{\mathcal{A}}^\dagger-I$ and $R_{\mathcal{B}}=2V_{\mathcal{B}}V_{\mathcal{B}}^\dagger-I$.
    Then, on $\mathcal{A}+\mathcal{B}$, the unitary operator $R_{\mathcal{A}}R_{\mathcal{B}}$ has an eigenvalue 1 with multiplicity 1, and any other eigenvalue is either of $e^{2i\theta_1},e^{-2i \theta_1},...,e^{2i\theta_l},e^{-2i\theta_l}$ or -1, where $\theta_1,...,\theta_l\in\left(0,\frac{\pi}{2}\right)$ are written as $\theta_i=\arccos \lambda_i$ with singular values $\{\lambda_i\}$ of $V_{\mathcal{A}}^\dagger V_{\mathcal{B}}$ that lie in $(0,1)$.
    
    \label{th:Szegedy}
\end{theorem}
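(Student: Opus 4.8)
The plan is to reduce the product of reflections $R_{\mathcal{A}} R_{\mathcal{B}}$ to a collection of one- and two-dimensional invariant subspaces indexed by the principal angles between $\mathcal{A}$ and $\mathcal{B}$, which are encoded in the singular values $\{\lambda_i\}$ of $M := V_{\mathcal{A}}^\dagger V_{\mathcal{B}}$. This is an instance of Jordan's lemma for the pair of orthogonal projectors $P_{\mathcal{A}} = V_{\mathcal{A}} V_{\mathcal{A}}^\dagger$ and $P_{\mathcal{B}} = V_{\mathcal{B}} V_{\mathcal{B}}^\dagger$, where $R_{\mathcal{A}} = 2P_{\mathcal{A}} - I$ and $R_{\mathcal{B}} = 2P_{\mathcal{B}} - I$. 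The approach via the singular value decomposition (SVD) is the cleanest, since the singular values appear directly in the statement.

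First I would write the SVD $M = X\Sigma Y^\dagger$ with $X, Y$ unitary and $\Sigma$ the diagonal matrix of singular values, noting that $\lambda_i \in [0,1]$ because the columns of $V_{\mathcal{A}}$ and $V_{\mathcal{B}}$ are orthonormal, so $\|M\| \le 1$. Rotating the two bases, I set $a_i := V_{\mathcal{A}} X e_i$ and $b_j := V_{\mathcal{B}} Y e_j$; these form orthonormal bases of $\mathcal{A}$ and $\mathcal{B}$, and a short computation gives the decoupling relation $\langle a_i | b_j \rangle = (X^\dagger M Y)_{ij} = \lambda_i \delta_{ij}$, with each $\lambda_i \ge 0$ real. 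This is the step that does the real work: it diagonalizes the interaction between the two subspaces so that the problem splits into independent blocks.

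Next I would classify the indices by the value of $\lambda_i$. For $\lambda_i = 1$ one has $a_i = b_i$, lying in $\mathcal{A}\cap\mathcal{B}$, so both reflections fix it and $R_{\mathcal{A}} R_{\mathcal{B}}$ acts as $+1$. For $\lambda_i = 0$ (including unpaired leftover basis vectors) the vector lies in $\mathcal{A}\cap\mathcal{B}^\perp$ or $\mathcal{B}\cap\mathcal{A}^\perp$, and a direct check yields eigenvalue $-1$. For $\lambda_i \in (0,1)$ I would show that $\mathcal{S}_i := \mathrm{span}\{a_i, b_i\}$ is two-dimensional and invariant under both projectors, using $P_{\mathcal{A}} b_i = \lambda_i a_i$ and $P_{\mathcal{B}} a_i = \lambda_i b_i$; then, writing $b_i = \cos\theta_i\, a_i + \sin\theta_i\, a_i^\perp$ with $\theta_i = \arccos\lambda_i \in (0,\tfrac{\pi}{2})$, I compute the $2\times 2$ blocks $R_{\mathcal{A}} = \mathrm{diag}(1,-1)$ and $R_{\mathcal{B}}$ the Householder reflection about $b_i$, whose product is the planar rotation by $\pm 2\theta_i$ with eigenvalues $e^{\pm 2i\theta_i}$. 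Assembling the orthogonal decomposition of $\mathcal{A}+\mathcal{B}$ into these blocks then produces exactly the claimed spectrum.

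The main obstacle I anticipate is bookkeeping rather than conceptual: verifying that the two-dimensional blocks $\{\mathcal{S}_i\}$ together with the $\lambda=1$ and $\lambda=0$ one-dimensional pieces are mutually orthogonal and exhaust $\mathcal{A}+\mathcal{B}$, so that no eigenvalue is missed and restricting to $\mathcal{A}+\mathcal{B}$ discards precisely the part orthogonal to both subspaces. That orthogonality follows cleanly from $\langle a_i|b_j\rangle = \lambda_i\delta_{ij}$ together with orthonormality of each basis. The one genuinely delicate point is the multiplicity of the eigenvalue $1$: the decomposition shows it equals $\dim(\mathcal{A}\cap\mathcal{B})$, the number of unit singular values, so the asserted multiplicity of exactly one is the hypothesis that $\mathcal{A}$ and $\mathcal{B}$ meet in a single line — precisely the situation guaranteed in the intended application by uniqueness of the stationary state $\ket{P}$ — and I would state this explicitly rather than leave it implicit.
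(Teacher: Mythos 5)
Your proof is correct, and a comparison with ``the paper's own proof'' is moot here: the paper imports this statement verbatim as Theorem 1 of \cite{Szegedy2004} without proving it, and your argument is essentially the standard one, matching Szegedy's original route --- the SVD of $V_{\mathcal{A}}^\dagger V_{\mathcal{B}}$ yields the principal-angle (Jordan) decomposition of $\mathcal{A}+\mathcal{B}$ into mutually orthogonal one- and two-dimensional invariant blocks, on which $R_{\mathcal{A}}R_{\mathcal{B}}$ acts as $\pm 1$ or as a planar rotation by $2\theta_i$ with eigenvalues $e^{\pm 2i\theta_i}$. All the key computations you sketch check out: $\braket{a_i|b_j}=\lambda_i\delta_{ij}$ with $\lambda_i\ge 0$, the invariance relations $P_{\mathcal{A}}b_i=\lambda_i a_i$ and $P_{\mathcal{B}}a_i=\lambda_i b_i$, the fact that $a_i^\perp=(b_i-\lambda_i a_i)/\sin\theta_i$ is annihilated by $P_{\mathcal{A}}$ (so the $2\times 2$ block of $R_{\mathcal{A}}$ really is ${\rm diag}(1,-1)$), and the orthogonality and exhaustion of the blocks, which also disposes of the potential $+1$ eigenvectors in $\mathcal{A}^\perp\cap\mathcal{B}^\perp$ since such vectors are orthogonal to $\mathcal{A}+\mathcal{B}$.

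Your remark on the multiplicity of the eigenvalue $1$ is the one substantive point, and you resolve it correctly: the decomposition gives multiplicity $\dim(\mathcal{A}\cap\mathcal{B})$, i.e., the number of unit singular values, so the quoted statement tacitly assumes exactly one singular value of $V_{\mathcal{A}}^\dagger V_{\mathcal{B}}$ equals $1$. In the paper's actual use (Appendix \ref{sec:proofWalkOp}) this hypothesis is satisfied: by Lemma \ref{lem:SameEigW} and Eq. (\ref{eq:DPWDP}), the restriction of $\Pi_0 V^\dagger B^\dagger SFBV\Pi_0$ to $\mathcal{A}$, which equals $V_{\mathcal{A}}^\dagger V_{\mathcal{B}}$ by Lemma \ref{lem:AdagB}, is the symmetric matrix $D_P W D_P^{-1}$, so its singular values are the moduli of the eigenvalues of $W$, and the eigenvalue $1$ of $W$ is simple for an irreducible chain \cite[LEMMAs 12.1 and 12.2]{levin2017markov}. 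Making that hypothesis explicit, as you do, is preferable to leaving it implicit.
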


\noindent In the current case, $\mathcal{A}$ and $\mathcal{B}$ are defined as Eq. (\ref{eq:subspaceB}).

We also use the following lemmas.

\begin{lemma}

On $\mathcal{A}$, $\Pi_0 V^\dagger B^\dagger SFBV \Pi_0$ has the same eigenvalues as $W$ including multiplicity.
    \label{lem:SameEigW}
\end{lemma}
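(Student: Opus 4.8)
The plan is to compute the matrix of the operator $\Pi_0 V^\dagger B^\dagger SFBV \Pi_0$ restricted to the subspace $\mathcal{A}$ in the natural basis $\{\ket{x}_{R_{\rm S}}\ket{0}_{R_{\rm M}}\ket{0}_{R_{\rm C}} \mid x\in\Omega\}$ and show that the resulting $|\Omega|\times|\Omega|$ matrix is exactly the transition matrix $W$ of Eq.~(\ref{eq:tranMat}). Here $\Pi_0 = \Lambda_{0}$ is the projector onto $\mathcal{A}$ (the analogue of $\Lambda_{0,{\rm IS}}$ in Eq.~(\ref{eq:Pi0})). Since $\mathcal{A}$ is spanned by orthonormal computational basis states and $W$ is similar to a diagonal matrix (being the transition matrix of a reversible chain), matching the matrix entries gives equality of the full spectrum including multiplicity.

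First I would fix $x,y\in\Omega$ and trace the action of the component operators on $\ket{x}_{R_{\rm S}}\ket{0}_{R_{\rm M}}\ket{0}_{R_{\rm C}}$, reading off the coefficient of $\ket{y}_{R_{\rm S}}\ket{0}_{R_{\rm M}}\ket{0}_{R_{\rm C}}$ after applying $V^\dagger B^\dagger SFBV$, i.e. computing the inner product
\begin{equation}
    \bra{y}_{R_{\rm S}}\bra{0}_{R_{\rm M}}\bra{0}_{R_{\rm C}} V^\dagger B^\dagger SFBV \ket{x}_{R_{\rm S}}\ket{0}_{R_{\rm M}}\ket{0}_{R_{\rm C}}.
\end{equation}
Applying $V$ spreads the $R_{\rm M}$ register over moves $\Delta x$ with amplitude $\sqrt{T(x,x+\Delta x)}$; then $B$ rotates $R_{\rm C}$ with angle set by $A(x,x+\Delta x)$; $F$ performs the controlled addition $x\mapsto x+\Delta x$ on $R_{\rm S}$ when $R_{\rm C}=\ket{1}$; and $S$ flips $\Delta x\mapsto -\Delta x$ on $R_{\rm M}$ under the same control, which is precisely what is needed so that reading back requires the reverse move and the factor $T(x+\Delta x, x)$ appears through the final $V^\dagger$. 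I would treat the off-diagonal case $x\neq y$ (only the $R_{\rm C}=\ket{1}$ branch with $\Delta x = y-x$ contributes, giving $T(x,y)A(x,y)$) and the diagonal case $x=y$ (the $R_{\rm C}=\ket{0}$ branch survives, contributing $\sum_{\Delta x}T(x,x+\Delta x)\bigl(1-A(x,x+\Delta x)\bigr)$, which equals $1-\sum_{z\neq x}T(x,z)A(x,z)$) separately.

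The diagonal term is where the role of $S$ together with the detailed bookkeeping of which $R_{\rm C}$ branch and which $R_{\rm M}$ value return to $\ket{0}$ must be handled carefully: one must verify that when $y=x$ the only surviving contributions come from amplitudes that leave $R_{\rm M}$ in $\ket{0}$ and $R_{\rm C}$ in $\ket{0}$ after the full product, and that cross terms between different $\Delta x$ vanish because the intermediate $R_{\rm M}$ (and, after $F$, $R_{\rm S}$) states are orthogonal. I expect this orthogonality bookkeeping — confirming that distinct moves do not interfere and that the sign-flip $S$ correctly aligns the ket and bra move-labels in the diagonal entry — to be the main obstacle; once it is dispatched, the entries match Eq.~(\ref{eq:tranMat}) verbatim. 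Finally, since the restricted matrix equals $W$ and $\Pi_0$ is the orthogonal projector onto $\mathcal{A}$, the eigenvalues of $\Pi_0 V^\dagger B^\dagger SFBV \Pi_0$ on $\mathcal{A}$ coincide with those of $W$ with the same multiplicities, completing the proof.
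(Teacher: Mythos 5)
There is a genuine gap: the restricted matrix is \emph{not} equal to $W$, so the central claim of your computation would fail. Trace the off-diagonal entry carefully. The forward pass $BV$ contributes $\sqrt{T(x,y)}\sqrt{A(x,y)}$ with $y=x+\Delta x$, but after $F$ moves $R_{\rm S}$ to $\ket{y}$ and $S$ flips the move label to $-\Delta x$, the subsequent $B^\dagger$ rotates with angle set by $A(y,x)$ and the final $V^\dagger$ projects onto $\ket{0}_{R_{\rm M}}$ with amplitude $\sqrt{T(y,x)}$ (you even note that $T(x+\Delta x,x)$ appears, yet then drop it). The entry you actually get is
\begin{equation}
    \bra{y}\bra{0}\bra{0}\, V^\dagger B^\dagger SFBV\, \ket{x}\ket{0}\ket{0}
    = \sqrt{T(x,y)A(x,y)\,T(y,x)A(y,x)} = \sqrt{W_{x,y}W_{y,x}},
\end{equation}
which is symmetric in $x,y$, whereas $W$ is generally not symmetric for a non-uniform stationary distribution. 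So "the entries match Eq.~(\ref{eq:tranMat}) verbatim" is false off the diagonal (the diagonal entries do come out to $W_{x,x}$ as you describe).

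The missing idea is the detailed balance condition $P(x)W_{x,y}=P(y)W_{y,x}$, which the MH chain satisfies. It gives $\sqrt{W_{x,y}W_{y,x}}=\sqrt{P(x)/P(y)}\,W_{x,y}=(D_P W D_P^{-1})_{x,y}$ with $D_P={\rm diag}(\sqrt{P(x)})$, i.e.\ the restricted operator is a \emph{similarity transform} of $W$, not $W$ itself. That similarity is what delivers equality of the spectra with multiplicity; this is exactly the route the paper takes. Your orthogonality bookkeeping for the diagonal entry and the identification of the two $R_{\rm C}$ branches are fine, but without the detailed-balance conjugation step the argument cannot be completed as written.
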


\begin{proof}
    By a straightforward calculation, we see that, for any $x\in\Omega$, applying $\Pi_0 V^\dagger B^\dagger SFBV$ to $\ket{x}_{R_{\rm S}}\ket{0}_{R_{\rm M}}\ket{0}_{R_{\rm C}}$ yields
    \begin{widetext}
    \begin{align}
        &\left[\sum_{\Delta x\in\Omega_x\setminus\{\vec{0}_d\}} \sqrt{T(x,x+\Delta x)T(x+\Delta x,x)A(x,x+\Delta x)A(x+\Delta x,x)}\ket{x+\Delta x}_{R_{\rm S}}\right. \nonumber \\
        &\left. \quad + \left(1-\sum_{\Delta x\in\Omega_x\setminus\{\vec{0}_d\}} T(x,x+\Delta x)A(x,x+\Delta x)\right)\ket{x}_{R_{\rm S}}\right]\ket{0}_{R_{\rm M}}\ket{0}_{R_{\rm C}} \nonumber \\
        =& \sum_{y\in\Omega} \sqrt{W_{x,y}W_{y,x}}\ket{y}_{R_{\rm S}}\ket{0}_{R_{\rm M}}\ket{0}_{R_{\rm C}},
    \end{align}
    \end{widetext}
    where $\vec{0}_d$ is the $d$-dimensional zero vector.
    This means that
    \begin{align}
        &\Pi_0 V^\dagger B^\dagger SFBV \Pi_0 \nonumber \\
        =& \sum_{x,y\in\Omega} \sqrt{W_{x,y}W_{y,x}}\ket{y}_{R_{\rm S}}\ket{0}_{R_{\rm M}}\ket{0}_{R_{\rm C}}\bra{x}_{R_{\rm S}}\bra{0}_{R_{\rm M}}\bra{0}_{R_{\rm C}}.
    \end{align}
    Using the detailed balance condition $P(x)W_{x,y}=P(y)W_{y,x}$, which is satisfied in the MH algorithm \cite[EXERCISE 3.1]{levin2017markov}, we have
    \begin{align}
        &\Pi_0 V^\dagger B^\dagger SFBV \Pi_0 \nonumber \\
        =& \sum_{x,y\in\Omega} \sqrt{\frac{P(x)}{P(y)}}W_{x,y}\ket{y}_{R_{\rm S}}\ket{0}_{R_{\rm M}}\ket{0}_{R_{\rm C}}\bra{x}_{R_{\rm S}}\bra{0}_{R_{\rm M}}\bra{0}_{R_{\rm C}} \nonumber \\
        =& \sum_{x,y\in\Omega} (D_P W D_P^{-1})_{x,y}\ket{y}_{R_{\rm S}}\ket{0}_{R_{\rm M}}\ket{0}_{R_{\rm C}}\bra{x}_{R_{\rm S}}\bra{0}_{R_{\rm M}}\bra{0}_{R_{\rm C}}, \label{eq:DPWDP}
    \end{align}
    where $D_P$ is a diagonal matrix indexed by $x,y\in\Omega$ and its $(x,x)$ entry is $\sqrt{P(x)}$.
    Thus, since $\Pi_0 V^\dagger B^\dagger SFBV \Pi_0$ is expressed as conjugation of $W$ by $D_P$, it has the same eigenvalues as $W$ on $\mathcal{A}$. 
\end{proof}

\begin{lemma}

$\ket{P}$ is the eigenstate of $U_W=R V^\dagger B^\dagger SFBV$ with eigenvalue 1. \label{lem:Peig}
    
\end{lemma}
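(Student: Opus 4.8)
The plan is to reduce the claim to showing that $\ket{P}$, viewed as the element $\sum_{x\in\Omega}\sqrt{P(x)}\ket{x}_{R_{\rm S}}\ket{0}_{R_{\rm M}}\ket{0}_{R_{\rm C}}$ of $\mathcal{A}$, is fixed by the ``diffusion'' part $V^\dagger B^\dagger SFBV$. Indeed, $R=2\Lambda_{0,{\rm IS}}-I$ acts as the identity on $\mathcal{A}$ and as $-1$ on its orthogonal complement, so once I prove $V^\dagger B^\dagger SFBV\ket{P}=\ket{P}$ it follows immediately that $U_W\ket{P}=R\ket{P}=\ket{P}$. Since $V$ acts trivially on $R_{\rm C}$ and $\ket{P}$ carries $\ket{0}_{R_{\rm C}}$, this is in turn equivalent to the identity $B^\dagger SFBV\ket{P}=V\ket{P}$, whose right-hand side lies entirely in the $R_{\rm C}=0$ sector.

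First I would propagate $\ket{P}$ through $BV$ and then $F$, obtaining a superposition of an ``unmoved'' branch $\ket{x}_{R_{\rm S}}\ket{\Delta x}_{R_{\rm M}}\ket{0}_{R_{\rm C}}$ with weight $\sqrt{P(x)T(x,x+\Delta x)(1-A(x,x+\Delta x))}$ and a ``moved'' branch $\ket{x+\Delta x}_{R_{\rm S}}\ket{\Delta x}_{R_{\rm M}}\ket{1}_{R_{\rm C}}$ with weight $\sqrt{P(x)T(x,x+\Delta x)A(x,x+\Delta x)}$. The operator $S$ is the crucial bookkeeping device: on the moved branch it sends $\ket{\Delta x}_{R_{\rm M}}\mapsto\ket{-\Delta x}_{R_{\rm M}}$, so the register pair now reads $(x+\Delta x,-\Delta x)$ and encodes exactly the reverse move. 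Consequently, when $B^\dagger$ is applied, it uses the acceptance angle associated with the current register contents, namely $A(x+\Delta x,x)$ on the moved branch and $A(x,x+\Delta x)$ on the unmoved one. I would then reindex by the final labels $y\in\Omega$, $w\in\Delta\Omega_y$ and collect the coefficients of $\ket{y}_{R_{\rm S}}\ket{w}_{R_{\rm M}}\ket{0}_{R_{\rm C}}$ and of $\ket{y}_{R_{\rm S}}\ket{w}_{R_{\rm M}}\ket{1}_{R_{\rm C}}$.

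The two substantive computations are then the following. For the $\ket{1}_{R_{\rm C}}$ sector the net coefficient factors as $\sqrt{1-A(y,y+w)}\left(\sqrt{P(y+w)T(y+w,y)A(y+w,y)}-\sqrt{P(y)T(y,y+w)A(y,y+w)}\right)$, which vanishes precisely by the detailed balance identity $P(y)T(y,y+w)A(y,y+w)=P(y+w)T(y+w,y)A(y+w,y)$ (equivalently $P(y)W_{y,y+w}=P(y+w)W_{y+w,y}$) satisfied by the MH chain; this is what guarantees no leakage out of $\mathcal{A}$. For the $\ket{0}_{R_{\rm C}}$ sector the coefficient is $\sqrt{P(y)T(y,y+w)}(1-A(y,y+w))+\sqrt{P(y+w)T(y+w,y)A(y+w,y)A(y,y+w)}$, and a short case split on whether the ratio $P(y+w)T(y+w,y)/P(y)T(y,y+w)$ is below or above $1$ — using the $\min$ definition of $A$ — collapses this to exactly $\sqrt{P(y)T(y,y+w)}$. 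Summing over $w$ then reconstructs $V\ket{P}$, proving the claim, after which applying $V^\dagger$ and noting $R$ fixes the resulting element of $\mathcal{A}$ finishes the proof. I expect the main obstacle to be the careful relabelling after the $S$-flip together with the $\min$-case analysis of the accepted and rejected contributions; once the detailed balance relation is invoked, both the cancellation in the $\ket{1}_{R_{\rm C}}$ sector and the reconstruction in the $\ket{0}_{R_{\rm C}}$ sector become routine.
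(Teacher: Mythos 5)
Your proof is correct and follows essentially the same route as the paper's: propagate $\ket{P}$ through $BV$ and $F$, use $S$ to convert the moved branch into the reverse move, relabel, and invoke the detailed balance condition $P(y)T(y,y+w)A(y,y+w)=P(y+w)T(y+w,y)A(y+w,y)$. The only cosmetic difference is that the paper applies detailed balance \emph{before} $B^\dagger$ so as to recognize the intermediate state as exactly $BV\ket{P}$ (whence $V^\dagger B^\dagger$ undoes it trivially), whereas you apply $B^\dagger$ explicitly and verify the cancellation in the $\ket{1}_{R_{\rm C}}$ sector and the reconstruction of $V\ket{P}$ in the $\ket{0}_{R_{\rm C}}$ sector coefficient by coefficient.
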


\begin{proof}
    This is shown by a straightforward calculation.
    Applying $FBV$ to $\ket{P}=\sum_{x\in\Omega}\sqrt{P(x)}\ket{x}_{R_{\rm S}}\ket{0}_{R_{\rm M}}\ket{0}_{R_{\rm C}}$ yields
    \begin{widetext}
       \begin{align}
        &\sum_{x\in\Omega}\sum_{\Delta x\in\Omega_x} \left(\sqrt{P(x)T(x,x+\Delta x)A(x,x+\Delta x)}\ket{x+\Delta x}_{R_{\rm S}}\ket{\Delta x}_{R_{\rm M}}\ket{1}_{R_{\rm C}}\right. \nonumber \\
        &\qquad\qquad\quad \left.+\sqrt{P(x)T(x,x+\Delta x)\left(1-A(x,x+\Delta x)\right)}\ket{x}_{R_{\rm S}}\ket{\Delta x}_{R_{\rm M}}\ket{0}_{R_{\rm C}}\right). \label{eq:tocyuu1}
    \end{align} 
    \end{widetext}
    By using the detailed balance condition
    \begin{align}
    & P(x)T(x,x+\Delta x)A(x,x+\Delta x)= \nonumber \\
    & \quad P(x+\Delta x)T(x+\Delta x,x)A(x+\Delta x,x)
    \end{align}
    and substituting $\Delta x$ and $x+\Delta x$ with $-\Delta x$ and $x$, respectively, in the first term in Eq. (\ref{eq:tocyuu1}), we get
    \begin{widetext}
   \begin{align}
        &\sum_{x\in\Omega}\sum_{\Delta x\in\Omega_x} \left(\sqrt{P(x)T(x,x+\Delta x)A(x,x+\Delta x)}\ket{x}_{R_{\rm S}}\ket{-\Delta x}_{R_{\rm M}}\ket{1}_{R_{\rm C}}\right. \nonumber \\
        & \qquad\qquad\qquad\left.+\sqrt{P(x)T(x,x+\Delta x)\left(1-A(x,x+\Delta x)\right)}\ket{x}_{R_{\rm S}}\ket{\Delta x}_{R_{\rm M}}\ket{0}_{R_{\rm C}}\right),
    \end{align}
    \end{widetext}
    and, by applying $S$ to this, we obtain
   \begin{align}
        &\sum_{x\in\Omega}\sum_{\Delta x\in\Omega_x} \sqrt{P(x)T(x,x+\Delta x)}\ket{x}_{R_{\rm S}}\ket{\Delta x}_{R_{\rm M}} \nonumber \\
        &
         \otimes\left(\sqrt{A(x,x+\Delta x)}\ket{1}_{R_{\rm C}}+\sqrt{\left(1-A(x,x+\Delta x)\right)}\ket{0}_{R_{\rm C}}\right).
    \end{align}
    Thus, applying $V^\dagger B^\dagger$ to this yields $\ket{P}$.
    Applying $R$ at last does not change $\ket{P}$.
    
\end{proof}

\begin{lemma}
    The restriction of $\Pi_0 V^\dagger B^\dagger SFBV \Pi_0$ to $\mathcal{A}$ is equal to $V_\mathcal{A}^\dagger V_\mathcal{B}$.
    \label{lem:AdagB}
\end{lemma}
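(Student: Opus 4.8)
The plan is to unwind the definitions of $V_\mathcal{A}$ and $V_\mathcal{B}$ supplied by Theorem \ref{th:Szegedy} and then verify the claimed equality directly. First I would fix the orthonormal basis of $\mathcal{A}$ to be $u_x := \ket{x}_{R_{\rm S}}\ket{0}_{R_{\rm M}}\ket{0}_{R_{\rm C}}$ for $x\in\Omega$; these are orthonormal because the $\ket{x}$ are distinct computational basis states, and by construction they are the columns of $V_\mathcal{A}$. Since $V^\dagger B^\dagger SFBV$ is a product of unitaries, hence itself unitary, the vectors $v_x := V^\dagger B^\dagger SFBV\, u_x$ form an orthonormal set; by the definition $\mathcal{B} = V^\dagger B^\dagger SFBV\,\mathcal{A}$ they span $\mathcal{B}$, so they may be taken as the columns of $V_\mathcal{B}$, indexed by the same label $x\in\Omega$.

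Next I would record two elementary facts: that $\Pi_0$ is the orthogonal projector onto $\mathcal{A}$, so $\Pi_0 = V_\mathcal{A}V_\mathcal{A}^\dagger$ with $V_\mathcal{A}^\dagger V_\mathcal{A}=I$, and hence $\Pi_0 V_\mathcal{A}=V_\mathcal{A}$ and $V_\mathcal{A}^\dagger \Pi_0 = V_\mathcal{A}^\dagger$. Writing $M := V^\dagger B^\dagger SFBV$, the restriction of $\Pi_0 M \Pi_0$ to $\mathcal{A}$, expressed as a matrix in the basis $\{u_x\}$ on both domain and codomain, is $V_\mathcal{A}^\dagger (\Pi_0 M \Pi_0) V_\mathcal{A}$, and the two identities above collapse this to $V_\mathcal{A}^\dagger M V_\mathcal{A}$.

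It then remains to identify $M V_\mathcal{A}$ with $V_\mathcal{B}$: the $x$th column of $M V_\mathcal{A}$ is $M u_x = v_x$, which is exactly the $x$th column of $V_\mathcal{B}$, so $M V_\mathcal{A}=V_\mathcal{B}$ and therefore $V_\mathcal{A}^\dagger M V_\mathcal{A}=V_\mathcal{A}^\dagger V_\mathcal{B}$, as claimed. Equivalently, I would simply compare matrix elements: $(V_\mathcal{A}^\dagger V_\mathcal{B})_{x,y} = \braket{u_x | v_y} = \bra{u_x} M \ket{u_y}$, while $\Pi_0\ket{u_y}=\ket{u_y}$ and $\bra{u_x}\Pi_0=\bra{u_x}$ give $\bra{u_x}\Pi_0 M \Pi_0\ket{u_y}=\bra{u_x} M \ket{u_y}$, the corresponding entry of the restriction.

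Since every step is an immediate consequence of unitarity and orthonormality, I do not expect a genuine obstacle here; the only point demanding care is the bookkeeping of the column labeling—ensuring the same index set $\Omega$ is used for the columns of $V_\mathcal{A}$ and of $V_\mathcal{B}$, which is precisely what makes $M V_\mathcal{A}=V_\mathcal{B}$ hold exactly rather than merely up to an internal unitary on $\mathcal{B}$.
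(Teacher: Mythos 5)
Your proof is correct and follows essentially the same route as the paper's: the paper also verifies the identity entrywise, noting that $\bra{u_x}\Pi_0 M \Pi_0\ket{u_y}=\bra{u_x}M\ket{u_y}$ equals the corresponding entry of $V_\mathcal{A}^\dagger V_\mathcal{B}$ once the columns of $V_\mathcal{B}$ are taken to be $M u_y$. Your explicit remark about using the same index set for the columns of $V_\mathcal{A}$ and $V_\mathcal{B}$ is a point the paper leaves implicit, but it is the same argument.
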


\begin{proof}
    Label the elements in $\Omega$ with integers $1,...,|\Omega|$ and denote the $k$th element by $x_k$.
    Then, for $k,l\in[|\Omega|]$, the $(k,l)$ entry of the restriction of $\Pi_0 V^\dagger B^\dagger SFBV \Pi_0$ to $\mathcal{A}$ is
    \begin{align}
        &\bra{x_k}_{R_{\rm S}}\bra{0}_{R_{\rm M}}\bra{0}_{R_{\rm C}}\Pi_0 V^\dagger B^\dagger SFBV \Pi_0\ket{x_l}_{R_{\rm S}}\ket{0}_{R_{\rm M}}\ket{0}_{R_{\rm C}} \nonumber \\
        =& \bra{x_k}_{R_{\rm S}}\bra{0}_{R_{\rm M}}\bra{0}_{R_{\rm C}} V^\dagger B^\dagger SFBV \ket{x_l}_{R_{\rm S}}\ket{0}_{R_{\rm M}}\ket{0}_{R_{\rm C}}.
    \end{align}
    From the definitions of $V_\mathcal{A}$ and $V_\mathcal{B}$, we see that this is also the $(k,l)$ entry of $V_\mathcal{A}^\dagger V_\mathcal{B}$. 
    
\end{proof}

\begin{lemma}
    On $\mathcal{A}+\mathcal{B}$
    \begin{equation}
        (SF)^\dagger = SF.
    \end{equation}
    \label{lem:SF}
\end{lemma}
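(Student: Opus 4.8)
The plan is to prove the stronger fact that $SF$ is a unitary involution, from which $(SF)^\dagger=(SF)^{-1}=SF$ follows immediately, so the restriction to $\mathcal{A}+\mathcal{B}$ is automatic. Since $S$ is a controlled sign flip and $F$ a controlled adder, each is unitary, hence so is $SF$, and it remains only to check $(SF)^2=I$. First I would compute $(SF)^2$ on a computational basis state $\ket{x}_{R_{\rm S}}\ket{\Delta x}_{R_{\rm M}}\ket{\phi}_{R_{\rm C}}$. When $\ket{\phi}_{R_{\rm C}}=\ket{0}_{R_{\rm C}}$ both $S$ and $F$ act trivially, so there is nothing to check. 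When $\ket{\phi}_{R_{\rm C}}=\ket{1}_{R_{\rm C}}$ I would track
\begin{align}
\ket{x}_{R_{\rm S}}\ket{\Delta x}_{R_{\rm M}}\ket{1}_{R_{\rm C}}
&\xrightarrow{F}\ket{x+\Delta x}_{R_{\rm S}}\ket{\Delta x}_{R_{\rm M}}\ket{1}_{R_{\rm C}} \nonumber \\
&\xrightarrow{S}\ket{x+\Delta x}_{R_{\rm S}}\ket{-\Delta x}_{R_{\rm M}}\ket{1}_{R_{\rm C}} \nonumber \\
&\xrightarrow{F}\ket{x}_{R_{\rm S}}\ket{-\Delta x}_{R_{\rm M}}\ket{1}_{R_{\rm C}} \nonumber \\
&\xrightarrow{S}\ket{x}_{R_{\rm S}}\ket{\Delta x}_{R_{\rm M}}\ket{1}_{R_{\rm C}},
\end{align}
where the two $F$-steps add $\Delta x$ and then $-\Delta x$ (because $S$ negates the content of $R_{\rm M}$ in between) and the two $S$-steps cancel. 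The original basis state is returned, so $(SF)^2=I$ and hence $(SF)^\dagger=SF$.

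Because this identity holds on every basis state it holds on all of $\mathcal{A}+\mathcal{B}$, whose generators $\ket{x}_{R_{\rm S}}\ket{0}_{R_{\rm M}}\ket{0}_{R_{\rm C}}$ (spanning $\mathcal{A}$) and $V^\dagger B^\dagger SFBV\ket{x}_{R_{\rm S}}\ket{0}_{R_{\rm M}}\ket{0}_{R_{\rm C}}$ (spanning $\mathcal{B}$) are built from exactly such states. As an alternative to the involution route, one can argue purely algebraically: $S^2=I$ gives $S^\dagger=S$, and the same move-reversal computation gives the operator relation $SF^\dagger S=F$, equivalently $F^\dagger S=SF$; combining these yields $(SF)^\dagger=F^\dagger S^\dagger=F^\dagger S=SF$.

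The only delicate point, more a matter of interpretation than a genuine obstacle, is the meaning of $F$ on the intermediate state $\ket{x+\Delta x}_{R_{\rm S}}\ket{-\Delta x}_{R_{\rm M}}\ket{1}_{R_{\rm C}}$: the main text specifies the action of $F$ only for legal moves $\Delta x\in\Delta\Omega_x$. I would invoke that $F$ is realized as a reversible adder acting as a clean group operation on the (ideal, rounding-free) representation of $\mathbb{R}^d$ assumed in Sec.~\ref{sec:notation}, so that the second $F$-step is literally the inverse move of the first via $(x+\Delta x)+(-\Delta x)=x$; this is automatic for the legal-move states that actually occur in $\mathcal{A}+\mathcal{B}$, making the computation valid at every arrow.
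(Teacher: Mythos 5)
Your proposal is correct and follows essentially the same route as the paper: both verify $(SF)^2=I$ by the trivial action on control $\ket{0}$ and by tracking the identical four-step sequence $F,S,F,S$ on $\ket{x}_{R_{\rm S}}\ket{\Delta x}_{R_{\rm M}}\ket{1}_{R_{\rm C}}$, then conclude $(SF)^\dagger=SF$ from unitarity. Your added remarks (the purely algebraic variant and the caveat about $F$ on the intermediate state $\ket{x+\Delta x}_{R_{\rm S}}\ket{-\Delta x}_{R_{\rm M}}\ket{1}_{R_{\rm C}}$) are sensible but do not change the substance.
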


\begin{proof}
    For any $x\in\Omega$ and $\Delta x\in\Omega_x$,
    \begin{equation}
        SFSF \ket{x}_{R_{\rm S}}\ket{\Delta x}_{R_{\rm M}}\ket{0}_{R_{\rm C}} = \ket{x}_{R_{\rm S}}\ket{\Delta x}_{R_{\rm M}}\ket{0}_{R_{\rm C}},
    \end{equation}
    since both $S$ and $F$ are not activated if the state on $R_{\rm C}$ is $\ket{0}_{R_{\rm C}}$.
    Besides, applying $F$, $S$, $F$ and $S$ to $\ket{x}_{R_{\rm S}}\ket{\Delta x}_{R_{\rm M}}\ket{1}_{R_{\rm C}}$ in this order transforms the state as
    \begin{align}
        &\ket{x}_{R_{\rm S}}\ket{\Delta x}_{R_{\rm M}}\ket{1}_{R_{\rm C}} \nonumber \\
        &\xrightarrow{F} \ket{x+\Delta x}_{R_{\rm S}}\ket{\Delta x}_{R_{\rm M}}\ket{1}_{R_{\rm C}} \nonumber \\
        &\xrightarrow{S} \ket{x+\Delta x}_{R_{\rm S}}\ket{-\Delta x}_{R_{\rm M}}\ket{1}_{R_{\rm C}} \nonumber \\
        &\xrightarrow{F} \ket{x}_{R_{\rm S}}\ket{-\Delta x}_{R_{\rm M}}\ket{1}_{R_{\rm C}} \nonumber \\
        &\xrightarrow{S} \ket{x}_{R_{\rm S}}\ket{\Delta x}_{R_{\rm M}}\ket{1}_{R_{\rm C}}.
    \end{align}
    Thus, we see that $SFSF$ acts as $I$ for any state in the form of $\ket{x}_{R_{\rm S}}\ket{\Delta x}_{R_{\rm M}}\ket{\phi}_{R_{\rm C}}$, where $\ket{\phi}_{R_{\rm C}}$ is any state on $R_{\rm C}$.
    This means that $SFSF=I$ and thus $(SF)^\dagger=SF$ on $\mathcal{A}+\mathcal{B}$.
\end{proof}

Then, combining these lemmas, we can prove Theorem \ref{th:phasegapOurs}.

\begin{proof}[Proof of Theorem \ref{th:phasegapOurs}]

    Combining Theorem \ref{th:Szegedy} with Lemmas \ref{lem:SameEigW} and \ref{lem:AdagB}, we see that $R_\mathcal{A}R_\mathcal{B}$ has eigenvalue 1 with multiplicity 1 and that any other eigenvalue is -1 or in the form of $\exp(\pm 2i \theta_l)$, where $\theta_1,\theta_2,...\in\left(0,\frac{\pi}{2}\right)$ are written as $\theta_l=\arccos |\lambda_l|$ with $\{\lambda_l\}$, the eigenvalues of $W$ with modulus less than 1.

    On the other hand, $R_\mathcal{A}$ and $R_\mathcal{B}$ are now
    \begin{equation}
        R_\mathcal{A} = 2\sum_{x\in\Omega} \ket{x}_{R_{\rm S}}\ket{0}_{R_{\rm M}}\ket{0}_{R_{\rm C}}\bra{x}_{R_{\rm S}}\bra{0}_{R_{\rm M}}\bra{0}_{R_{\rm C}}-I
    \end{equation}
    and
    \begin{widetext}
    \begin{align}
        R_\mathcal{B}
        &= 2\sum_{x\in\Omega} V^\dagger B^\dagger SFBV\ket{x}_{R_{\rm S}}\ket{0}_{R_{\rm M}}\ket{0}_{R_{\rm C}}\bra{x}_{R_{\rm S}}\bra{0}_{R_{\rm M}}\bra{0}_{R_{\rm C}} (V^\dagger B^\dagger SFBV)^\dagger - I \nonumber \\
        &= V^\dagger B^\dagger SFBV R_\mathcal{A} V^\dagger B^\dagger SFBV, \label{eq:RB}
    \end{align}
    \end{widetext}
    respectively.
    In Eq. (\ref{eq:RB}), we used Lemma \ref{lem:SF}.
    Note that, on $\mathcal{A}+\mathcal{B}$, $R$ in (\ref{eq:R}) acts as $R_\mathcal{A}$ and $V^\dagger B^\dagger SFBV R V^\dagger B^\dagger SFBV$ acts as $R_\mathcal{B}$.
    Thus, $RV^\dagger B^\dagger SFBV R V^\dagger B^\dagger SFBV=U^2$ acts as $R_\mathcal{A}R_\mathcal{B}$.
    Therefore, on $\mathcal{A}+\mathcal{B}$, the eigenvalues of $U$ are equal to the square root of those of $R_\mathcal{A}R_\mathcal{B}$.
    They include 1 or -1 with multiplicity 1, and, because of Lemma \ref{lem:Peig}, it is in fact 1 with the corresponding eigenstate $\ket{P}$.
    Any other eigenvalue of $U_W$ is $e^{\pm i\theta_l}$, $-e^{\pm i\theta_l}=e^{i(\pm\theta_l+\pi)}$, or $\pm i=e^{\pm \frac{\pi}{2}i}$, whose phase has modulus no less than
    \begin{equation}
    \arccos \left(\max \{|\lambda_l|\}\right)=\arccos (1-\Delta)
    \end{equation}
    in any case.

\end{proof}

\section{Details of Quantum Monte Carlo integration \label{sec:QMCIDetail}}

First, let us recall Theorem 5 in \cite{Miyamoto2022}.

\begin{theorem}[Theorem 5 in \cite{Miyamoto2022}, modified]
    Let $M\in\mathbb{N}$ and $\mathcal{X}$ be a set of $M$ real numbers, $X_0,...,X_{M-1}$, whose mean is $\mu:=\frac{1}{M}\sum_{i=0}^{M-1}X_i$ and sample variance satisfies $\frac{1}{M}\sum_{i=0}^{M-1}X_i^2 - \mu^2 \le \sigma^2$ with some $\sigma\in\mathbb{R}_+$.
    Suppose that we are given an access to a unitary operator $O_X$ that acts as Eq. (\ref{eq:OX}) for any $i\in[M]_0$.
    Let $\epsilon\in(0,4\sigma)$ and $\delta\in(0,1)$.
    Then, we have an access to a unitary operator $\tilde{O}_{\mathcal{X},\epsilon,\delta,\sigma}^{\rm mean}$ that acts on a system of two registers as 
    \begin{equation}
        \tilde{O}_{\mathcal{X},\epsilon,\delta,\sigma}^{\rm mean}\ket{0}\ket{0}=\sum_{y\in \mathcal{Y}} \alpha_y\ket{\phi_y}\ket{y}. \label{eq:ObarX}
    \end{equation}
    Here, $\mathcal{Y}$ is a finite set of real numbers that includes a subset $\tilde{\mathcal{Y}}$ consisting of $\epsilon$-approximations of $\mu$, $\{\alpha_y\}_{y\in\mathcal{Y}}$ are complex numbers satisfying $\sum_{\tilde{y}\in\tilde{\mathcal{Y}}}|\alpha_{\tilde{y}}|^2\ge 1-\delta$, and $\{\ket{\phi_y}\}_{y\in\mathcal{Y}}$ are states on the first register.
    In $\tilde{O}_{\mathcal{X},\epsilon,\delta,\sigma}^{\rm mean}$, queries to $O_{\mathcal{X}}$ are made, whose number is of order (\ref{eq:compQMCI}).
    $\tilde{O}_{\mathcal{X},\epsilon,\delta,\sigma}^{\rm mean}$ uses qubits whose number is of order (\ref{eq:qubitQMCI}).
    \label{th:QMCIorg}
\end{theorem}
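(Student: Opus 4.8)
The plan is to build $\tilde{O}_{\mathcal{X},\epsilon,\delta,\sigma}^{\rm mean}$ from quantum amplitude estimation (QAE) \cite{brassard2002quantum}, following the bounded-variance mean-estimation strategy of \cite{Montanaro2015}, but executed coherently so that no measurement is made and the estimate is left in a quantum register as in Eq. (\ref{eq:ObarX}). First I would introduce the uniform state-preparation oracle $O_{\rm unif}\ket{0}=\frac{1}{\sqrt{M}}\sum_{i\in[M]_0}\ket{i}$ (a layer of Hadamards when $M$ is a power of two, with suitable padding otherwise) and compose it with $O_{\mathcal{X}}$ from Eq. (\ref{eq:OX}), so that $X_i$ appears in a data register with uniform amplitude; estimating $\mu$ is then exactly estimating the expectation of the loaded value under the uniform distribution on $[M]_0$.

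Second, I would reduce the bounded-variance problem to a collection of bounded-integrand problems, to which Theorem \ref{th:QMCIBound} (QAE) applies directly. Writing $X_i=X_i^+-X_i^-$ with $X_i^\pm\ge 0$ and treating the two signs separately, I would partition the range of $X_i^\pm/(C\sigma)$ into $O(\log(\sigma/\epsilon))$ dyadic scales $[2^{-k-1},2^{-k})$ and, on each scale, apply the bounded routine to the indicator-weighted value rescaled into $[0,1]$. The variance bound $\frac{1}{M}\sum_i X_i^2-\mu^2\le\sigma^2$ controls the tail mass beyond the smallest scale, so that truncation contributes only $O(\epsilon)$, and it simultaneously bounds the contribution of each dyadic level; this is the standard mechanism by which the total cost scales as $\sigma/\epsilon$ rather than with a variance-independent factor.

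Third, and this is the crucial modification relative to the classical-output version of \cite{Miyamoto2022}, I would keep the whole computation coherent. QAE is phase estimation on a Grover iterate, so before any measurement it already produces a superposition $\sum_y\alpha_y\ket{\phi_y}\ket{y}$ in which $\ket{y}$ records an estimate and the amplitude concentrates on $y$ close to the true amplitude. Combining the per-level, per-sign estimates with reversible arithmetic leaves a single register holding an estimate of $\mu$ in superposition, which is precisely the form of Eq. (\ref{eq:ObarX}). To reach $\sum_{\tilde y\in\tilde{\mathcal{Y}}}|\alpha_{\tilde y}|^2\ge 1-\delta$ I would boost each QAE by coherent median amplification: run $R=O(\log(1/\delta))$ independent instances into parallel registers and compute their median into the output register by an arithmetic circuit, so that a Chernoff bound on the number of inaccurate parallel estimates shows the branch on which the median is an $\epsilon$-approximation carries squared amplitude at least $1-\delta$.

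Finally I would assemble the error budget and the resource counts. Choosing the per-level QAE accuracy as $\Theta(\epsilon)$ after the $C\sigma$ rescaling, the truncation, dyadic-rounding, and QAE errors each stay within $O(\epsilon)$; QAE at this accuracy costs $O(\sigma/\epsilon)$ queries, the dyadic decomposition and the per-level interval handling of \cite{Montanaro2015} supply the polylogarithmic factors $\log^{3/2}(\sigma/\epsilon)\log\log(\sigma/\epsilon)$, and the coherent median boosting supplies the $\log(1/\delta)$ factor, together yielding the query count of order (\ref{eq:compQMCI}). The qubit count of order (\ref{eq:qubitQMCI}) follows from the $\log M$ qubits for $\ket{i}$, value and estimate registers of size $O(\log(\sigma/\epsilon))$, and the $O(\log(\sigma/\epsilon)\log(1/\delta))$ ancillas consumed by the parallel QAE phase registers. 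I expect the main obstacle to be the coherent bookkeeping in the median step: unlike the classical algorithm I cannot measure and take a majority, so I must verify that computing the median as a reversible function of the parallel phase registers really routes at least $1-\delta$ of the squared amplitude onto $\epsilon$-approximations without spoiling the concentration established for the individual QAE instances.
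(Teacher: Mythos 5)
Note first that the paper does not prove this statement: Theorem \ref{th:QMCIorg} is imported, up to notation, from Theorem 5 of \cite{Miyamoto2022}, and the appendix only uses it as a black box, post-composing $\tilde{O}_{\mathcal{X},\epsilon,\delta,\sigma}^{\rm mean}$ with a rounding step to prove Theorem \ref{th:QMCI}. So your attempt must be judged against the cited source, and in outline you do reconstruct its route correctly: Montanaro's bounded-variance mean estimator \cite{Montanaro2015} executed coherently (phase estimation on the Grover iterate with no measurement), with the failure probability driven down by reversibly computing the median of $R=O(\log(1/\delta))$ parallel instances. Your closing worry about the median step is in fact unproblematic: the joint state is a tensor product of the $R$ single-instance output states, the median register's content is a deterministic function of the parallel registers' computational-basis values, so the squared amplitude on branches where the median is an $\epsilon$-approximation equals exactly the classical probability that the median of $R$ i.i.d.\ draws is accurate, and the Chernoff bound applies verbatim.

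There is, however, one genuine gap and one inverted piece of bookkeeping. The gap: you apply the dyadic decomposition to $X_i^{\pm}/(C\sigma)$ directly, omitting the rough-centering phase of Montanaro's Algorithm 4. The hypothesis bounds only the sample variance, so $\frac{1}{M}\sum_i X_i^2 = \sigma^2+\mu^2$ can be arbitrarily large relative to $\sigma^2$; since the top-end truncation error is controlled by the second moment via $\mathbb{E}[B\,\mathbf{1}_{B\ge 2^t}]\le \mathbb{E}[B^2]\,2^{-t}$, your plan as written picks up a cost dependence on $\mu/\sigma$ that is absent from Eq.\ (\ref{eq:compQMCI}). The fix is to first obtain a rough estimate $m$ with $|m-\mu|=O(\sigma)$ and run the $\ell^2$-bounded routine on $(X_i-m)/(C\sigma)$, and in the coherent setting this is itself nontrivial: you cannot measure a sample to get $m$, so the rough estimate must be produced by a coherent low-accuracy QAE into an ancilla register and then propagated as a control of the reversible arithmetic defining the level functions of the later stages. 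The bookkeeping issue: your scales $[2^{-k-1},2^{-k})$ descend with ``truncation beyond the smallest scale controlled by the variance,'' which is backwards — after centering and rescaling, the scales ascend, $[2^{k-1},2^k)$ up to $2^t=\Theta(\sigma/\epsilon)$ in rescaled units, the truncation that needs the second-moment bound is at the \emph{top}, and the bottom scale needs no variance argument since values below $\Theta(\epsilon/\sigma)$ contribute $O(\epsilon)$ trivially (this is also where the hypothesis $\epsilon<4\sigma$ enters, keeping the rescaled target accuracy below $1$). With these repairs your error budget and the resource counts matching Eqs.\ (\ref{eq:compQMCI}) and (\ref{eq:qubitQMCI}) go through as you describe.
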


We construct $O_{\mathcal{X},\epsilon,\delta,\sigma}^{\rm mean}$ in Theorem \ref{th:QMCI} using $\tilde{O}_{\mathcal{X},\epsilon,\delta,\sigma}^{\rm mean}$ and adding some operations afterward.

\begin{proof}[Proof of Theorem \ref{th:QMCI}]

Any $x\in\mathbb{R}$ can be written as $x=\sum_{i=-\infty}^{i=\infty}x_i2^i$, where $\{x_i\}_{i\in\mathbb{Z}}$ are binaries (0 or 1), that is, the binary representation of $x$.
We call $x_i$ the $i$th bit of $x$.
For $x\in\mathbb{R}$ and $a\in\mathbb{Z}$, we define
\begin{equation}
    \left\lfloor x \right\rfloor_{a} := \sum_{i=a}^{\infty} 2^ix_i.
\end{equation}
Namely, $\left\lfloor x \right\rfloor_{a}$ is the rounding of $x$ at the $a$th bit. 
We denote by $O^{\rm round}_a$ the operator for rounding: $O^{\rm round}_a\ket{x}\ket{0}=\ket{x}\Ket{\left\lfloor x \right\rfloor_{a}}$.
This is simply implemented by copying the higher-order qubits in the first register to the second register with CNOT gates.

Then, we can perform the following operation:
\begin{align}
& \ket{0}\ket{0}\ket{0} \nonumber \\
\rightarrow & \sum_{y\in \mathcal{Y}} \alpha_y\ket{\phi_y}\ket{y}\ket{0} \nonumber \\
\rightarrow & \sum_{y\in \mathcal{Y}} \alpha_y\ket{\phi_y}\ket{y}\ket{\left\lfloor y \right\rfloor_{b}}=:\ket{\Phi}.
\end{align}
Here, we use $\tilde{O}_{\mathcal{X},\epsilon^\prime,\delta^\prime,\sigma}^{\rm mean}$ at the first arrow and $O^{\rm round}_b$ at the second arrow, where $b=\left\lfloor \log_2\epsilon \right\rfloor$, $\epsilon^\prime:=2^{b-1}$ and $\delta^\prime:=\delta/4$.
$\mathcal{Y}$ is a finite set of real numbers that has a subset $\tilde{\mathcal{Y}}$ consisting of $\epsilon^\prime$-approximations of $\mu$ and the complex numbers $\{\alpha_y\}_{y\in\mathcal{Y}}$ satisfies $\sum_{\tilde{y}\in\tilde{\mathcal{Y}}}|\alpha_{\tilde{y}}|^2\ge 1-\delta^\prime$.
Note that, for any $\epsilon^\prime$-approximation $y$ of $\mu$, $y_b,y_{b+1},...$ and $\mu_b,\mu_{b+1},...$ are equal respectively, since any discrepancy in the $b$th or higher-order bits means that $|y-\mu|\ge 2^b >\epsilon^\prime$.
Thus, we have
\begin{equation}
    \ket{\Phi} = \left(\sum_{y\in \tilde{\mathcal{Y}}} \alpha_y\ket{\phi_y}\ket{y}\right)\otimes\ket{\left\lfloor \mu \right\rfloor_{b}} + \sum_{y\in \mathcal{Y} \setminus \tilde{\mathcal{Y}}} \alpha_y\ket{\phi_y}\ket{y}\otimes\ket{\left\lfloor y \right\rfloor_{b}}.
\end{equation}
Therefore, letting $\Ket{\tilde{\Phi}}:=\left(\sum_{y\in \mathcal{Y}} \alpha_y\ket{\phi_y}\ket{y}\right)\otimes\ket{\left\lfloor \mu \right\rfloor_{b}}$, we have
\begin{align}
    & \left\| \Ket{\tilde{\Phi}} -\ket{\Phi}\right\|\nonumber \\
    = & \left\| \sum_{y\in \mathcal{Y} \setminus \tilde{\mathcal{Y}}} \alpha_y\ket{\phi_y}\ket{y}\ket{\left\lfloor \mu \right\rfloor_{b}}- \sum_{y\in \mathcal{Y} \setminus \tilde{\mathcal{Y}}} \alpha_y\ket{\phi_y}\ket{y}\otimes\ket{\left\lfloor y \right\rfloor_{b}}\right\|\nonumber \\
    \le& \left\| \sum_{y\in \mathcal{Y} \setminus \tilde{\mathcal{Y}}} \alpha_y\ket{\phi_y}\ket{y}\ket{\left\lfloor \mu \right\rfloor_{b}}\right\|+\left\| \sum_{y\in \mathcal{Y} \setminus \tilde{\mathcal{Y}}} \alpha_y\ket{\phi_y}\ket{y}\otimes\ket{\left\lfloor y \right\rfloor_{b}}\right\|\nonumber \\
    \le & 2\sqrt{\sum_{y\in \mathcal{Y} \setminus \tilde{\mathcal{Y}}} |\alpha_y|^2} \nonumber \\
    \le & 2\sqrt{\delta^\prime} \nonumber \\
    =& \sqrt{\delta}.
\end{align}
This means that we can write
\begin{eqnarray}
    \ket{\Phi} &=& \Ket{\tilde{\Phi}}+\gamma \Ket{\tilde{\psi}} \nonumber \\
    &=& \left(\sum_{y\in \mathcal{Y}} \alpha_y\ket{\phi_y}\ket{y}\right)\otimes\ket{\left\lfloor \mu \right\rfloor_{b}}+\gamma \Ket{\tilde{\psi}},
\end{eqnarray}
where $\gamma:=\left\|\ket{\Phi}- \Ket{\tilde{\Phi}}\right\|\le\sqrt{\delta}$ and $\Ket{\tilde{\psi}}:=\frac{1}{\gamma}\left(\ket{\Phi}-\Ket{\tilde{\Phi}}\right)$.
Then, performing $\left(\tilde{O}_{\mathcal{X},\epsilon^\prime,\delta^\prime,\sigma}^{\rm mean}\right)^\dagger$ on the first and second register transforms $\ket{\Phi}$ to
\begin{equation}
    \ket{0}\ket{0}\ket{\left\lfloor \mu \right\rfloor_{b}}+\gamma \ket{\psi},
\end{equation}
where $\ket{\psi}$ is a state on the entire system.
Since $\left\lfloor \mu \right\rfloor_{b}$ is an $\epsilon$-approximation of $\mu$, we see that the above operation yields a state in the form of Eq. (\ref{eq:OmeanX}), with the first and second registers together seen as $R_1$ and the third one seen as $R_2$.

The number of queries to $O_\mathcal{X}$ in the entire process is that in $\tilde{O}_{\mathcal{X},\epsilon^\prime,\delta^\prime,\sigma}^{\rm mean}$ and $\left(\tilde{O}_{\mathcal{X},\epsilon^\prime,\delta^\prime,\sigma}^{\rm mean}\right)^\dagger$, that is, the double of that in $\tilde{O}_{\mathcal{X},\epsilon^\prime,\delta^\prime,\sigma}^{\rm mean}$, which is of order (\ref{eq:compQMCI}) since $\epsilon^\prime=\Theta(\epsilon)$ and $\delta^\prime=\Theta(\delta)$.
The number of qubits used in the entire process is also same as $\tilde{O}_{\mathcal{X},\epsilon^\prime,\delta^\prime,\sigma}^{\rm mean}$, and is of order (\ref{eq:qubitQMCI}).

\end{proof}

\section{Proof of Lemma \ref{lem:SpGapPert} \label{sec:proofLemSpGapPert}}

We use the following theorem.

\begin{theorem}[Theorem IIIa in \cite{BauerFike}]
    Let $n\in\mathbb{N}$ and $B,\tilde{B}\in\mathbb{C}^{n\times n}$.
    Assume that $B$ is diagonalizable and denote by $Q$ the matrix that diagonalizes $B$: $Q^{-1}BQ$ is diagonal.
    Denote by $\kappa$ the condition number of $Q$.
    Then, for each eigenvalue $\lambda$ of $B$, there exists an eigenvalue $\tilde{\lambda}$ of $\tilde{B}$ that satisfies
    \begin{equation}
        \left|\tilde{\lambda}-\lambda\right|\le \kappa \left\|B-\tilde{B}\right\|.
    \end{equation}
    \label{th:eigenPert}
\end{theorem}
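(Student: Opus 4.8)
The plan is to reduce everything to a single resolvent estimate and then upgrade it to the direction stated (from an eigenvalue of $B$ to a nearby eigenvalue of $\tilde B$) by a continuity argument. Set $E:=\tilde B-B$ and $r:=\kappa\|E\|$. First I would prove the core exclusion estimate: for any $z\notin\mathrm{spec}(B)$ one has $\|(B-zI)^{-1}\|\le \kappa/\mathrm{dist}(z,\mathrm{spec}(B))$, which follows by writing $B=QDQ^{-1}$, so that $(B-zI)^{-1}=Q(D-zI)^{-1}Q^{-1}$ with $\|(D-zI)^{-1}\|=\max_i|\lambda_i-z|^{-1}=1/\mathrm{dist}(z,\mathrm{spec}(B))$ because $D-zI$ is diagonal, together with $\|Q\|\,\|Q^{-1}\|=\kappa$. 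Then for $z\notin\mathrm{spec}(B)$ I would factor $\tilde B-zI=(B-zI)\bigl(I+(B-zI)^{-1}E\bigr)$; the right factor is invertible whenever $\|(B-zI)^{-1}E\|<1$, which is guaranteed once $\mathrm{dist}(z,\mathrm{spec}(B))>r$. Hence every $z$ with $\mathrm{dist}(z,\mathrm{spec}(B))>r$ lies outside $\mathrm{spec}(\tilde B)$; equivalently, every eigenvalue of $\tilde B$ lies in $\Gamma:=\bigcup_i\bar D(\lambda_i,r)$, the union of closed disks of radius $r$ about the eigenvalues $\lambda_i$ of $B$.

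The estimate above is the easy direction; the statement as worded is the converse, that each $\lambda_i$ has an eigenvalue of $\tilde B$ within distance $r$. For this I would use the homotopy $B(t):=B+tE$, $t\in[0,1]$, with $B(0)=B$, $B(1)=\tilde B$ and $\|tE\|\le\|E\|$, so the exclusion estimate applies uniformly: for every $t$ the eigenvalues of $B(t)$ lie in $\Gamma$. Since the roots of the characteristic polynomial depend continuously on $t$, the eigenvalues trace continuous curves in $\mathbb{C}$ that never leave $\Gamma$. I would then count eigenvalues per connected component of $\Gamma$ via the argument principle, i.e. $\frac{1}{2\pi i}\oint_{\partial C}\mathrm{tr}\,(zI-B(t))^{-1}\,dz$ for a component $C$; this integer is continuous in $t$ because no eigenvalue of $B(t)$ ever crosses $\partial C\subset\partial\Gamma$, hence it is constant. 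Letting $t$ run from $0$ to $1$ shows each component of $\Gamma$ contains equally many eigenvalues (with multiplicity) of $B$ and of $\tilde B$; in particular the component containing $\lambda_i$ contains at least one eigenvalue $\tilde\lambda$ of $\tilde B$.

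The hard part will be the final localization. The component count places $\tilde\lambda$ in the connected component of $\Gamma$ containing $\lambda_i$, whereas the theorem demands $|\tilde\lambda-\lambda_i|\le r$, i.e. membership in the single disk $\bar D(\lambda_i,r)$. When the disks $\{\bar D(\lambda_j,r)\}$ are pairwise disjoint the component is exactly $\bar D(\lambda_i,r)$ and the bound is immediate; the genuine difficulty is the overlapping case, where one must rule out that the continuous branch emanating from $\lambda_i$ migrates into a neighbouring disk. I would also handle the boundary-crossing subtlety at $t=1$ (where eigenvalues of $\tilde B$ may sit on $\partial\Gamma$) by inflating the radius to $r+\eta$, running the count, and letting $\eta\downarrow0$. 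In the generic situation relevant here, where the disk about $\lambda_i$ is separated from those about the remaining eigenvalues, this isolation makes the single-disk bound $|\tilde\lambda-\lambda_i|\le r$ follow at once from the preserved component count.
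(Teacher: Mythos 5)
Your first paragraph is correct and is essentially the classical Bauer--Fike argument itself: the resolvent bound $\|(B-zI)^{-1}\|\le\kappa/\mathrm{dist}(z,\mathrm{spec}(B))$, obtained from $B=QDQ^{-1}$, together with the factorization $\tilde B-zI=(B-zI)\bigl(I+(B-zI)^{-1}E\bigr)$, shows that every eigenvalue of $\tilde B$ lies in $\Gamma=\bigcup_i\bar D(\lambda_i,\kappa\|E\|)$. Your homotopy step is also sound and gives the standard refinement that each connected component of $\Gamma$ contains equally many eigenvalues of $B(t)$ for all $t$, counted with multiplicity (the $\eta$-inflation correctly handles eigenvalues on $\partial\Gamma$ at $t=1$). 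Note that the paper offers no proof of this statement---it is imported from the cited reference---so the only comparison available is with the classical proof, which your exclusion estimate reproduces.

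The localization gap you flag in the overlapping case, however, is not a removable technicality: the statement as worded---for each eigenvalue $\lambda$ of the \emph{diagonalizable} matrix $B$ there is an eigenvalue $\tilde\lambda$ of $\tilde B$ with $|\tilde\lambda-\lambda|\le\kappa\|B-\tilde B\|$---is false in general. Take
\begin{equation*}
B=\begin{pmatrix}0&0\\0&1\end{pmatrix},\qquad
E=\begin{pmatrix}1/2&1/2\\-1/2&1/2\end{pmatrix},\qquad
\tilde B=B+E=\begin{pmatrix}1/2&1/2\\-1/2&3/2\end{pmatrix}.
\end{equation*}
Here $Q=I$ so $\kappa=1$; $E^{\mathsf T}E=\tfrac12 I$ so $\|E\|=1/\sqrt{2}$; and $\det(\tilde B-zI)=(z-1)^2$, so $\mathrm{spec}(\tilde B)=\{1\}$. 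The eigenvalue $\lambda=0$ of $B$ therefore has no eigenvalue of $\tilde B$ within $\kappa\|E\|=1/\sqrt{2}<1$. The disks $\bar D(0,1/\sqrt{2})$ and $\bar D(1,1/\sqrt{2})$ overlap into one component, and both eigenvalues of $\tilde B$ sit in the far disk---precisely the migration your counting argument cannot exclude, and here it actually happens, consistently with the component count (two eigenvalues of each matrix in the single component). There is also no ``generic separation'' to appeal to: the theorem statement imposes none, and in the paper's application (Lemma \ref{lem:SpGapPert}) the eigenvalues of the transition matrix may be arbitrarily clustered. The resolution is that Theorem IIIa of the cited reference has the quantifiers in the opposite order: for each eigenvalue $\tilde\lambda$ of the \emph{perturbed} matrix $\tilde B$ there exists an eigenvalue $\lambda$ of $B$ with $|\tilde\lambda-\lambda|\le\kappa\|B-\tilde B\|$. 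That is exactly the exclusion estimate your first paragraph proves, and it is also the direction the paper's spectral-gap perturbation argument actually requires (one must bound the second eigenvalue of the perturbed chain by proximity to the spectrum of $W$); the paper's restatement silently reverses the roles. Your counting argument does salvage the reversed direction in the special case where $\bar D(\lambda,\kappa\|E\|)$ is disjoint from the disks about the other eigenvalues of $B$, but only in that case.
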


We also use the following lemma.

\begin{lemma}
    Define $A$ as Eq. (\ref{eq:accRatio}) with $P$ in the form of Eq. (\ref{eq:PeL}), and $\tilde{A}$ as Eq. (\ref{eq:Atil}).
    Then, if $\epsilon:=\max_{x\in\Omega} |\tilde{L}(x)-L(x)|\le \frac{1}{4}$, 
    \begin{equation}
    |\tilde{A}(x,y)-A(x,y)|\le 8\epsilon \label{eq:Aerr}
    \end{equation}
    holds for any $x,y\in\Omega$.
    \label{lem:Aerr}
\end{lemma}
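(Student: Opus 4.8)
The plan is to exploit the shared structure of $A$ and $\tilde{A}$: both are the truncation $\min\{1,\cdot\}$ applied to a ratio that differs only through the factor $e^{-L}$ versus $e^{-\tilde{L}}$. First I would write $A(x,y)=\min\{1,r\}$ and $\tilde{A}(x,y)=\min\{1,\tilde{r}\}$ with
\[
r=\frac{P_0(y)e^{-L(y)}T(y,x)}{P_0(x)e^{-L(x)}T(x,y)},\qquad \tilde{r}=\frac{P_0(y)e^{-\tilde{L}(y)}T(y,x)}{P_0(x)e^{-\tilde{L}(x)}T(x,y)},
\]
noting that the normalization of $P$ cancels in the ratio. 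A direct computation then gives $\tilde{r}=r\,e^{t}$ with $t:=\bigl(L(y)-\tilde{L}(y)\bigr)-\bigl(L(x)-\tilde{L}(x)\bigr)$, and the hypothesis $\max_{x\in\Omega}|\tilde{L}(x)-L(x)|\le\epsilon$ yields $|t|\le 2\epsilon$ by the triangle inequality.

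The core estimate is a bound on $|\min\{1,r\}-\min\{1,\tilde{r}\}|$ in terms of $|t|$ alone. Since $r,\tilde{r}>0$ differ by the multiplicative factor $e^{t}$, I would assume without loss of generality $r\le\tilde{r}$ (i.e.\ $t\ge 0$; otherwise swap the roles of $r$ and $\tilde{r}$, which costs nothing because $|t|$ is symmetric) and split into three cases according to where $1$ sits relative to $r$ and $\tilde{r}$. If $r\ge 1$, both truncations equal $1$ and the difference vanishes; if $\tilde{r}\le 1$, the difference is $\tilde{r}-r=r(e^{t}-1)\le e^{t}-1$ since $r\le 1$; and if $r<1\le\tilde{r}$, the difference is $1-r$, which is at most $1-e^{-t}\le e^{t}-1$ because $r=\tilde{r}e^{-t}\ge e^{-t}$. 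Hence in every case $|\min\{1,r\}-\min\{1,\tilde{r}\}|\le e^{|t|}-1\le e^{2\epsilon}-1$.

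It then remains to convert $e^{2\epsilon}-1$ into the stated linear bound under $\epsilon\le\tfrac14$. For this I would use $e^{2\epsilon}-1=2\epsilon\sum_{k\ge 0}\frac{(2\epsilon)^{k}}{(k+1)!}\le\frac{2\epsilon}{1-2\epsilon}$, valid for $2\epsilon<1$; with $\epsilon\le\tfrac14$ this gives $e^{2\epsilon}-1\le 4\epsilon\le 8\epsilon$, which establishes Eq.~(\ref{eq:Aerr}).

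I do not anticipate a serious obstacle: the only point demanding care is the mixed case $r<1\le\tilde{r}$, where one must not naively bound $|\tilde{r}-r|$ (which may be large) but instead use $1-r\le 1-e^{-t}$ together with the elementary inequality $1-e^{-s}\le e^{s}-1$ for $s\ge 0$ (equivalent to $(e^{s/2}-e^{-s/2})^2\ge 0$). Everything else is a routine case analysis and a standard exponential estimate.
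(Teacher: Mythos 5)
Your proof is correct, and it takes a somewhat different route from the paper's. The paper splits on the size of the ratio $r=\frac{P(y)T(y,x)}{P(x)T(x,y)}$: when $r\le 2$ it bounds $|\tilde r - r| = r\,|e^{t}-1| \le 2\cdot 2|t| \le 8\epsilon$ using the elementary inequality $|e^{a}-1|\le 2|a|$ for $|a|\le 1$ and then invokes the $1$-Lipschitz property of $\min\{1,\cdot\}$, while for $r>2$ it shows $\tilde r \ge r e^{-2\epsilon} \ge 1$ so that both acceptance ratios equal $1$. You instead split on where the truncation threshold $1$ sits relative to $r$ and $\tilde r$, which lets you avoid ever bounding $|\tilde r - r|$ for large $r$ and yields the clean intermediate bound $|\min\{1,r\}-\min\{1,\tilde r\}|\le e^{|t|}-1$ depending on $t$ alone; your handling of the mixed case via $1-r\le 1-e^{-t}\le e^{t}-1$ is the one step with no counterpart in the paper, and it is correct. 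The payoff is a slightly sharper constant ($e^{2\epsilon}-1\le 4\epsilon$ versus the paper's $8\epsilon$), at the cost of a three-way case analysis in place of the paper's two-way one; since the lemma only claims $8\epsilon$, both arguments establish the statement.
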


\begin{proof}

We consider the following two cases.

\ \\

\noindent (i) for $x,y\in\Omega$ such that $\frac{P(y)T(y,x)}{P(x)T(x,y)}\le 2$

Note that
\begin{equation}
    |e^a-1| \le 2|a|
\end{equation}
holds for any $a\in[-1,1]$.
Since 
\begin{eqnarray}
    &&\left|\left(\tilde{L}(x)-L(x)\right)-\left(\tilde{L}(y)-L(y)\right)\right| \nonumber \\ 
    &\le& \left|\tilde{L}(x)-L(x)\right|+\left|\tilde{L}(y)-L(y)\right| \nonumber \\
    &\le& 2\epsilon \nonumber \\
    &\le& 1, \label{eq:LyLxDiffErr}
\end{eqnarray}
we have
\begin{eqnarray}
    &&\left|\frac{P_0(y)e^{-\tilde{L}(y)}T(y,x)}{P_0(x)e^{-\tilde{L}(x)}T(x,y)}-\frac{P(y)T(y,x)}{P(x)T(x,y)}\right| \nonumber \\
    &=& \left|\frac{P(y)T(y,x)}{P(x)T(x,y)}e^{\left(\tilde{L}(x)-L(x)\right)-\left(\tilde{L}(y)-L(y)\right)}-\frac{P(y)T(y,x)}{P(x)T(x,y)}\right| \nonumber \\
    &\le& 2 \left|e^{\left(\tilde{L}(x)-L(x)\right)-\left(\tilde{L}(y)-L(y)\right)}-1\right| \nonumber \\
    &\le& 8\epsilon.
\end{eqnarray}
Since $\min\{1,\cdot\}$ is a $1$-Lipschitz function on $\mathbb{R}$, we obtain Eq. (\ref{eq:Aerr}).

\ \\

\noindent \noindent (ii) for $x,y\in\Omega$ such that $\frac{P(y)T(y,x)}{P(x)T(x,y)}> 2$

Because of Eq. (\ref{eq:LyLxDiffErr}),
\begin{equation}
    e^{\left(\tilde{L}(y)-L(y)\right)-\left(\tilde{L}(x)-L(x)\right)} \ge e^{-2\epsilon} \ge e^{-\frac{1}{2}} \ge \frac{1}{2} \label{eq:temp}
\end{equation}
holds, which means
\begin{align}
    \frac{P_0(y)e^{-\tilde{L}(y)}T(y,x)}{P_0(x)e^{-\tilde{L}(x)}T(x,y)}&=\frac{P(y)T(y,x)}{P(x)T(x,y)}e^{\left(\tilde{L}(x)-L(x)\right)-\left(\tilde{L}(y)-L(y)\right)}\nonumber \\
    &\ge 1
\end{align}
and thus $\tilde{A}(x,y)=1=A(x,y)$.

\ \\

Thus, in both cases, Eq. (\ref{eq:Aerr}) holds.
\end{proof}

Then, Lemma \ref{lem:SpGapPert} is proven as follows.

\begin{proof}[Proof of Lemma \ref{lem:SpGapPert}]

    Theorem \ref{th:eigenPert} implies that
    \begin{equation}
        \tilde{\Delta} \ge \Delta - \kappa \|\delta W\|, \label{eq:SpGapdelW}
    \end{equation}
    where $\delta W:=\tilde{W}-W$ and $W$ (resp. $\tilde{W}$) is the transition matrix of $\mathcal{C}_L$ (resp. $\mathcal{C}_{\tilde{L}}$).

    Then, let us bound $\|\delta W\|$.
    To do this, we use a well-known inequality \cite[Corollary 2.3.2]{golub2013matrix}:
    \begin{equation}
        \|\delta W\|\le\sqrt{\|\delta W\|_1\|\delta W\|_\infty}. \label{eq:2normBound}
    \end{equation}
    We also have
    \begin{align}
        \|\delta W\|_\infty 
        &= \max_{x\in\Omega} \sum_{y\in\Omega} |\delta W_{x,y}| \nonumber \\
        &= \max_{x\in\Omega} \left(|\delta W_{x,x}| + \sum_{y\in\Omega\setminus\{x\}} |\delta W_{x,y}|\right)\nonumber \\
        &=\max_{x\in\Omega} \left(\left|\sum_{y\in\Omega\setminus\{x\}} T(x,y)\left(A(x,y)-\tilde{A}(x,y)\right)\right|+ \right. \nonumber \\
        & \qquad \left.\sum_{y\in\Omega\setminus\{x\}} \left|T(x,y)\left(\tilde{A}(x,y)-A(x,y)\right)\right|\right) \nonumber \\
        &\le 2\max_{x\in\Omega} \sum_{y\in\Omega\setminus\{x\}}T(x,y)\left|\tilde{A}(x,y)-A(x,y)\right| \nonumber \\
        &\le 16\epsilon \max_{x\in\Omega}\sum_{y\in\Omega\setminus\{x\}}T(x,y) \nonumber \\
        &\le 16\epsilon, \label{eq:maxNormBound}
    \end{align}
    where we used Lemma \ref{lem:Aerr} at the second inequality.
    Similarly, we have
    \begin{align}
        \|\delta W\|_1 
        &= \max_{y\in\Omega} \sum_{x\in\Omega} |\delta W_{x,y}| \nonumber \\
        &\le 16\epsilon \max_{y\in\Omega}\sum_{x\in\Omega\setminus\{y\}}T(x,y) \nonumber \\
        &\le 16\epsilon \max_{y\in\Omega}\sum_{x\in\Omega}T(x,y), \label{eq:1normBound}
    \end{align}
    Combining Eqs. (\ref{eq:2normBound}), (\ref{eq:maxNormBound}) and (\ref{eq:1normBound}) with Eq. (\ref{eq:SpGapdelW}), we obtain Eq. (\ref{eq:SpGapPert}).
\end{proof}

\section{Proof of Lemma \ref{lem:TVvsLerr} \label{sec:proofLemTVvsLerr}}

\begin{proof}[Proof of Lemma \ref{lem:TVvsLerr}]

If $\epsilon:=\max_{x\in\Omega} |\tilde{L}(x)-L(x)|\le \frac{1}{4}$, Eq. (\ref{eq:Aerr}) holds for any $x,y\in\Omega$ because of Lemma \ref{lem:Aerr}.
Combining this with Eq. (\ref{eq:AppARDistDiffMH}), we obtain Eq. (\ref{eq:diffPtilP}).

If $\epsilon>\frac{1}{4}$, Eq. (\ref{eq:diffPtilP}) holds trivially since the RHS is larger than 1 and the LHS is not larger than 1 by definition.
    
\end{proof}

\bibliographystyle{apsrev4-1}
\bibliography{reference}

\end{document}